\documentclass[review,1p,times,authoryear]{elsarticle}

\usepackage{amsmath}
\usepackage{amsthm}
\usepackage{amssymb}
\usepackage{amsfonts}

\usepackage{mathdots}

\usepackage{mathptmx}
 \usepackage{hyperref}

\usepackage[capitalize]{cleveref}

 \usepackage{mathtools}

 \usepackage{algorithm}
 \usepackage{algorithmic}

 \usepackage{url}
 \usepackage{enumitem}

 \usepackage{todonotes}
 
\usepackage{xspace}

\usepackage[makeroom]{cancel}

\renewcommand{\dim}{\mathtt{dim}}
\renewcommand{\ker}{\mathtt{Ker}}

\newcommand{\egcd}{\textsc{eGCD}\xspace}

\newcommand{\rk}{\mathtt{rk}}
\newcommand{\costMul}{\mathtt{M}}

\newcommand{\field}{\mathbb{K}}
\newcommand{\K}{\mathbb{K}}
\newcommand{\C}{\mathbb{C}}
\newcommand{\R}{\mathbb{R}}
\newcommand{\closurefield}{\mathbb{\overline{\field}}}
\newcommand{\transpose}{\mathsf{T}}

\newcommand{\N}{\mathbb{N}\xspace}

\newcommand{\Uch}{\mathbb{U}}

\newcommand{\ZZ}{\mathbb{Z}\xspace}
\newcommand{\QQ}{\mathbb{Q}\xspace}
\newcommand{\RR}{\mathbb{R}\xspace}
\newcommand{\CC}{\mathbb{C}\xspace}

\newcommand{\PP}{\mathbb{P}\xspace}

\newcommand{\PC}{\PP^1(\closurefield)\xspace}
\newcommand{\PK}{\PP^1(\K)\xspace}

\newcommand{\OB}{O_B\xspace}
\newcommand{\sOB}{\widetilde{O}_B\xspace}
\newcommand{\sO}{\widetilde{O}\xspace}

\let\*\cdot

\newtheorem{theorem}{Theorem}
\newtheorem{lemma}[theorem]{Lemma}
\newtheorem{corollary}[theorem]{Corollary}

\newtheorem{proposition}[theorem]{Proposition}
\newtheorem{propdef}[theorem]{Proposition-Definition}
\newtheorem{definition}[theorem]{Definition}

\newtheorem{remark}[theorem]{Remark}

\crefname{equation}{equation}{equations}
\crefname{section}{section}{sections}
\crefname{lemma}{lemma}{lemmata}
\crefname{proposition}{proposition}{propositions} 
\crefname{propdef}{proposition-Definition}{definitions} 
\crefname{remark}{remark}{remarks}
\crefname{enumi}{step}{steps}
\crefname{theorem}{theorem}{theorems}
\crefname{algorithm}{algorithm}{algorithms}
\crefname{figure}{figure}{figures}
\crefname{definition}{definition}{definitions}
\crefname{observation}{observation}{observations}
\crefname{corollary}{corollary}{corollaries}
\crefname{appendix}{appendix}{appendices}

\newcommand{\generators}[1]{\langle #1 \rangle}

\usepackage{subcaption}

\usepackage{caption}

 \newfont{\subsecit}{ptmbi8t at 11pt}  %

\begin{document}

\begin{frontmatter}

  \title{A nearly optimal algorithm to decompose binary forms}

  \author{Mat\'{i}as R. Bender}
  \address{Sorbonne Universit\'e, \textsc{CNRS}, \textsc{INRIA},
    Laboratoire d'Informatique de Paris~6, \textsc{LIP6},
    \'Equipe \textsc{PolSys},
    4 place Jussieu, F-75005, Paris, France}
  \ead{matias.bender@inria.fr}
  \ead[http://www-polsys.lip6.fr/~bender/]{http://www-polsys.lip6.fr/~bender/}

  \author{Jean-Charles Faug\`ere}
  \address{Sorbonne Universit\'e, \textsc{CNRS}, \textsc{INRIA},
    Laboratoire d'Informatique de Paris~6, \textsc{LIP6},
    \'Equipe \textsc{PolSys},
    4 place Jussieu, F-75005, Paris, France}
  \ead{jean-charles.faugere@inria.fr}
  \ead[http://www-polsys.lip6.fr/~jcf/]{http://www-polsys.lip6.fr/~jcf/}

  \author{Ludovic Perret}
  \address{Sorbonne Universit\'e, \textsc{CNRS}, \textsc{INRIA},
    Laboratoire d'Informatique de Paris~6, \textsc{LIP6},
    \'Equipe \textsc{PolSys},
    4 place Jussieu, F-75005, Paris, France}
  \ead{ludovic.perret@lip6.fr}
  \ead[http://www-polsys.lip6.fr/~perret/]{http://www-polsys.lip6.fr/~perret/}
  
  \author{Elias Tsigaridas}
  \address{Sorbonne Universit\'e, \textsc{CNRS}, \textsc{INRIA},
    Laboratoire d'Informatique de Paris~6, \textsc{LIP6},
    \'Equipe \textsc{PolSys},
    4 place Jussieu, F-75005, Paris, France}
  \ead{elias.tsigaridas@inria.fr}
  \ead[http://www-polsys.lip6.fr/~elias/]{http://www-polsys.lip6.fr/~elias/}

  \begin{abstract}
    Symmetric tensor decomposition is an important problem with
    applications in several areas, for example signal processing,
    statistics, data analysis and computational neuroscience.
    It is equivalent to Waring's problem for homogeneous
    polynomials, that is to write a homogeneous polynomial in $n$
    variables of degree $D$ as a sum of $D$-th powers of linear forms,
    using the minimal number of summands.
    This minimal number is called the \emph{rank} of the
    polynomial/tensor.
    We focus on decomposing binary forms, a problem that corresponds
    to the decomposition of symmetric tensors of dimension $2$ and
    order $D$, that is, symmetric tensors of order $D$ over the vector
    space $\K^2$.
    Under this formulation, the problem finds its roots in invariant
    theory where the decompositions are related to canonical forms.
    
    We introduce a \emph{superfast} algorithm that exploits results
    from \emph{structured linear algebra}. It achieves a \emph{softly
      linear} arithmetic complexity bound. To the best of our
    knowledge, the previously known algorithms have at least quadratic
    complexity bounds.
    Our algorithm computes a symbolic decomposition in
    $O(\costMul(D) \log(D))$ arithmetic operations, where
    $\costMul(D)$ is the complexity of multiplying two polynomials of
    degree $D$.
    It is deterministic when the decomposition is unique. When the
    decomposition is not unique, it is randomized. We also present a
    Monte Carlo variant as well as a modification to make it a Las
    Vegas one.

    From the symbolic decomposition, we approximate the terms of the
    decomposition with an error of $2^{-\varepsilon}$, in
    $O\big(D \log^2(D)\big(\log^2(D) + \log(\varepsilon)\big)\big)$
    arithmetic operations. We use results from
    \citet{kaltofen1989improved} to bound the size of the
    representation of the coefficients involved in the decomposition
    and we bound the algebraic degree of the problem by
    $\min(rank, D-rank+1)$. We show that this bound can be tight.
    When the input polynomial has integer coefficients, our algorithm
    performs, up to poly-logarithmic factors,
    $\sOB(D \ell + D^4 + D^3 \tau)$ bit operations, where $\tau$ is
    the maximum bitsize of the coefficients and $2^{-\ell}$ is the
    relative error of the terms in the decomposition.
  \end{abstract}

  \begin{keyword}
    Decomposition of binary forms;
    Tensor decomposition;
    Symmetric tensor; 
    Symmetric tensor rank;
    Polynomial Waring's problem;
    Waring rank;
    Hankel matrix;
    Algebraic degree;
    Canonical form;
  \end{keyword}
  
\end{frontmatter}

\section{Introduction}
\label{sec:intro}

The problem of decomposing a symmetric tensor consists in writing it
as the sum of rank-$1$ symmetric tensors, using the minimal number of
summands. This minimal number is known as the rank of the symmetric
tensor\footnote{Some authors, e.g.,~\cite{comon2008symmetric}, refer to
  this number as the symmetric rank of the tensor.}.  The symmetric
tensors of rank-$1$ correspond to, roughly speaking, the $D$-th
outer-product of a vector. The decomposition of symmetric tensor is a
common problem which appears in divers areas such as signal
processing, statistics, data mining, computational neuroscience,
computer vision, psychometrics, chemometrics, among others. For a
modern introduction to the theory of tensor, their
decompositions and applications we refer to
e.g.,~\cite{comon2014tensors,landsberg2012tensors}.

There is an equivalence between decomposing symmetric tensors and
solving Waring's problem for homogeneous polynomials,
e.g.,~\cite{comon2008symmetric,helmke1992waring}. Given a symmetric
tensor of dimension $n$ and order $D$, that is a symmetric tensor of
order $D$ over the vector space $\K^n$, we can construct a homogeneous
polynomial in $n$ variables of degree $D$.
We can identify the symmetric tensors of rank-$1$ with the $D$-th
power of linear forms. 
Hence, to decompose a symmetric tensor of order $D$ is equivalent to
write the corresponding polynomial as a sum of $D$-th powers of linear
forms using the minimal numbers of summands. This minimal number is
the rank of the polynomial/tensor.

Under this formulation, symmetric tensor decomposition dates back to
the origin of modern (linear) algebra as a part of Invariant
Theory. In this setting, the decomposition of generic symmetric
tensors corresponds to canonical forms
\citep{sylvester1851canonicalforms,sylvester1851remarkablediscovery,
  gundelfinger1887theorie}. Together with the theory of apolarity,
this problem was of great importance because the decompositions
provide information about the behavior of the polynomials under linear
change of variables \citep{kung1984invariant}.

\paragraph{Binary Form Decomposition} We study the
decomposition of symmetric tensors of order $D$ and dimension $2$.
In terms of homogeneous polynomials, we consider a binary form
\begin{equation}
  \label{eq:bf}
  f(x, y) := \sum\nolimits_{i = 0}^D \tbinom{D}{i} a_i x^i y^{D-i},
\end{equation}
where $a_i \in \field \subset \CC$ and $\field$ is some field of
characteristic zero. We want to compute a decomposition
\begin{equation}
  \label{eq:decompnoLambda}
  f(x, y) = \sum\nolimits_{j = 1}^r (\alpha_j x + \beta_j y)^D ,
\end{equation}
where
$\alpha_1, \dots, \alpha_r, \beta_1, \dots, \beta_r \in
\closurefield$, with $\closurefield$ being the algebraic closure of
$\field$, and $r$ is minimal.
We say that a decomposition \textit{unique} if, for all the
decompositions, the set of points
$\{(\alpha_j, \beta_j) : 1 \leq j \leq r \}\subset
\PC$ is unique, where
$\PC$ is the projective space of
$\closurefield$ \citep{reznick2013length}.

\paragraph{Previous work} The decomposition of binary forms,
\Cref{eq:decompnoLambda}, has been studied extensively for
$\field = \mathbb{C}$. More than one century ago
\cite{sylvester1851remarkablediscovery,sylvester1851canonicalforms}
described the necessary and sufficient conditions for a decomposition
to exist, see \Cref{sec:sylvester_s_theorem}. He related the
decompositions to the kernel of Hankel matrices.  For a modern
approach of this topic, we refer to
\cite{kung1984invariant,kung1990canonical,
  reznick2013length,iarrobino1999power}. Sylvester's work was extended
to a more general kind of polynomial decompositions that we do not
consider in this work, e.g.,
\cite{gundelfinger1887theorie,reznick1996homogeneous,iarrobino1999power}.

Sylvester's results lead to an
algorithm (\Cref{alg:common}) to decompose binary forms
\citep[see][Sec.~3.4.3]{comon1996decomposition}.  In the case where the binary
form is of odd degree, then we can compute the decompositions using
Berlekamp-Massey algorithm
\cite[see][]{dur1989}. When the decomposition is unique, the
Catalecticant algorithm, which also works for symmetric tensors of
bigger dimension \citep{iarrobino1999power,oeding2013eigenvectors},
improves Sylvester's work. For an arbitrary binary form,
\cite{helmke1992waring} presented a randomized algorithm based on
Pad\'e approximants and continued fractions, in which he also
characterized the different possible decompositions. Unfortunately,
all these algorithms have complexity at least quadratic in the degree
of the binary form.

Besides the problem of computing the decomposition(s) many authors
considered the subproblems of computing the rank and deciding whether
there exists a unique decomposition, e.g.,
\cite{sylvester1851remarkablediscovery,sylvester1851canonicalforms,
  helmke1992waring,comas2011rank,bernardi2011computing}.  For example,
\cite{sylvester1851remarkablediscovery,sylvester1851canonicalforms}
considered generic binary forms, that is binary forms with
coefficients belonging to a dense algebraic open subset of
$\closurefield^{D+1}$ \cite[Section~3]{comon1996decomposition}, and
proved that when the degree is $2k$ or $2k+1$, for $k \in \N$, the
rank is $k+1$ and that the minimal decomposition is unique only when
the degree is odd. In the non-generic case,
\cite{helmke1992waring,comas2011rank,iarrobino1999power}, among
others, proved that the rank is related to the kernel of a Hankel
matrix and that the decomposition of a binary form of degree $2k$ or
$2k-1$ and rank $r$, is unique if and only if $r \leq k$.  With
respect to the problem of computing the rank there are different
variants of algorithms, e.g.,
\cite{comas2011rank,comon2008symmetric,bernardi2011computing}.  Even
though there are not explicit complexity estimates, by exploiting
recent superfast algorithms for Hankel matrices
\citep{pan2001structured}, we can deduce a nearly-optimal arithmetic
complexity bound for computing the rank using the approach of
\cite{comas2011rank}.

For the general problem of symmetric tensor decomposition, Sylvester's
work was successfully extended to cases in which the decomposition is
unique, e.g., \cite{brachat2010symmetric,oeding2013eigenvectors}.
There are also homotopy techniques to solve the general problem, e.g.,
to decompose generic symmetric tensors \citep{bernardi_tensor_2017} or,
when there is a finite number of possible decompositions and we know
at least one of them, to compute all the other decompositions
\citep{hauenstein_homotopy_2016}.
There are no complexity estimations for these methods.
Besides tensor decomposition, there are other related decompositions
for binary forms and univariate polynomials that we do not consider in
this work, e.g.,
\cite{reznick1996homogeneous,reznick2013some,giesbrecht2003algorithms,giesbrecht2010interpolation,Garcia-Marco:2017:RAS:3087604.3087605}.

\paragraph{Formulation of the problem}
Instead of decomposing the binary form as in \Cref{eq:decompnoLambda},
we compute $\lambda_1 \dots\lambda_r$, $\alpha_1 \dots \alpha_r$,
$\beta_1 \dots \beta_r \in \closurefield$, where $r$ is minimal,
such that,
\begin{align}
  \label{eq:bfDecomp}
  f(x, y) = \sum\nolimits_{j = 1}^r \lambda_j (\alpha_j x + \beta_j y)^D.
\end{align}
\noindent Since every $\lambda_j$ belongs to the algebraic closure of
the field $\K$, the problems are equivalent. This approach allows us to
control the algebraic degree
\citep{bajaj1988algebraic,nie2010algebraic} of the parameters
$\lambda_j$, $\alpha_j$, and $\beta_j$ in the decomposition
(Section~\ref{sec:algebraicDegree}).

Note that if the field is not algebraically closed and we force the
parameters to belong to the base field, that is
$\lambda_j, \alpha_j, \beta_j \in \field$, the decompositions induced
by \Cref{eq:decompnoLambda} and \Cref{eq:bfDecomp} are not equivalent.
We do not consider the latter case and we refer to
\cite{helmke1992waring,reznick1992sums,comon2008symmetric,boij2011monomials,blekherman2015typical}
for $\field = \RR$, and to
\cite{reznick1996homogeneous,reznick2013length,reznick2017binary}
for 
$\field \subset \CC$.

\paragraph{Main results}
We extend Sylvester's algorithm to achieve a nearly-optimal
complexity bound in the degree of the binary form.
By considering structural properties of the Hankel matrices, we
restrict the possible values for the rank of the decompositions and
we identify when the decomposition is unique.
We build upon \cite{helmke1992waring} and we use the Extended
Euclidean Algorithm to deduce a better complexity estimate  than what
was previously known.
Similarly to Sylvester's algorithm, our algorithm decomposes
successfully any binary form, without making any assumptions on the
input.

First, we focus on {\em symbolic decompositions}, that is a
representation of the decomposition as a sum of a rational function
evaluated at the roots of a univariate polynomial
(Definition~\ref{def:symbolicDecomp}).
We introduce an algorithm to compute a symbolic decomposition of a
binary form of degree $D$ in $O(\costMul(D) \log(D))$, where
$\costMul(D)$ is the arithmetic complexity of polynomial
multiplication (\Cref{thm:complexityAlgorithm}).  When the
decomposition is unique, the algorithm is deterministic and this is a
worst case bound.
When the decomposition is not unique, our algorithm makes some
random choices to fulfill certain genericity assumptions; thus the
algorithm is a Monte Carlo one.  However, we can verify if the
genericity assumptions hold within the same complexity bound, that
is $O(\costMul(D) \log(D))$, and hence we can also deduce a Las
Vegas variant of the  algorithm.

Following the standard terminology used in structured matrices
\citep{pan2001structured}, our algorithm is \textit{superfast} as its
arithmetic complexity matches the size of the input up to
poly-logarithmic factors.
The symbolic decomposition allow us to approximate the terms in a
decomposition, with a relative error of $2^{-\varepsilon}$, in
$O\big(D \log^2(D)\big(\log^2(D) + \log(\epsilon)\big)\big)$
arithmetic operations \citep{PanOpt02,numMethodsRoots}.
Moreover, we can deduce for free the rank and the border rank of the
tensor, see \cite[Section~1]{comas2011rank}.

Using results from \citet{kaltofen1989improved}, we bound the
algebraic degree of the decompositions by $\min(\text{rank}, D-\text{rank}+1)$
(\Cref{cor:dg-irr-Q}). Moreover, we prove lower bounds for the
algebraic degree of the decomposition and we show that in certain
cases the bound is tight (\Cref{sec:lower-bounds-algebr}). For
polynomials with integer coefficients, we bound the bit complexity, up
to poly-logarithmic factors, by $\sOB( D \ell + D^4 + D^3 \tau)$,
where $\tau$ is the maximum bitsize of the coefficients of the input
binary form and $2^{-\ell}$ is the error of the terms in the
decomposition (\Cref{thm:bf-decomp-bit-compl}).  This Boolean worst
case bound holds independently of whether the decomposition is unique
or not.

This work is an extension of the conference paper
\citep{bender2016superfast}.
With respect to the conference version, our main algorithm
(\Cref{alg:algorithmDecomp}) omits an initial linear change of
coordinates as we now rely on fewer genericity assumptions. In
contrast with our previous algorithm, we present an algorithm which is
deterministic when the decomposition is unique
(\Cref{thm:complexityAlgorithm}). When the decomposition is not
unique, our algorithm is still randomized but we present bounds for
the number of bad choices that it could make
(\Cref{thm:badValuesToInterpolateQ}).
With respect to the algebraic degree of the problem, we study the
tightness of the bounds that we proposed in the conference paper
(\Cref{theo:boundOfAlgDegree}). We introduce explicit lower bounds
showing that our bounds can be tight
(\Cref{sec:lower-bounds-algebr}).

\paragraph*{Organization of the paper}

First, we introduce the notation. In \Cref{sec:preliminaries}, we
present the preliminaries that we need for introducing our
algorithm. We present Sylvester's algorithm
(\Cref{sec:sylvester_s_theorem}), we recall some properties of the
structure of the kernel of the Hankel matrices
(\Cref{sec:kernelHankel}), we analyze its relation to rational
reconstructions of series/polynomials
(\Cref{sec:rational-reconstruction}), and we present the Extended
Euclidean Algorithm (\Cref{sec:great-comm-divis}).
Later, in \Cref{sec:algorithm}, we present our main algorithm to
decompose binary forms (\Cref{alg:algorithmDecomp}) and its proof of
correctness (\Cref{sub:correctness_of_algorithm_decomposition}).
This algorithm uses \Cref{alg:gettingVandW} to compute the kernel of
a family of Hankel matrices, which we consider in
\Cref{sub:computingVandW}.
Finally, in \Cref{sec:complexity}, we study the algebraic degree of
the problem (\Cref{sec:algebraicDegree}), we present tight bounds
for it (\Cref{sec:lower-bounds-algebr}), and we analyze the
arithmetic (\Cref{sec:arithmetic_complexity}) and bit complexity of
\Cref{alg:algorithmDecomp} (\Cref{sec:bit}).

\paragraph*{Notation}    
We denote by $O$, respectively $\OB$, the arithmetic, respectively bit, complexity
and we use $\sO$, respectively $\sOB$, to ignore (poly-)logarithmic
factors. $\costMul(n)$ is the arithmetic complexity of
multiplying two polynomial of degree $n$.
Let $\field$ be a zero characteristic subfield of $\CC$, and
$\closurefield$ its algebraic closure.
If $v = (v_0,\dots,v_n)^\transpose$ then $P_v = P_{(v_0,\dots,v_n)} :=
\sum_{i=0}^n v_i x^i y^{n-i}$.
Given a binary form $f(x,y)$, we denote by $f(x)$ the univariate
polynomial $f(x) := f(x,1)$. By $f'(x)$ we denote the derivative of $f(x)$
with respect to $x$.
For a matrix $M$, $\rk(M)$ is its rank and $\ker(M)$ its kernel.

\section{Preliminaries} 
\label{sec:preliminaries} 

  \subsection{An algorithm based on Sylvester's theorem}
  \label{sec:sylvester_s_theorem}

  Sylvester's theorem (\Cref{theo:Silv1851}) relates the minimal
  decomposition of a binary form to the kernel of a Hankel matrix.
  Moreover, it implies an (incremental) algorithm for computing the
  minimal decomposition. The version that we present in
  \Cref{alg:common} comes from  \citet[Section~3.2]{comon1996decomposition}.
    
    \begin{definition}
      \label{def:hankelFamily}
      
      Given a vector $a = (a_0, \dots, a_D)^\transpose$, we denote by
       $\{H_a^k\}_{1 \leq k \leq D}$ the family of Hankel
      matrices indexed by $k$, where
      $H_a^k \in {\field}^{(D-k+1) \times (k+1)}$ and
      \begin{align}
        \label{eq:defHankel}
        H_a^k := \begin{pmatrix}
          a_0    & a_1       & \cdots & a_{k-1} & a_{k}   \\
          a_1    & a_2       & \cdots & a_{k}   & a_{k+1} \\
          \vdots  & \vdots    & \ddots & \vdots  & \vdots  \\
          a_{D-k-1} & a_{D-k}   & \cdots & a_{D-2} & a_{D-1} \\
          a_{D-k}   & a_{D-k+1} & \cdots & a_{D-1} & a_{D}
        \end{pmatrix} .
     \end{align}
     \end{definition}

     \noindent
     We may omit the index $a$ in $H_a^k$ when it is clear from the
     context.

    \begin{theorem}[Sylvester, 1851]
    \label{theo:Silv1851}
      Let $f(x,y) = \sum_{i = 0}^D \binom{D}{i} a_i x^i y^{D-i}$ with
      $a_i \in \field \subseteq \CC$. Also, consider a non-zero vector 
      $c = (c_0, \dots, c_r)^\transpose \in \field^{r+1}$, such
      that the polynomial      
      $$P_{c} = \sum\nolimits_{i=0}^{r} c_i \, x^i \, y^{r-i}
      = \prod\nolimits_{j = 1}^r (\beta_j x - \alpha_j y)$$ is
      square-free and $\alpha_j, \beta_j \in \closurefield$, for all
      $1 \leq j \leq r$.
      Then, there are $\lambda_1, \dots \lambda_r \in \closurefield$
      such that we can write $f(x, y)$ as 
      $$f(x,y)=\sum_{j=1}^r \lambda_j (\alpha_j x + \beta_j y)^D,$$
      if and only if
      $(c_0, \dots, c_r)^\transpose \in \ker(H^r_{a})$.
    \end{theorem}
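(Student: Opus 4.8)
The plan is to reformulate the condition $c\in\ker(H^r_a)$ as the vanishing of an \emph{apolarity} (differential) operator applied to $f$, and then to show that the sought $D$-th powers form a basis of the kernel of that operator. Work over $\closurefield$, write $\ell_j := \alpha_j x + \beta_j y$ and $g_j := \ell_j^D$, let $\closurefield[x,y]_e$ denote the space of homogeneous polynomials of degree $e$, and introduce the linear map
\[
  \Delta \;:=\; P_c(\partial_x,\partial_y) \;=\; \sum_{i=0}^{r} c_i\,\partial_x^{\,i}\,\partial_y^{\,r-i}\;:\;\closurefield[x,y]_D \longrightarrow \closurefield[x,y]_{D-r}.
\]
\emph{Step 1: $c\in\ker(H^r_a)\iff\Delta f = 0$.} Applying $\partial_x^{\,i}\partial_y^{\,r-i}$ to each term $\binom{D}{k}a_k\,x^ky^{D-k}$ of $f$ and collecting the coefficient of $x^my^{D-r-m}$, a routine computation gives
\[
  \Delta f \;=\; \frac{D!}{(D-r)!}\sum_{m=0}^{D-r}\binom{D-r}{m}\Big(\sum_{i=0}^{r}c_i\,a_{m+i}\Big)x^m y^{D-r-m},
\]
and $\sum_{i=0}^{r}c_i a_{m+i}$ is precisely the $m$-th entry of $H^r_a c$. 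Since the binomial factors are nonzero, $\Delta f = 0$ if and only if $H^r_a c = 0$.

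\emph{Step 2: $W := \ker\Delta$ has dimension $r$.} The formula of Step 1 exhibits $\Delta$, via the coordinate isomorphisms $\closurefield[x,y]_D\cong\closurefield^{D+1}$ (sending $f$ to $(a_0,\dots,a_D)$) and $\closurefield[x,y]_{D-r}\cong\closurefield^{D-r+1}$, as the linear map $a\mapsto H^r_a c$; its matrix has $(m,k)$-entry $c_{k-m}$ when $0\le k-m\le r$ and $0$ otherwise, so each of its $D-r+1$ rows is a shift of the nonzero vector $c$ and the rows are linearly independent. Hence $\Delta$ is surjective and $\dim W = (D+1)-(D-r+1) = r$.

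\emph{Steps 3--4 and conclusion.} For any linear form $\alpha x+\beta y$ one has $\Delta\big((\alpha x+\beta y)^D\big) = \tfrac{D!}{(D-r)!}\,P_c(\alpha,\beta)\,(\alpha x+\beta y)^{D-r}$; since $P_c(\alpha_j,\beta_j)=\prod_{k=1}^{r}(\beta_k\alpha_j-\alpha_k\beta_j)$ and the factor $k=j$ vanishes, $\Delta g_j = 0$, i.e.\ $g_1,\dots,g_r\in W$. Square-freeness of $P_c$ means the points $[\alpha_j:\beta_j]\in\PC$ are pairwise distinct, so $g_1,\dots,g_r$ are $\closurefield$-linearly independent (powers of pairwise non-proportional linear forms are independent, by a Vandermonde argument on a suitable choice of $r$ coefficients), hence a basis of the $r$-dimensional space $W$. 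Therefore, if $f=\sum_j\lambda_j g_j$ then $f\in W$, so $\Delta f=0$ and, by Step 1, $c\in\ker(H^r_a)$; conversely, if $c\in\ker(H^r_a)$ then $\Delta f=0$, so $f\in W=\mathrm{span}_{\closurefield}\{g_1,\dots,g_r\}$ and $f=\sum_j\lambda_j g_j$ with the $\lambda_j\in\closurefield$ the (unique) coordinates of $f$ in this basis.

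The routine-but-delicate points I expect are the bookkeeping of the binomial normalization in Step 1 (so that $H^r_a c$ is isolated cleanly) and the full-rank/surjectivity claim in Step 2; the linear independence of powers of pairwise non-proportional linear forms is classical, and the remaining implications are immediate once these are in place.
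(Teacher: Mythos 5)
The paper itself does not prove this theorem; it defers to Reznick's ``On the length of binary forms'' (Theorem~2.1 and Corollary~2.2). Your proposal is a correct, self-contained proof via the apolarity pairing, which is precisely the classical route taken in that reference and in Iarrobino--Kanev and Kung--Rota. The computation in Step~1 is right: expanding $\partial_x^{\,i}\partial_y^{\,r-i}$ against $\binom{D}{m+i}a_{m+i}\,x^{m+i}y^{D-m-i}$, the numeric factor collapses to $\tfrac{D!}{m!\,(D-r-m)!}$ independently of $i$, so the coefficient of $x^m y^{D-r-m}$ in $\Delta f$ is $\tfrac{D!}{(D-r)!}\binom{D-r}{m}\,(H^r_a c)_m$, a nonzero multiple of the $m$-th entry of $H^r_a c$; hence $\Delta f=0\iff H^r_a c=0$. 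Step~2's surjectivity also holds as you claim: with $j$ the largest index such that $c_j\neq 0$, the $(D-r+1)\times(D-r+1)$ submatrix of $M$ on columns $j,\dots,j+D-r$ is triangular with diagonal entries $c_j$, so $M$ has full row rank and $\dim\ker\Delta=r$. The identity $\Delta\big((\alpha x+\beta y)^D\big)=\tfrac{D!}{(D-r)!}\,P_c(\alpha,\beta)\,(\alpha x+\beta y)^{D-r}$, together with $P_c(\alpha_j,\beta_j)=0$, places the $\ell_j^D$ in $\ker\Delta$; and the standard linear-independence fact you invoke (for $r\le D+1$ pairwise non-proportional linear forms, the matrix with entries $\binom{D}{k}\alpha_j^k\beta_j^{D-k}$ has rank $r$, via a scaled Vandermonde, with at most one point at infinity treated by deleting its row and the last column) gives a basis. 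Both implications then follow from $f=\sum_j\lambda_j\ell_j^D \iff f\in\ker\Delta$. No gap.
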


    For a proof of \Cref{theo:Silv1851} we refer to
    \citet[Theorem~2.1~\&~Corollary~2.2]{reznick2013length}.
    \begin{algorithm}[h]      
    \caption{\xspace{ \textsc{IncrDecomp} \cite[Figure~1]{comon1996decomposition}}} 
    \label{alg:common}
      \begin{enumerate}
          \item $r := 1$
          \item Get a random $c \in \ker(H^r)$
          \item If $P_c$ is not square-free, $r := r+1$ and GO TO $2$
          \item Write $P_c$ as $\prod_{j = 1}^r (\beta_j x - \alpha_j y)$
          \item Solve the transposed Vandermonde  system:
            \begin{align}
              {
              \label{eq:gettingLambdas}
              \begin{pmatrix}
                \beta_1^D & \cdots & \beta_r^D   \\
                \beta_1^{D-1}\alpha_1 & \cdots & \beta_r^{D-1}\alpha_r   \\
                \vdots & \ddots & \vdots  \\
                \alpha_1^D & \cdots & \alpha_r^D
              \end{pmatrix}
                                      \begin{pmatrix}
                                        \lambda_1 \\ \vdots \\ \lambda_r
                                      \end{pmatrix}
              = 
              \begin{pmatrix}
                a_0 \\ \vdots \\ a_D
              \end{pmatrix}
              }.
            \end{align}
            
          \item Return $\sum_{j=1}^{r} \lambda_j (\alpha_j x + \beta_j y)^D$
      \end{enumerate}
    \end{algorithm}
    \Cref{theo:Silv1851} implies \Cref{alg:common}. This algorithm
    will execute steps 2 and 3 as many times as the rank.
    At the $i$-th iteration it 
    computes the kernel of $H^i$. The dimension of this kernel is $\leq
    i$ and each vector in the kernel has $i+1$ coordinates. As the
    rank of the binary form can be as big as the degree of the binary
    form, a straightforward bound for the arithmetic complexity of
    \Cref{alg:common} is at least cubic in the degree.

    We can improve the complexity of \Cref{alg:common} by a factor of
    $D$ by noticing that the rank of the binary form is either
    $\rk(H^{\left\lceil\frac{D}{2}\right\rceil})$ or $D -
    \rk(H^{\left\lceil\frac{D}{2}\right\rceil}) + 2$
    \cite[Section~3]{comas2011rank} \linebreak \cite[Theorem~B]{helmke1992waring}.
    Another way to compute the rank is by using minors
    \cite[Algorithm~2]{bernardi2011computing}.

    The bottleneck of the previous approaches is that they have to
    compute the kernel of a Hankel matrix. However, even if we know
    that the rank of the binary form is $r$,  the dimension of the
    kernel of $H^{r}$ can still be as big as $O(D)$; the same bound holds
    for the length of the vectors in the kernel.  Hence, the
    complexity is lower bounded by $O(D^2)$.

    Our approach avoids the incremental construction. We exploit the
    structure of the kernel of the Hankel matrices and we prove that
    the rank has only two possible values
    (\Cref{lem:rankOfDecomposition}), see also
    \citep[Section~3]{comas2011rank},
    \citep[Theorem~B]{helmke1992waring}, or
    \citep{bernardi2011computing}. Moreover, we use a compact
    representation of the vectors in the kernel. We describe them as a
    combination of two polynomials of degree $O(D)$.
 
  \subsection{Kernel of the Hankel matrices}
  \label{sec:kernelHankel}

  The Hankel matrices are among the most studied structured matrices
  \citep{pan2001structured}. They are related to polynomial
  multiplication. We present results about the structure of their
  kernel. For details, we refer to
  \citet[Chapter~5]{heinig1984algebraic}.

  \begin{proposition}
    Matrix-vector multiplication of Hankel matrices is equivalent to
    polynomial multiplication.
    Given two binary forms $A := \sum_{i = 0}^D a_i x^i y^{D-i}$ and
    $U := \sum_{i = 0}^k u_i x^i y^{k-i}$, consider
    $R := \sum_{i = 0}^{D+k} r_i x^i y^{D + k-i} = A \cdot U$. If we
    choose the monomial basis $\{y^{D+k},\dots,x^{D+k}\}$, then the
    equality $A \cdot U = R$ is equivalent to
    \Cref{eq:polMultiplication}, where the central submatrix of the
    left matrix is $H^{k}_{(a_0,\dots,a_D)}$
    (\Cref{def:hankelFamily}).
  \begin{align}\label{eq:polMultiplication}
    \begin{pmatrix}
      &           &         &            & a_{0}       \\
      &           &         & a_0        & a_{1}       \\
      &           & \iddots & \iddots    & \vdots      \\
      & a_0       & \cdots & a_{k-2} & a_{k-1} \\ \hline
      a_0    & a_1       & \cdots & a_{k-1} & a_{k}   \\
      a_1    & a_2       & \cdots & a_{k}   & a_{k+1} \\
      \vdots & \vdots    & \iddots & \vdots  & \vdots  \\
      a_{D-k} & a_{D-k+1} & \cdots & a_{D-1} & a_{D} \\ \hline
      a_{D-k+1} & a_{D-k+2} & \cdots  & a_{D} &         \\
      \vdots  & \iddots  & \iddots &     &       \\
      a_{D-1}  &  a_D         &         &         &        \\
      a_D     &           &         &            &       \\
    \end{pmatrix} 
    \begin{pmatrix}
      u_k \\
      \vdots \\
      u_1 \\
      u_0 \\
    \end{pmatrix}
    =
    \begin{pmatrix}
      r_{0} \\
      r_{1} \\
      \vdots \\
      r_{k - 1} \\ \hline
      r_k \\
      r_{k+1} \\
      \vdots \\
      r_D \\ \hline
      r_{D+1} \\ 
      \vdots \\
      r_{D + k -1} \\
      r_{D + k }
    \end{pmatrix} .
  \end{align} 
\end{proposition}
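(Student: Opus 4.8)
The plan is to prove the displayed identity \Cref{eq:polMultiplication} — which is the precise meaning of the stated equivalence — by expanding the polynomial product $R = A\cdot U$, comparing coefficients monomial by monomial, and then recognising the central block of the left-hand matrix as the Hankel matrix of \Cref{def:hankelFamily}.

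First I would multiply out $A = \sum_{i=0}^D a_i x^i y^{D-i}$ and $U = \sum_{j=0}^k u_j x^j y^{k-j}$ and collect the coefficient of $x^m y^{D+k-m}$, which gives
\[
  r_m \;=\; \sum_{\substack{0\le i\le D,\ 0\le j\le k\\ i+j=m}} a_i\, u_j
  \;=\; \sum_{c=0}^{k} a_{m-k+c}\, u_{k-c}, \qquad 0\le m\le D+k,
\]
with the convention $a_\ell = 0$ whenever $\ell<0$ or $\ell>D$. The substitution $j = k-c$ is what matches the reversed ordering $(u_k,\dots,u_0)^\transpose$ of the unknown vector and the chosen monomial basis $\{y^{D+k},\dots,x^{D+k}\}$ of the target space. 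Letting $m$ run over $0,1,\dots,D+k$, this identity says exactly that $(r_0,\dots,r_{D+k})^\transpose$ is the product of the $(D+k+1)\times(k+1)$ matrix with $(m,c)$-entry $a_{m-k+c}$ by $(u_k,\dots,u_0)^\transpose$; writing this matrix out with the zero convention reproduces the left-hand matrix of \Cref{eq:polMultiplication}, including the upper- and lower-triangular zero patterns of its top and bottom blocks.

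It then remains to identify the central block — the rows coming from $r_k,r_{k+1},\dots,r_D$ — with $H^k_{(a_0,\dots,a_D)}$. Writing $m = k+\ell$ with $0\le \ell\le D-k$, the $(m,c)$-entry $a_{m-k+c}$ becomes $a_{\ell+c}$, and every index $\ell+c$ lies in $\{0,\dots,D\}$, so no truncation occurs. Hence this block has $(\ell,c)$-entry $a_{\ell+c}$ for $0\le\ell\le D-k$ and $0\le c\le k$; it is a $(D-k+1)\times(k+1)$ array constant along anti-diagonals with top-left entry $a_0$ and bottom-right entry $a_D$, which is precisely the matrix of \Cref{eq:defHankel}.

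I do not expect a genuine obstacle here: the argument is index bookkeeping throughout. The one point to keep straight is the two conventions — the reversal of the coordinates of $U$ in the unknown vector and the ordering of the monomial basis — since altering either would transpose or flip the matrix and would realise the product through a Toeplitz rather than a Hankel block. Once these are fixed as in the statement, matching entries against \Cref{def:hankelFamily} completes the proof.
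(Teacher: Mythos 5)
Your proof is correct, and it is exactly the argument one would expect: expand $A\cdot U$, read off $r_m = \sum_{c=0}^{k} a_{m-k+c}u_{k-c}$ under the zero convention, interpret this as a matrix--vector product against $(u_k,\dots,u_0)^\transpose$, and observe that the rows $m=k,\dots,D$ have $(\ell,c)$-entry $a_{\ell+c}$, which is precisely $H^k_a$ of \Cref{def:hankelFamily}. The paper itself states this proposition without an explicit proof, treating it as a standard fact about Hankel matrices and polynomial multiplication (cf.\ \citet[Chapter~5]{heinig1984algebraic}), so there is nothing to compare routes against; your bookkeeping, including the remark about how reversing the order of the $u_j$'s is what makes the central block Hankel rather than Toeplitz, supplies exactly the verification the paper leaves to the reader.
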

  
Consider a family of Hankel matrices $\{H^k_a\}_{1 \leq k \leq D}$ as
in \Cref{def:hankelFamily}. There is a formula for the dimension of
the kernel of each matrix in the family $\{H^k_a\}_{1 \leq k \leq D}$
that involves two numbers, $N_1^a$ and $N_2^a$. To be more specific,
the following holds:
    
    \begin{proposition}
    \label{prop:existenceN1andN2}
    For any family of Hankel matrices $\{H^k_a\}_{1 \leq k \leq D}$
    there are two constants, $N_1^a$ and $N_2^a$, such that the following hold:
        \begin{enumerate}
            \item $0 \leq N_1^a \leq N_2^a \leq D$.
            \item For all $k$, $1 \leq k \leq D$, it holds
              $\dim(\ker(H^k_a)) = \max(0; k-N_1^a) + \max(0; k-N_2^a)$.
            \item $N_1^a + N_2^a = D$.
        \end{enumerate}
    \end{proposition}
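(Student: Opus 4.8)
The plan is to study the single integer function $g(k) := \dim(\ker(H^k_a))$ for $0 \le k \le D$, where we extend the family by $H^0_a := (a_0,\dots,a_D)^{\transpose}$, and to show that it is convex and piecewise linear with endpoint values $g(0)=0$, $g(D)=D$, increments in $\{0,1,2\}$, and a symmetry coming from the Hankel structure; these constraints pin $g$ down to the claimed shape. Throughout we assume $a \neq 0$ (if $a = 0$ every $H^k_a$ vanishes and the statement fails, so this case is tacitly excluded).

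First I would record the elementary facts. Since $a \neq 0$, both $H^0_a$ and $H^D_a$ have rank $1$, so $g(0)=0$ and $g(D)=D$. Monotonicity of $g$ and a bound on its increments come from the two zero-padding maps $s_0, s_1 \colon \ker(H^k_a) \to \field^{k+2}$ given by $s_0(c_0,\dots,c_k)=(0,c_0,\dots,c_k)$ and $s_1(c_0,\dots,c_k)=(c_0,\dots,c_k,0)$ (the matrix counterparts of multiplying the associated binary form by $x$ or by $y$): a direct inspection of the defining equations $\sum_i c_i a_{j+i}=0$ shows that both land in $\ker(H^{k+1}_a)$, and since they are injective we get $g(k) \le g(k+1)$. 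To control the increment, define $\psi_k \colon \ker(H^{k+1}_a) \to \field^2$ by $\psi_k(c) = \bigl(c_{k+1},\, \sum_{i=0}^{k} c_i a_{D-k+i}\bigr)$; its second coordinate is precisely the single linear equation of $H^k_a$ that is dropped when one only asks $s_1(c) \in \ker(H^{k+1}_a)$, so $\ker(\psi_k) = s_1(\ker(H^k_a))$ and, by rank--nullity, $h(k) := g(k+1)-g(k) = \rk(\psi_k) \in \{0,1,2\}$.

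The crux is convexity of $g$, i.e. that $h(0),\dots,h(D-1)$ is non-decreasing. This follows from the intertwining identity $\psi_k \circ s_0 = \psi_{k-1}$ of maps $\ker(H^k_a) \to \field^2$: prepending a zero shifts both the read-off of the last coordinate and the index window $a_{D-k},\dots,a_D$ of the linear form down by one, which is exactly what turns $\psi_{k-1}$ into $\psi_k$; hence $\mathrm{Im}(\psi_{k-1}) \subseteq \mathrm{Im}(\psi_k)$ and $h(k-1) \le h(k)$. A non-decreasing sequence with values in $\{0,1,2\}$ is a block of $0$'s, then a block of $1$'s, then a block of $2$'s, of lengths $p,q,s$ with $p+q+s=D$. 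Finally I would invoke the transpose symmetry: since $(H^k_a)_{i,j}=a_{i+j}$ we have $H^{D-k}_a = (H^k_a)^{\transpose}$, hence $\rk(H^{D-k}_a)=\rk(H^k_a)$ and therefore $g(k)-g(D-k)=2k-D$, which rewrites as $h(k)+h(D-1-k)=2$ for all $k$; this pairs the leading block of $0$'s with the trailing block of $2$'s and forces $p=s$. Setting $N_1^a := p$ and $N_2^a := p+q$ gives $0 \le N_1^a \le N_2^a \le D$ and $N_1^a+N_2^a=2p+q=D$, while $g(k)=\sum_{j<k}h(j)=\max(0,k-N_1^a)+\max(0,k-N_2^a)$ by construction, which is the claimed identity for $1 \le k \le D$ (and also at $k=0$).

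The step I expect to be the real obstacle is the convexity claim --- ruling out that the nullities jump by $2$, then by $1$, then by $2$ again. The $\psi_k$/$s_0$ formulation reduces it to a one-line containment of images, but getting the index bookkeeping exactly right (which window of the $a_i$ appears in $\psi_k$, and why it is $s_0$ rather than $s_1$ that intertwines the $\psi$'s) is the delicate point; the remainder is bounded combinatorics on the difference sequence $h$ together with the transpose symmetry.
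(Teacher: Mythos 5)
The paper states \Cref{prop:existenceN1andN2} without proof, deferring to \citet[Chapter~5]{heinig1984algebraic} for the kernel structure of Hankel matrices, so there is no in-paper argument to compare against. Your proof is correct and self-contained, and it is genuinely more elementary than the textbook route: rather than going through fundamental systems and the full Heinig--Rost machinery, you reduce everything to two short facts about the difference sequence $h(k) = g(k+1) - g(k)$. I checked the details. The map $\psi_k(c) = (c_{k+1},\, \sum_{i=0}^{k} c_i a_{D-k+i})$ does have kernel $s_1(\ker(H^k_a))$ (the first coordinate forces the trailing zero, the second is exactly the one equation of $H^k_a$, namely row $i=D-k$, that $\ker(H^{k+1}_a)$ does not already impose on a zero-padded vector), and rank--nullity gives $h(k) = \rk(\psi_k) \in \{0,1,2\}$. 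The intertwining $\psi_k \circ s_0 = \psi_{k-1}$ checks out after the index shift $i' = i-1$, so $\mathrm{Im}(\psi_{k-1}) \subseteq \mathrm{Im}(\psi_k)$ and $h$ is non-decreasing; this convexity is indeed the one non-obvious step, and the image-containment phrasing is a clean way to get it. The transpose identity $H^{D-k}_a = (H^k_a)^{\transpose}$ gives $g(k) - g(D-k) = 2k-D$ and hence $h(k) + h(D-1-k) = 2$, which together with monotonicity forces the block lengths to satisfy $p=s$; setting $N_1^a=p$, $N_2^a=p+q$ yields all three claims, including $N_1^a + N_2^a = 2p+q = D$. The only caveat, which you flag yourself, is $a\neq 0$ (so that $g(0)=0$, $g(D)=D$); in the paper's context $a$ is the coefficient vector of the nonzero binary form being decomposed, so this is implicit. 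What your approach buys is a short, verifiable derivation in place of a black-box citation; what the citation buys the paper is access to the surrounding structural theory (U-chains, fundamental systems) that \Cref{prop:defVandW,prop:polInKernel} also rely on and that your argument does not reproduce.
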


    We may omit the index $a$ in $N_{1}^{a}$ and  $N_2^{a}$ when it
    is clear from the context.

    \begin{figure}[ht]
    \centering
    \begin{subfigure}{0.47\textwidth}
      \centering
      \includegraphics[width=175px]{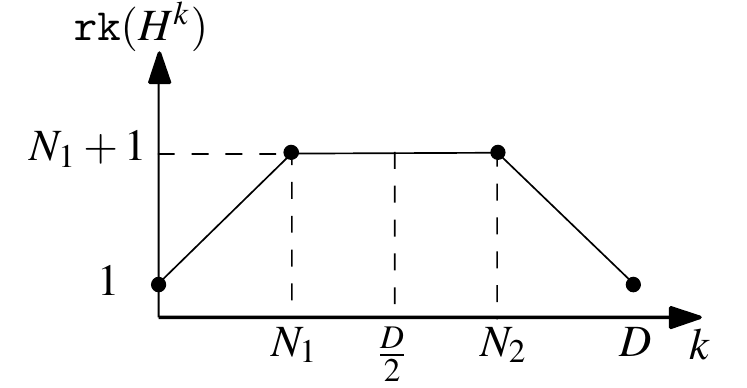}
      \caption{Rank of the Hankel matrices}
    \label{fig:relationHkandNs:rank}
    \end{subfigure}
    \begin{subfigure}{0.45\textwidth}
      \centering
      \includegraphics[width=175px]{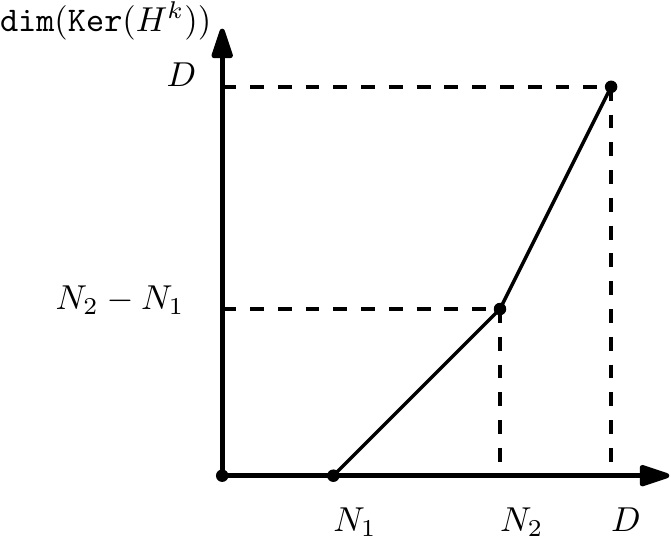}
      \caption{Dimension of the kernel}
    \label{fig:relationHkandNs:kernel}
    \end{subfigure}
    \caption{\normalsize Relation between $N_1$, $N_2$, and $D$}
    \label{fig:relationHkandNs}
  \end{figure}

    
  An illustration of \Cref{prop:existenceN1andN2} appears in
  Figure~\ref{fig:relationHkandNs}.  The dimension of the kernel and
  the rank of the matrices are piece-wise-linear functions in $k$, the
  number of columns in the matrix.  The graphs of the functions
  consist of three line segments, as we can see in the Figures
  \ref{fig:relationHkandNs:rank} and \ref{fig:relationHkandNs:kernel}.
  The dimension of the kernel is an increasing function of $k$.  For
  $k$ from $1$ to $N_1$, the kernel of the matrix is trivial, so the
  rank increases with the number of columns. That is, the slope of the
  graph of the rank (Figure~\ref{fig:relationHkandNs:rank}) is $1$,
  while the slope of the graph of the dimension of the kernel
  (Figure~\ref{fig:relationHkandNs:kernel}) is $0$. For $k$ from
  $N_1+1$ to $N_2$, the rank remains constant as for each column that
  we add, the dimension of the kernel increases by one. Hence, the
  slope of the graph of the rank is $0$ and the slope of the graph of
  the dimension of the kernel is $1$. For $k$ from $N_2+1$ to $D$, the
  rank decreases because the dimension of the kernel increases by $2$,
  and so the slope of the graph of the rank is $-1$, while the slope
  of the graph of the dimension of the kernel is $2$.
  
  If $N_1 = N_2$, then the graph of both functions degenerates to two
  line segments. Regarding the graph of the rank, the first segment
  has slope $1$ for $k$ from 1 to $N_1+1$ and the second segment has
  slope $-1$ for $k$ from $N_1+1$ to $D$. For the graph of the
  dimension of the kernel, the first segment has slope $0$ from $1$ to
  $N_1+1$, and the second one has slope $2$ from $N_1$ to $D$.

  The elements of the kernel of the matrices in $\{H^k\}$ are
  related. To express this relation from a linear algebra point of
  view, we need to introduce the \textbf{U-chains}.
    
    \begin{definition}[\xspace{\citet[Definition~5.1]{heinig1984algebraic}}]
        A \textbf{U-chain} of length $k$ of a vector 
        $v = (v_0, \dots, v_{n})^\transpose \in {\field}^{n+1}$ is a set of vectors
        $\{ \Uch_k^0 v, \Uch_k^1 v, \dots , \Uch_k^{k-1} v\}\ \subset {\field}^{n+k}$.
        The $i$-th element, for $0 \leq i \leq  k-1$, is
        \begin{align*}
            \Uch_k^i v = (\underbrace{0,\, \dots \,,0}_{i}, \overbrace{v_0 ,\,\dots \,, v_{n}}^{n+1}, \underbrace{0, \, \dots \, ,0}_{k - 1 - i}) ,
        \end{align*}
        where $\Uch_k^i$ is an $i$-shifting matrix
        of dimension  $(n + k) \times (n+1)$
        \cite[page~11]{heinig1984algebraic}.
    \end{definition}

    If $v$ is not zero, then all the elements in a U-chain of $v$ are
    linearly independent. The following theorem uses U-chains to
    relate the vectors in the kernels of a family of Hankel matrices.

    \begin{proposition}[{Vectors {v} and {w}}]
        \label{prop:defVandW}
        Given a family of Hankel matrices $\{H^k\}_{1 \leq k \leq
        D}$, let $N_1$ and $N_2$ be the constants of \Cref{prop:existenceN1andN2}. 
      There are two vectors, $v \in
        {\field}^{N_1+2}$ and $w \in {\field}^{N_2+2}$, such that
        \noindent
        \begin{itemize}[leftmargin=*]
        \item If $0 \leq k \leq N_1$, then 
          $\ker(H^{k}) = \{ 0\}$.
        
        \item If $N_1 < k \leq N_2$, then the U-chain of $v$ of length $(k-N_1)$
        is a basis of $\ker(H^{k})$, that is  
        $$\ker(H^{k}) = \generators{ \Uch_{k-N_1}^0 v, \dots, \Uch_{k-N_1}^{k-N_1-1} v }.$$

        \item If $N_2 < k \leq D$, then the U-chain of $v$ of length $k-N_1$
        together with the U-chain of $w$ of length $k-N_2$ is a basis
        of $\ker(H^{k})$, that is  
        $$\ker(H^{k}) = \generators{ \Uch_{k-N_1}^0 v, \dots, \Uch_{k-N_1}^{k-N_1-1} v, \Uch_{k-N_2}^0 w, \dots, \Uch_{k-N_2}^{k-N_2-1} w }.$$
        
        \end{itemize}                        
    \end{proposition}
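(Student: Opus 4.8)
\emph{Proof idea.} The plan is to transport everything to univariate polynomials through the polynomial--multiplication description of $H^k_a$ from the preceding proposition, and then read off the U-chain structure, using \Cref{prop:existenceN1andN2} to pin down the dimensions. I would write $A(x)=\sum_{i=0}^{D}a_ix^i$ and identify a length-$(k+1)$ vector $u=(u_0,\dots,u_k)$ with the polynomial $U(x)=\sum_{j=0}^{k}u_jx^{k-j}$. Under this dictionary the preceding proposition says that $u\in\ker(H^k_a)$ iff the coefficients of $x^k,\dots,x^D$ in $A(x)U(x)$ all vanish, i.e.\ $A(x)U(x)=L(x)+x^{D+1}G(x)$ with $\deg L\le k-1$ (and hence $\deg G\le k-1$); and each shift operator $\Uch_m^i$ becomes multiplication by a power of $x$, so that a U-chain of $v$ of length $m$ corresponds to the set $\{V,xV,\dots,x^{m-1}V\}$.

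The first item is then immediate: for $1\le k\le N_1$, \Cref{prop:existenceN1andN2} gives $\dim(\ker(H^k_a))=\max(0,k-N_1)+\max(0,k-N_2)=0$. For the second item I would take any nonzero $v\in\ker(H^{N_1+1}_a)\subseteq\field^{N_1+2}$ and write $AV=L_v+x^{D+1}G_v$ with $\deg L_v\le N_1$. For $N_1<k\le D$ and $0\le i\le k-N_1-1$, multiplying by $x^i$ and using $\deg(x^iL_v)\le i+N_1\le k-1<D+1$ shows that $A\cdot x^iV$ has vanishing coefficients in degrees $k,\dots,D$, while $\deg(x^iV)\le k$; hence $\Uch_{k-N_1}^i v\in\ker(H^k_a)$. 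These $k-N_1$ vectors are linearly independent (the U-chain of a nonzero vector is), so for $N_1<k\le N_2$, where $\dim(\ker(H^k_a))=k-N_1$, they form a basis.

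For the third item I would first fix $w$: at $k=N_2+1$ the kernel has dimension $(N_2+1-N_1)+1$ and contains the U-chain of $v$ of length $N_2+1-N_1$, so I take $w\in\ker(H^{N_2+1}_a)\subseteq\field^{N_2+2}$ completing that chain to a basis, and write $AW=L_w+x^{D+1}G_w$ with $\deg L_w\le N_2$ (when $N_1=N_2$ the middle range is empty and $w$ merely completes $v$ to a basis of the $2$-dimensional $\ker(H^{N_1+1}_a)$). The same computation, now using $\deg(x^jL_w)\le j+N_2\le k-1$, gives $\Uch_{k-N_2}^j w\in\ker(H^k_a)$ for $N_2<k\le D$ and $0\le j\le k-N_2-1$. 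Since the two chains together lie in $\ker(H^k_a)$ and their sizes add to $(k-N_1)+(k-N_2)=\dim(\ker(H^k_a))$, what is left is their joint linear independence.

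This last point is where the real content lies, and the step I expect to need the most care. Since the families only grow with $k$, it suffices to treat $k=D$, i.e.\ to show that $VP+WQ=0$ with $\deg P\le N_2-1$ and $\deg Q\le N_1-1$ forces $P=Q=0$. Multiplying by $A$ and using $AV\equiv L_v$, $AW\equiv L_w\pmod{x^{D+1}}$ together with $\deg(L_vP),\deg(L_wQ)\le D-1$ produces a second, now \emph{exact}, identity $L_vP+L_wQ=0$; eliminating between the two gives $P\,(VL_w-WL_v)=0$. The branch $P=0$ forces $WQ=0$, hence $Q=0$, so the job is to exclude $VL_w=WL_v$. For that I would use two observations. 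First, $\gcd(V,L_v,G_v)=1$: a common factor of degree $d\ge 1$ would, after dividing it out of $AV=L_v+x^{D+1}G_v$, produce a nonzero vector in $\ker(H^{N_1+1-d}_a)=\{0\}$. Second, from $AVW=WL_v+x^{D+1}WG_v=VL_w+x^{D+1}VG_w$ the relation $VL_w=WL_v$ also forces $VG_w=WG_v$; hence $V\mid W\gcd(L_v,G_v)$, and since $\gcd(V,\gcd(L_v,G_v))=1$ this gives $V\mid W$. Writing $W=VS$, a short degree count against the choice of $w$ (which is not of the form $VT$ with $\deg T\le N_2-N_1$) then yields the contradiction; the only subtlety is a case split on whether $\deg V=N_1+1$ --- where $\deg S\le N_2-N_1$ directly --- or $\deg V<N_1+1$, where one first notes that $\deg V\le N_1$ forces $\deg L_v=N_1$ exactly (because $\ker(H^{N_1}_a)=\{0\}$), and then $L_w=SL_v$ with $\deg L_w\le N_2$ again gives $\deg S\le N_2-N_1$. (Alternatively, once \Cref{prop:existenceN1andN2} supplies the dimensions, the whole U-chain picture is the classical structure theorem for kernels of a Hankel family; see \citep[Ch.~5]{heinig1984algebraic}.) Collecting the three items completes the proof.
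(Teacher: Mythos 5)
Your proof is correct, and it is genuinely self-contained where the paper is not: the paper states \Cref{prop:defVandW} without proof and defers to \citet[Ch.~5, Props.~5.1~and~5.5]{heinig1984algebraic} for the U-chain structure theorem (taking \Cref{prop:existenceN1andN2} as the only ingredient it actually uses). You reconstruct that structure theorem from scratch using exactly the tools the paper itself sets up in \Cref{sec:kernelHankel} and \Cref{sec:rational-reconstruction}: the dictionary $u\in\ker(H^k_a)\Longleftrightarrow AU\equiv L\pmod{x^{D+1}}$ with $\deg L\le k-1$, and the observation that $\Uch^i_m$ becomes multiplication by a power of $x$. From there, membership of the shifted vectors in $\ker(H^k_a)$ is a one-line degree count, the reduction of joint independence to $k=D$ (since $\deg P\le N_2-1$, $\deg Q\le N_1-1$ is the worst case) is clean, and the dimension formula closes the count.

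The real content, as you flag, is ruling out a relation $PV+QW=0$, and your elimination is sound: multiplying by $A$ and splitting by degree gives the exact identity $PL_v+QL_w=0$ (and, as a by-product, $PG_v+QG_w=0$), which combined with $PV+QW=0$ forces $VL_w=WL_v$ unless $P=Q=0$; then $VG_w=WG_v$ follows by comparing the two expansions of $AVW$; then $\gcd(V,L_v,G_v)=1$ (else dividing it out would place a nonzero vector in $\ker(H^{N_1+1-\deg d}_a)=\{0\}$, using $A\ne 0$) gives $V\mid W$; and the case split on $\deg V$ together with the forced $\deg L_v=N_1$ in the degenerate case lands $\deg S\le N_2-N_1$, contradicting the choice of $w$ outside the span of the U-chain of $v$ of length $N_2-N_1+1$. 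The one micro-case you pass over is $L_w=0$: there $L_w=SL_v$ with $L_v\ne 0$ forces $S=0$, hence $W=0$, the same contradiction, so it is harmless. Net: the paper buys brevity by citing the classical result; your argument buys self-containment at the cost of two pages, and the elimination step is a nice stand-alone proof of the Heinig--Rost structure theorem specialised to this Hankel family.
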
 

    The vectors $v$ and $w$ of \Cref{prop:defVandW} are not
    unique. The vector $v$ could be any vector in $\ker(H^{N_1+1})$.
    The vector $w$ could be any vector in $\ker(H^{N_2+1})$ that does not
    belong to the vector space generated by the \mbox{U-chain} of $v$
    of length $N_2-N_1+1$. From now on, given a family of Hankel
    matrices, we refer to $v$ and $w$ as the vectors of
    \Cref{prop:defVandW}.

    Let $u$ be a vector in the kernel of $H^k$ and $P_u$ its
    corresponding polynomial (see Notation).
    We say that $P_u$ is a \textbf{kernel polynomial}.
    As  $ P_{\Uch^j_{k} v} = x^j y^{k-1-j} P_v $, we
    can write any kernel polynomial of a family of Hankel matrices as a
    combination of $P_v$ and $P_w$ \cite[Propositions~5.1 \&
    5.5]{heinig1984algebraic}. Moreover, $P_v$ and $P_w$
    are relatively prime.
    
    \begin{proposition}
        \label{prop:polInKernel}
        \label{prop:PvPwCoprimes}
        Consider any family of Hankel matrices
        $\{H^k\}_{1 \leq k \leq D}$. The  kernel polynomials
        $P_v$ and $P_w$ are relative prime. Moreover, for each $k$,
        the set of kernel polynomials of the matrix $H^k$
        is as follows:
        \begin{itemize}
            \item If $ 0 < k \leq N_1$, then it is
                $\{0\}$. 
            \item If $N_1 < k \leq N_2$, then it is
                $\{P_{\mu}\*P_v : \mu \in \field^{k-N_1}\}$. 
            \item If $N_2 < k \leq D  $, then it is 
              $\{P_{\mu}\*P_v + P_{\rho}\*P_w : \mu \in \field^{k-N_1}, \rho \in \field^{k-N_2}\}$.
        \end{itemize}
    \end{proposition}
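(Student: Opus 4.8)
The plan is to derive the description of the kernel polynomials directly from \Cref{prop:defVandW} and the identity $P_{\Uch_k^j v} = x^j y^{k-1-j} P_v$ (together with its analogue for $w$), and to treat the coprimality of $P_v$ and $P_w$ as a separate point. First, I would observe that, under $u \mapsto P_u$, the $\field$-span of a U-chain of length $m$ of a vector $u$ is exactly $\{P_\mu \* P_u : \mu \in \field^{m}\}$, since $P_{\Uch_m^i u} = x^i y^{m-1-i} P_u$ and the $m$ monomials $x^i y^{m-1-i}$ span the binary forms of degree $m-1$. Feeding this into \Cref{prop:defVandW} immediately gives the three cases: for $0 < k \le N_1$ the kernel is trivial and $0$ is its only kernel polynomial; for $N_1 < k \le N_2$ the U-chain of $v$ of length $k-N_1$ gives $\{P_\mu \* P_v : \mu \in \field^{k-N_1}\}$; and for $N_2 < k \le D$ adjoining the U-chain of $w$ of length $k-N_2$ gives $\{P_\mu \* P_v + P_\rho \* P_w : \mu \in \field^{k-N_1},\ \rho \in \field^{k-N_2}\}$.

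The substantive part is proving $\gcd(P_v,P_w)=1$. The input I would use is the case $k=D$ of \Cref{prop:defVandW}: there the two U-chains form a \emph{basis} of $\ker(H^D)$, hence the subspaces $\{P_\mu\*P_v : \deg P_\mu \le N_2-1\}$ and $\{P_\rho\*P_w : \deg P_\rho \le N_1-1\}$ intersect only in $0$. Now assume $g := \gcd(P_v,P_w)$ has degree $d \ge 1$, write $P_v = g\,\hat v$ and $P_w = g\,\hat w$ with $\gcd(\hat v,\hat w)=1$, so $\deg\hat v = N_1+1-d$ and $\deg\hat w = N_2+1-d$. The nonzero form $\hat w\,P_v = \hat v\,P_w$ lies in $\{P_\mu\*P_v : \deg P_\mu \le N_2+1-d\}\cap\{P_\rho\*P_w : \deg P_\rho \le N_1+1-d\}$; for $d \ge 2$ this puts it in the trivial intersection above, a contradiction. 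So it only remains to exclude $d=1$.

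For $d=1$ I would first reduce, via a linear change of the variables $x,y$ (under which $N_1,N_2$ are invariant and $P_v,P_w$ go to the corresponding kernel polynomials of the new family, a common linear factor going to one of the same multiplicity), to the case $g=x$, i.e. $v_0=w_0=0$. Writing $v=(0,v')$ with $v'\in\field^{N_1+1}$ and $w=(0,w')$ with $w'\in\field^{N_2+1}$, a shift of indices shows that $v'$, resp. $w'$, satisfies every defining equation of $\ker(H^{N_1})$, resp. $\ker(H^{N_2})$, except the first; since $\ker(H^{N_1})=\{0\}$ and (by the choice of $w$) $P_w=x\,P_{w'}$ cannot lie in the span of the U-chain of $v$ of length $N_2-N_1+1$, that first equation is violated by $v'$ with a value $\xi_v\ne0$ and by $w'$ with a value $\xi_w\ne0$. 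Then $u := \xi_w\,\Uch_{N_2-N_1+1}^0 v' - \xi_v\,w'$ kills all equations of $\ker(H^{N_2})$, so $u\in\ker(H^{N_2})$; using the description of $\ker(H^{N_2})$ and comparing with $P_u = \xi_w\,y^{N_2-N_1}P_{v'} - \xi_v\,P_{w'}$ one finds (whether $u=0$ or not) that $P_w$ is a multiple of $P_v$ lying in the span of the U-chain of $v$ of length $N_2-N_1+1$ --- contradicting the choice of $w$. Hence $d=0$.

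The main obstacle is precisely this last case $d=1$: a single shared linear factor is invisible to the dimension count at level $D$, so one must exploit the minimality in the definition of $w$, which amounts to re-deriving a fragment of the Heinig--Rost structure theory. As a shortcut, the entire coprimality argument could be replaced by a citation to \cite[Propositions~5.1~\&~5.5]{heinig1984algebraic}, or by the classical fact that the apolar ideal of a binary form is a complete intersection with generators in degrees $N_1+1$ and $N_2+1$, which yields both assertions of the proposition simultaneously.
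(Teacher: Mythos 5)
Your proposal is correct and takes a genuinely different route from the paper, which does not prove this proposition at all: both the three-case description and the coprimality of $P_v$ and $P_w$ are deferred to \cite[Propositions~5.1~\&~5.5]{heinig1984algebraic}. You instead derive it from \Cref{prop:defVandW} and the shift identity $P_{\Uch_k^j u} = x^j y^{k-1-j} P_u$; the three-case description is immediate, as you say. For coprimality, the split into $d := \deg\gcd(P_v,P_w) \geq 2$ (a dimension count using the U-chain basis of $\ker(H^k)$ for some $N_2 < k \leq D$ that accommodates the common multiple $\hat{w}P_v = \hat{v}P_w$) and $d = 1$ is the right move, and your $d=1$ argument --- index-shift to $v', w'$, establish $\xi_v,\xi_w \neq 0$, combine them into an element of $\ker(H^{N_2})$, and conclude that $P_w$ lies in the span of the U-chain of $v$ of length $N_2-N_1+1$, contradicting \Cref{thm:PwNotInPv} --- does close the gap; you are in effect re-deriving the piece of Heinig--Rost structure theory that the paper outsources. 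The one step you assert rather than justify is the reduction to a common factor $g = x$ by a linear change of variables: this rests on the $\mathrm{GL}_2(\K)$-equivariance of the Hankel family (equivalently, of the apolar ideal of $f$), which is true but is nowhere set up in the paper, so a self-contained write-up would have to establish that equivariance (it follows from \Cref{theo:Silv1851} and the apolarity pairing) or run the index-shift directly for a general $\K$-linear common factor $\beta x - \alpha y$. Your closing observation --- that a citation to Heinig--Rost, or the complete-intersection property of the apolar ideal of a binary form, short-circuits the whole thing --- is precisely what the paper does; your route buys self-containment and makes visible that coprimality is already forced by the basis property in \Cref{prop:defVandW}.
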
    

    \begin{corollary} \label{thm:PwNotInPv}
      Let $\omega \in \ker(H^{N_2+1})$ such that
      $P_\omega \not\in \{P_\mu \* P_v : \mu \in \K^{N_2-N_1+1}\}$,
      then we can
      consider $\omega$ as the vector $w$ from \Cref{prop:defVandW}.    
    \end{corollary}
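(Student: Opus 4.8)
The plan is to recognize that this corollary is merely the polynomial reformulation of the characterization of $w$ recorded in the remark following \Cref{prop:defVandW}: there, $w$ is allowed to be \emph{any} element of $\ker(H^{N_2+1})$ that does not lie in the subspace spanned by the U-chain of $v$ of length $N_2-N_1+1$. So the whole task is to translate the hypothesis ``$P_\omega \notin \{P_\mu \* P_v : \mu \in \K^{N_2-N_1+1}\}$'' into the statement ``$\omega$ is not in that span,'' and then to check that $\omega$ has the right ambient dimension.

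First I would use that $u \mapsto P_u$ is a linear isomorphism from $\field^{m+1}$ onto the space of binary forms of degree $m$, so comparing subspaces amounts to comparing their images. The U-chain of $v$ of length $N_2-N_1+1$ consists of the $N_2-N_1+1$ vectors $\Uch_{N_2-N_1+1}^0 v, \dots, \Uch_{N_2-N_1+1}^{N_2-N_1} v$, all lying in $\field^{N_2+2}$ since $v \in \field^{N_1+2}$. Applying the identity $P_{\Uch_k^j v} = x^j y^{k-1-j} P_v$ (recalled just before \Cref{prop:polInKernel}) with $k = N_2-N_1+1$, the image under $u \mapsto P_u$ of the span of this U-chain is
\begin{equation*}
  \Big\{ \sum\nolimits_{j=0}^{N_2-N_1} \mu_j\, x^j y^{N_2-N_1-j}\, P_v : \mu \in \K^{N_2-N_1+1} \Big\} = \{ P_\mu \* P_v : \mu \in \K^{N_2-N_1+1} \},
\end{equation*}
because $P_\mu$ ranges over all binary forms of degree $N_2-N_1$ as $\mu$ ranges over $\K^{N_2-N_1+1}$.

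With this dictionary in hand the conclusion is immediate: the hypothesis says $P_\omega$ lies outside the displayed set, hence $\omega$ lies outside the span of the U-chain of $v$ of length $N_2-N_1+1$; in particular $\omega \neq 0$ (take $\mu = 0$). Since $\omega \in \ker(H^{N_2+1})$ and $H^{N_2+1}$ has $N_2+2$ columns, we have $\omega \in \field^{N_2+2}$, so $\omega$ satisfies every requirement that \Cref{prop:defVandW} and its accompanying remark place on $w$; therefore $\omega$ can be taken as $w$.

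I do not expect any genuine obstacle here; the only bookkeeping point is the degenerate case $N_1 = N_2$, in which the U-chain of $v$ of length $N_2-N_1+1 = 1$ is just $\{v\}$ and the displayed set collapses to $\K\, P_v$, so the hypothesis becomes simply $\omega \notin \langle v \rangle$ — again exactly the condition in \Cref{prop:defVandW}. (If $N_2 = D$, hence $N_1 = 0$, one reads $H^{N_2+1}$ as the empty $0 \times (D+2)$ matrix whose kernel is all of $\field^{D+2}$, consistently with \Cref{prop:existenceN1andN2}, and the same argument applies verbatim.)
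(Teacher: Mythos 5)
Your proof is correct and is precisely the translation the paper leaves implicit: the corollary is stated without proof, as an immediate consequence of the remark following \Cref{prop:defVandW} (that $w$ may be any element of $\ker(H^{N_2+1})$ outside the span of the U-chain of $v$ of length $N_2-N_1+1$) together with the identity $P_{\Uch^j_k v} = x^j y^{k-1-j} P_v$, and you have spelled out exactly that dictionary, including the harmless boundary cases.
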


    \subsection{Rational Reconstructions}
    \label{sec:rational-reconstruction}
    \noindent
    A rational reconstruction for a series, respectively a polynomial,
    consists in approximating the series, respectively the polynomial,
    as the quotient of two polynomials.
    Rational reconstructions are the backbone of many problems e.g.,
    Pad\'e approximants, Cauchy Approximations, Linear Recurrent
    Sequences, Hermite Interpolation. They are related to Hankel
    matrices. For an introduction to rational reconstructions, we
    refer to \citet[Chapter~7]{bostan2017algorithmes} and references
    therein.
    
    \begin{definition}
      Consider $a := (a_0,\dots,a_D)^\transpose \in \K^{D+1}$
      and a polynomial $A := \sum_{i = 0}^D a_i x^i \in \K[x]$. Given
      a pair of univariate polynomials $(U,R)$, we say that they are a
      rational reconstruction of $A$ modulo $x^{D+1}$ if
      $ A \cdot U \equiv R \mod x^{D+1}$.
    \end{definition}


    Such a reconstruction is not necessarily unique.  Our interest
    emanates from the relation between the rational reconstructions of
    $A$ modulo $x^{D+1}$ and the kernels of the family of Hankel
    matrices $\{H^k_a\}_k$.

    \begin{lemma} \label{thm:vectInKernelToRationalApprox}
      Consider $\omega \in \ker(H^k_a)$ and
      $(r_0,\dots,r_{k-1}) \in \K^k$ such that
      \begin{align*} 
        \begin{pmatrix}
         0 & \cdots      &            0 & a_{0}       \\
        \vdots & \iddots   & \iddots    & \vdots      \\ 
        0  & a_0       & \cdots     & a_{k-1} 
        \end{pmatrix} 
        \begin{pmatrix}
          \omega_0 \\
          \omega_1 \\
          \vdots \\
          \omega_k \\
        \end{pmatrix}
        =
        \begin{pmatrix}
          r_{0} \\
          \vdots \\
          r_{k-1}
        \end{pmatrix} .
      \end{align*}
      Then, $(P_\omega(1,x), \sum_{i = 0}^{k-1} r_i x^i)$ is a
      rational reconstruction of $A$ modulo $x^{D+1}$.
    \end{lemma}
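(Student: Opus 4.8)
The plan is to compute the first $D+1$ coefficients of the product $A\cdot P_\omega(1,x)$ directly and check that they agree with those of $\sum_{i=0}^{k-1}r_i x^i$; everything of degree at least $D+1$ is then irrelevant modulo $x^{D+1}$. First I would set $U:=P_\omega(1,x)=\sum_{i=0}^{k}\omega_i x^{k-i}$, so that the coefficient of $x^m$ in $U$ is $\omega_{k-m}$ — this reversal of the coefficient vector is the one notational point to keep track of. By the Cauchy product, the coefficient of $x^n$ in $A\cdot U$ is $\sum_m a_{n-m}\,\omega_{k-m}$ with $m$ ranging over $\max(0,n-D)\le m\le\min(n,k)$, and I would split the verification into the ranges $k\le n\le D$ and $0\le n\le k-1$.

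For $k\le n\le D$ the bounds on $m$ become $0\le m\le k$, and after the substitution $j=k-m$ the coefficient of $x^n$ equals $\sum_{j=0}^{k}a_{(n-k)+j}\,\omega_j$, which is precisely row $n-k$ of $H_a^k$ applied to $\omega$ (a legitimate row index, since $0\le n-k\le D-k$); because $\omega\in\ker(H_a^k)$, all these coefficients vanish. For $0\le n\le k-1$ the bounds become $0\le m\le n$, and the same substitution gives $\sum_{j=k-n}^{k}a_{j-(k-n)}\,\omega_j$ for the coefficient of $x^n$; reading off row $n$ of the triangular matrix in the statement shows $r_n=\sum_{j=k-n}^{k}a_{j-(k-n)}\,\omega_j$ as well, so the first $k$ coefficients of $A\cdot U$ are exactly $r_0,\dots,r_{k-1}$. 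Combining the two ranges, $A\cdot U-\sum_{i=0}^{k-1}r_i x^i$ has no term of degree $\le D$, i.e. $A\cdot U\equiv\sum_{i=0}^{k-1}r_i x^i\pmod{x^{D+1}}$, which is the claim.

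I do not expect a genuinely hard step: the proof is index bookkeeping, and the only places to be careful are the coefficient reversal in $P_\omega(1,x)$ and the off-by-one alignment between the rows of $H_a^k$, the rows of the triangular matrix, and the exponents of $A\cdot U$. An alternative that avoids the explicit Cauchy product is to homogenize $P_\omega(1,x)$ to a binary form of degree $k$ and apply the polynomial-multiplication identity \Cref{eq:polMultiplication} to $A$ and this form: its three blocks of output rows then recover, from top to bottom, the vector $(r_0,\dots,r_{k-1})$, the vector $H_a^k\omega=0$, and the (irrelevant) coefficients of degree $>D$, and dehomogenizing at $y=1$ yields the congruence directly. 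I would most likely write the direct computation, since it is the shortest.
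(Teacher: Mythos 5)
Your proof is correct and is essentially the paper's argument: the paper cites \Cref{eq:polMultiplication} to truncate the stacked matrix equation at the row indexed by degree $D$ and read off that $A\cdot P_\omega(1,x)$ has coefficients $r_0,\dots,r_{k-1},0,\dots,0$ up to degree $D$, which is exactly what your direct Cauchy-product bookkeeping verifies. Your own ``alternative'' sketch at the end, via \Cref{eq:polMultiplication}, is literally the proof in the paper.
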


    \begin{proof}
      Following
        \Cref{eq:polMultiplication}, if $\omega \in \ker(H^k_a)$, then
      \begin{align}  \label{eq:vectInKernToReconstr}
        \begin{pmatrix}
          &           &            & a_{0}       \\
          &           & \iddots    & \vdots      \\ 
          & a_0       & \cdots     & a_{k-1} \\ \hline
          a_0    & a_1       & \cdots  & a_{k}   \\
          \vdots & \vdots    & \iddots  & \vdots  \\
          a_{D-k} & a_{D-k+1} & \cdots & a_{D} \\
        \end{pmatrix} 
        \begin{pmatrix}
          \omega_0 \\
          \omega_1 \\
          \vdots \\
          \omega_k \\
        \end{pmatrix}
        =
        \begin{pmatrix}
          r_{0} \\
          \vdots \\
          r_{k-1} \\ \hline
          0 \\
          \vdots \\
          0
        \end{pmatrix} .
      \end{align}
      Hence, $P_\omega(1,x) = \sum_{i = 0}^k \omega_{k-i} x^i$ and
      $A \cdot P_\omega(1,x)  \equiv \sum_{i = 0}^{k-1} r_i x^i \mod
      x^{D+1}$. Therefore, \linebreak
      $(P_\omega(1,x), \sum_{i = 0}^{k-1} r_i x^i)$ is a
      rational reconstruction of $A$ modulo $x^{D+1}$.
    \end{proof}
  
 
    \begin{lemma} \label{thm:fromRationalToKernelPolynomial} If
      $(U,R)$ is a rational reconstruction of $A$ module $x^{D+1}$,
      then there is a vector \linebreak
      $\omega \in \ker(H^{\max(\deg(U),\deg(R)+1)}_a)$ such that
      $$P_\omega =
      U\!\!\left( \frac{y}{x} \right) \, x^{\max(\deg(U),\deg(R)+1)}.$$
    \end{lemma}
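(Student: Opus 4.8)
The plan is to exhibit the required vector explicitly from the coefficients of $U$ and then verify the two assertions separately. Write $U=\sum_{j=0}^{\deg U}u_j x^j$ and set $k:=\max(\deg U,\deg R+1)$, so that $\deg U\le k$ and $\deg R\le k-1$. I would define $\omega=(\omega_0,\dots,\omega_k)^{\transpose}\in\K^{k+1}$ by $\omega_c:=u_{k-c}$ for $0\le c\le k$, with the convention $u_j=0$ for $j\notin[0,\deg U]$ (so the first $k-\deg U$ coordinates of $\omega$ vanish). A single reindexing $j=k-c$ then gives $P_\omega=\sum_{c=0}^{k}\omega_c x^c y^{k-c}=\sum_{c=0}^{k}u_{k-c} x^c y^{k-c}=\sum_{j=0}^{k}u_j x^{k-j}y^{j}=x^{k}\,U(y/x)$, since $u_j=0$ for $j>\deg U$; this is exactly the claimed identity.

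The substance is to check that $\omega\in\ker(H^{k}_a)$. I would read the congruence $A\cdot U\equiv R\pmod{x^{D+1}}$ coefficient by coefficient: for $0\le m\le D$ the coefficient of $x^{m}$ in $A\cdot U$ equals that of $x^{m}$ in $R$, and for $k\le m\le D$ it equals $0$ because $\deg R\le k-1<k\le m$. On the other side, the coefficient of $x^{m}$ in $A\cdot U=\big(\sum_i a_i x^i\big)\big(\sum_j u_j x^j\big)$ is $\sum_{j=0}^{k}a_{m-j}u_j$ — truncation at $j=k$ is legitimate because $u_j=0$ for $j>\deg U$, and all indices $m-j$ stay in $[0,D]$ since $k\le m\le D$. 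Substituting $u_j=\omega_{k-j}$ and reindexing by $c=k-j$ rewrites this sum as $\sum_{c=0}^{k}a_{(m-k)+c}\,\omega_c$; letting $r:=m-k$ run over $0,1,\dots,D-k$ as $m$ runs over $k,\dots,D$, this is precisely the $r$-th entry of $H^{k}_a\,\omega$ by \Cref{def:hankelFamily}. Hence $H^{k}_a\,\omega=0$, i.e.\ $\omega\in\ker(H^{k}_a)=\ker(H^{\max(\deg U,\deg R+1)}_a)$, as required. Equivalently, one can avoid the hand computation by invoking the correspondence in \Cref{eq:polMultiplication} between multiplication by $U$ and multiplication by the banded matrix whose central block is $H^{k}_a$, and observing that the entries of the product lying opposite that central block are exactly the coefficients of $x^{k},\dots,x^{D}$ of $R$, all of which vanish.

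I do not expect a genuine obstacle: the entire argument is the reversal-of-coefficients bookkeeping that aligns $U$ against the Hankel band, which is essentially the content of \Cref{eq:polMultiplication}. The only points requiring care are the degenerate cases — $U=0$, or $\deg U$ or $\deg R+1$ exceeding $D$ — where the shapes in \Cref{def:hankelFamily} degenerate; in particular if $k>D$ the matrix $H^{k}_a$ has no rows, so the kernel condition is vacuous while the identity $P_\omega=x^{k}U(y/x)$ still holds, and one should keep the index ranges straight when $k$ is close to $D$.
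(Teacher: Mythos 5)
Your proof is correct and follows essentially the same route as the paper's: both reduce to the coefficient-matching between $A\cdot U\equiv R\pmod{x^{D+1}}$ and the banded matrix of \Cref{eq:polMultiplication}, whose central block is $H^k_a$. The only cosmetic difference is that by fixing $k=\max(\deg U,\deg R+1)$ from the outset and taking $\omega_c=u_{k-c}$ with the zero-padding convention built in, you avoid the paper's explicit case split between $\deg U>\deg R$ and $\deg R\ge\deg U$; the underlying computation is identical.
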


    \begin{proof}
      Let $k = \deg(U)$, $q = \deg(R)$, $U = \sum_i u_i x^i$ and
      $R = \sum_i r_i x^i$. Following
      \Cref{eq:polMultiplication}, $A \, U \equiv R \mod x^{D+1}$ is
      equivalent to,
      \begin{align} \label{eq:ofTheoremfromRationalToKernelPolynomial}
        \begin{pmatrix}
          &           &         & a_{0}       \\
          &           & \iddots & \vdots      \\
          & a_0       & \cdots  & a_{k-1} \\
          a_0    & a_1       & \cdots  & a_{k}   \\
          \vdots & \vdots    & \iddots & \vdots  \\
          a_{D-k} & a_{D-k+1} & \cdots  & a_{D} \\
        \end{pmatrix} 
        \begin{pmatrix}
          u_k \\
          \vdots \\
          u_1 \\
          u_0 \\
        \end{pmatrix}
        =
        \begin{pmatrix}
          r_{0} \\
          r_{1} \\
          \vdots \\
          r_{q} \\
          0 \\
          \vdots \\
          0
        \end{pmatrix} .
      \end{align}

      If $k > q$, \Cref{eq:ofTheoremfromRationalToKernelPolynomial}
      reduces to \Cref{eq:vectInKernToReconstr}, and so
      $\omega = (u_k, \dots, u_0) \in \ker(H^k_a)$. Moreover,
      $$U\!\left( \frac{y}{x} \right) x^k = \sum_{i = 0}^{k} u_i y^{i}
      x^{k-i} 
      \underset{(j \leftrightarrow k - i)}{=}
      \sum_{j = 0}^k u_{k - j}
      x^j y^{j - k} = P_\omega.$$

      If $q \geq k$, we extend the vector $(u_k,\dots,u_0)$ by adding
      $(q + 1 - k)$ leading zeros. We rewrite
      \Cref{eq:ofTheoremfromRationalToKernelPolynomial} as
      \Cref{eq:addingZerosToRationalReconstr}.
      The concatenation of the two bottom submatrices form the matrix
      $H^{q+1}_a$, and so
      $\omega = (0,\dots,0,u_k,\dots,u_0) \in \ker(H^{q+1}_a)$.  Also,
      $$P_\omega = \sum_{j = 0}^k u_j x^{q+1-j} y^{j} + \sum_{j =
        k+1}^{q+1} 0 \, x^{q+1-j} y^{j} = U \left( \frac{y}{x} \right)
      x^{q + 1}.$$
      {\small
        \begin{align} \label{eq:addingZerosToRationalReconstr}
          \left[\begin{array}{ccc | cccc}
                  &      &      &             &           & a_{0}       \\
                  &      &      &             & \iddots    & \vdots      \\ 
                  &      &      &      a_0    & \cdots    & a_{k}   \\ 
                  &      & a_0  &      a_1    & \cdots   & a_{k+1} \\
                  & \iddots &   \iddots    &  \vdots  & \iddots  & \vdots  \\ \hline
                  a_0  & \dots  &  a_{q - k}  &  a_{q + 1 -k}   & \iddots  & a_{q+1}  \\
                  \vdots & \iddots      &  \vdots  &      \vdots & \iddots  & \vdots  \\
                  a_{D-q-1}  & \dots & a_{D-k-1} &     a_{D-k}    & \cdots  & a_{D}
                \end{array}\right]
                                                                              \begin{pmatrix}
                                                                                0 \\ 
                                                                                \vdots \\ 
                                                                                0 \\  \hline
                                                                                u_{k} \\
                                                                                \vdots \\
                                                                                u_0 \\
                                                                              \end{pmatrix}
          =
          \begin{pmatrix}
            r_{0} \\
            \vdots \\ 
            r_{k} \\ 
            \vdots \\
            r_{q} \\ \hline
            0 \\
            \vdots \\
            0
          \end{pmatrix} .
        \end{align}
      }
    \end{proof}

    \begin{remark} \label{thm:degreeKernelPol}
      If $(U,R)$ is a rational reconstruction, then the degree of the kernel polynomial
      $P_\omega(x,y) = U \!\! \left( \frac{y}{x} \right)
      x^{\max(\deg(U),\deg(R)+1)}$ is $\max(\deg(U), \deg(R) + 1)$.
      In particular, the maximum power of $x$ that divides the kernel
      polynomial $P_\omega$ is $x^{\max(0, \deg(R) + 1 - \deg(U))}$.
    \end{remark}


    \subsection{Greatest Common Divisor  and B\'ezout identity}
    \label{sec:great-comm-divis}
    
    The Extended Euclidean algorithm (\egcd) is a variant of the
    classical Euclidean algorithm that computes the Greatest Common
    Divisor of two univariate polynomials $A$ and $B$, $\gcd(A,B)$,
    together with two polynomials $U$ and $V$, called
    \textit{cofactors}, such that $U \, A + V \, B = \gcd(A,B)$. In the
    process of computing these cofactors, the algorithm computes a
    sequence of relations between $A$ and $B$ that are useful to solve
    various problems, in particular to compute the rational
    reconstruction of $A$ modulo $B$.
    For a detailed exposition of this algorithm, we refer to
    \citet[Chapter~6]{bostan2017algorithmes} and \citet[Chapter~3 and
    11]{gathen_modern_2013}.
       
    \begin{algorithm}[h] 
      \caption{Calculate the \egcd of $A$ and $B$} 
      \label{alg:egcd} 
      \begin{algorithmic}
        \STATE $(U_0, V_0, R_0) \leftarrow (0,1,B)$
        \STATE $(U_1, V_1, R_1) \leftarrow (1,0,A)$
        \STATE $k \leftarrow 1$

        \WHILE{$R_k \neq 0$}
        \STATE $k \leftarrow k + 1$
        \STATE $Q_{k - 1} \leftarrow R_{k - 2}  \text{ quo }  R_{k - 1}$
        \STATE $(U_k, V_k, R_k) \leftarrow (U_{k-2},V_{k-2},R_{k-2}) - Q_{k - 1} \, (U_{k-1},V_{k-1},R_{k-1})$
        \ENDWHILE

        \STATE Return $\{(U_{i},V_{i},R_{i})\}_i$
      \end{algorithmic}  
    \end{algorithm}

    The Extended Euclidean Algorithm (\Cref{alg:egcd}) computes a
    sequence of triples $\{(U_i,V_i,R_i)\}_i$ which form the
    identities
    \begin{align} \label{eq:bezoutIdentity}
      U_i \, A + V_i \, B = R_i, \quad \text{for all }  i.
    \end{align}
    Following \citet{gathen_modern_2013}, we refer to these triplets as
    the rows of the Extended Euclidean algorithms of $A$ and
    $B$. Besides \Cref{eq:bezoutIdentity}, the rows are related to
    each other as follows.
        
    \begin{remark}
      \label{thm:RiIsDecreasing}
      The degrees of the
      polynomials $\{R_i\}_i$ form a strictly decreasing sequence,
      that is $\deg(R_i) > \deg(R_{i+1})$ for every $i$.
    \end{remark}
    
    \begin{lemma}[{\citealp[Sec~7.1]{bostan2017algorithmes}}] \label{thm:bezoutFactorsCoprime}
      For each $i$, $U_{i} \, V_{i+1} - U_{i+1} \, V_i = (-1)^i$, and so
      the polynomials $U_i$ and $V_i$ are coprime.
    \end{lemma}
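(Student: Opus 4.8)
The plan is to prove the identity by a one-step induction on the row index $i$, and then read off coprimality as a corollary. Write $d_i := U_i V_{i+1} - U_{i+1} V_i$. For the base case I evaluate $d_0$ directly from the initialization in \Cref{alg:egcd}: since $(U_0,V_0) = (0,1)$ and $(U_1,V_1) = (1,0)$, one gets $d_0 = 0\cdot 0 - 1\cdot 1 = -1$. (This is the claimed value for $i=0$ up to the sign convention of the cited source; what matters for the consequences is only that $d_0 = \pm 1$ and that $|d_i|$ stays constant.)

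For the inductive step I use the single recurrence driving the algorithm, $(U_{i+1},V_{i+1}) = (U_{i-1},V_{i-1}) - Q_i\,(U_i,V_i)$, which holds for every $i\ge 1$ for which the loop emits row $i+1$. Substituting and expanding,
\[
  d_i = U_i\,(V_{i-1} - Q_i V_i) - (U_{i-1} - Q_i U_i)\,V_i = U_i V_{i-1} - U_{i-1} V_i = -\,d_{i-1},
\]
since the two $Q_i U_i V_i$ terms cancel. Hence $d_i$ just flips sign at each step, and with $d_0 = -1$ this gives $d_i = (-1)^{i+1}$; in particular $d_i \in \{+1,-1\}$ for every row index $i$ produced by \egcd, which is the assertion of the lemma.

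Coprimality is then immediate: rewriting the identity as $V_{i+1}\,U_i + (-U_{i+1})\,V_i = d_i = \pm 1$ exhibits a B\'ezout relation for the pair $(U_i,V_i)$, so any common divisor of $U_i$ and $V_i$ in $\field[x]$ divides the right-hand side and is therefore a unit; thus $\gcd(U_i,V_i) = 1$.

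The argument is elementary and I expect no genuine obstacle; the only points requiring care are bookkeeping ones: checking that the recurrence $(U_{i+1},V_{i+1}) = (U_{i-1},V_{i-1}) - Q_i(U_i,V_i)$ legitimately applies at each index where $d_i$ and $d_{i-1}$ are compared (i.e. that both rows are genuinely produced by the loop), and tracking the alternating sign correctly from the base case.
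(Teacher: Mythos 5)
Your proof is correct and is the standard two-step induction for this identity; the paper itself does not prove this lemma but only cites \citet[Sec.~7.1]{bostan2017algorithmes}, so there is no in-paper argument to compare against. You are also right to flag the sign: with the initialization $(U_0,V_0)=(0,1)$, $(U_1,V_1)=(1,0)$ given in \Cref{alg:egcd}, one has $U_0V_1-U_1V_0=-1$, so under that convention the invariant is $(-1)^{i+1}$ rather than $(-1)^i$; the cited source uses the opposite ordering of the two seed rows, which accounts for the power of $-1$. As you note, this discrepancy is immaterial, since the only consequence the paper ever uses (in the proof of \Cref{thm:correctnessPvAndPw}) is that $U_iV_{i+1}-U_{i+1}V_i$ is a nonzero constant, hence $\gcd(U_i,V_i)=1$. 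Your bookkeeping worry about when the recurrence $(U_{i+1},V_{i+1})=(U_{i-1},V_{i-1})-Q_i(U_i,V_i)$ applies is resolved by the structure of the loop: rows are emitted consecutively starting from $i=0,1$, so whenever row $i+1$ exists the recurrence relating it to rows $i$ and $i-1$ holds, and the induction is well-founded on exactly the set of produced rows.
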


    
    \begin{lemma}[{\citealp[Lem~7.1]{bostan2017algorithmes}}]
      \label{thm:relationOfDegreesEGCD}
      For each $i > 0$, the degree of $U_i$ is the degree of $B$
      minus the degree of $R_{i-1}$, that is
      $$\deg(U_{i}) = \deg(B) - \deg(R_{i-1}), \quad \forall\, i > 0.$$
    \end{lemma}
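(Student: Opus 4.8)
The plan is to prove the identity $\deg(U_i) = \deg(B) - \deg(R_{i-1})$ for every $i > 0$ by strong induction on $i$, reading the degrees directly off the recurrence in \Cref{alg:egcd} and using only the strict monotonicity of the remainder degrees (\Cref{thm:RiIsDecreasing}). For the base cases: since $U_1 = 1$ we have $\deg(U_1) = 0 = \deg(B) - \deg(R_0)$ because $R_0 = B$; and since $U_2 = U_0 - Q_1\, U_1 = -Q_1$ with $Q_1 = R_0 \text{ quo } R_1$, we get $\deg(U_2) = \deg(Q_1) = \deg(R_0) - \deg(R_1) = \deg(B) - \deg(R_1)$, the middle equality being valid because $\deg(R_0) > \deg(R_1)$ and $R_1 \neq 0$ inside the loop.

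For the inductive step, fix $i \geq 3$ and assume the statement for $i-1$ and $i-2$. The recurrence gives $U_i = U_{i-2} - Q_{i-1}\, U_{i-1}$ with $Q_{i-1} = R_{i-2} \text{ quo } R_{i-1}$, so $\deg(Q_{i-1}) = \deg(R_{i-2}) - \deg(R_{i-1})$ since $\deg(R_{i-2}) > \deg(R_{i-1})$ and $R_{i-1} \neq 0$. Combining this with the inductive hypothesis $\deg(U_{i-1}) = \deg(B) - \deg(R_{i-2})$ yields $\deg(Q_{i-1}\, U_{i-1}) = \deg(B) - \deg(R_{i-1})$. The inductive hypothesis for $i-2$ gives $\deg(U_{i-2}) = \deg(B) - \deg(R_{i-3})$, which is $\geq 0$ because $\deg(R_{i-3}) \leq \deg(R_0) = \deg(B)$; moreover $\deg(R_{i-3}) > \deg(R_{i-1})$ by strict monotonicity over two steps, hence $\deg(U_{i-2}) < \deg(Q_{i-1}\, U_{i-1})$.

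The only point that has to be made explicit is that, because $U_{i-2}$ and $Q_{i-1}\, U_{i-1}$ have distinct degrees, their leading terms cannot cancel in the difference; therefore $\deg(U_i) = \max\!\big(\deg(U_{i-2}),\, \deg(Q_{i-1}\, U_{i-1})\big) = \deg(B) - \deg(R_{i-1})$, completing the induction. This no-cancellation observation, which relies entirely on the strict decrease of the $\deg(R_j)$, is the main (and essentially only) subtlety; everything else is routine bookkeeping with the recurrence and \Cref{thm:RiIsDecreasing}. An alternative route would be to derive the companion identities $\deg(V_i) = \deg(A) - \deg(R_{i-1})$ in parallel and invoke \Cref{thm:bezoutFactorsCoprime}, but the direct induction above is shorter and self-contained.
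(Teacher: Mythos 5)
Your induction is correct, and it is essentially the standard textbook argument. Note, though, that the paper does not prove this lemma at all: it cites it directly from \citet[Lem~7.1]{bostan2017algorithmes}, so there is no in-paper proof to compare against. Your argument is the one found in the cited reference and in \citet[Ch.~11]{gathen_modern_2013}: read the degrees off the recurrence $U_i = U_{i-2} - Q_{i-1}U_{i-1}$, use $\deg(Q_{i-1}) = \deg(R_{i-2}) - \deg(R_{i-1})$, and observe that the two summands have distinct degrees so no leading-term cancellation can occur. The bookkeeping in your base cases ($U_1 = 1$, $R_0 = B$; $U_2 = -Q_1$) and the strict-monotonicity chain $\deg(R_{i-3}) > \deg(R_{i-1})$ guaranteeing $\deg(U_{i-2}) < \deg(Q_{i-1}U_{i-1})$ are all right, and you correctly isolate the no-cancellation point as the one nontrivial observation. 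Your alternative suggestion via the $\deg(V_i)$ companion identity and \Cref{thm:bezoutFactorsCoprime} would also work but, as you say, is less direct.
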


    \noindent Every row of the Extended Euclidean Algorithm leads to rational
    reconstruction of $A$ modulo $B$.

    \begin{remark} \label{thm:egcdGivesRatRecontrs}
      For each $i \geq 0$, $(U_i,R_i)$ is a rational reconstruction of
      $A$ modulo $B$.
    \end{remark}

\section{The Algorithm}
\label{sec:algorithm}

One of the drawbacks of \Cref{alg:common} and its variants is that
they rely on the computation of the kernels of many Hankel matrices
and they ignore the particular structure that it is present in all of
them. Using \Cref{lem:rankOfDecomposition}, we can skip many
calculations by computing only two vectors, $v$ and $w$
(\Cref{prop:defVandW}). This is the main idea behind
\Cref{alg:algorithmDecomp} that leads to a softly-linear arithmetic
complexity bound (\Cref{sec:arithmetic_complexity}).

\Cref{alg:algorithmDecomp} performs as follows: First,
\cref{stp:getVandW} computes the kernel polynomials $P_v$ and $P_w$
which, by \Cref{prop:polInKernel}, allow us to obtain the kernel
polynomials of all the Hankel matrices
(see~\Cref{sub:computingVandW}).
  Then, \cref{stp:choosingQ} computes a square-free kernel polynomial
  of the minimum degree $r$ (see~\Cref{sub:computing_a_square_free}).
  Next, \cref{stp:solveLambdas} computes the
  coefficients $\lambda_1, \dots, \lambda_r$
  (see~\Cref{sec:solving_the_lambdas}).
  Finally, \cref{stp:return} recovers a
  decomposition for the original binary form.

  Let $f$ be a binary form as in \Cref{eq:bf} and let
  $\{H^{k}\}_{1\leq k \leq D}$ be its corresponding family of Hankel
  matrices (see \Cref{def:hankelFamily}).
  The next well-known lemma establishes the rank of $f$.
  
  \begin{lemma}
    \label{lem:rankOfDecomposition}
    Assume $f$, $\{H^k\}_k$, $N_1$ and $N_2$ of
    \Cref{prop:existenceN1andN2}, and $v$ and $w$ of
    \Cref{prop:defVandW}. 
    If $P_v$ (\Cref{prop:polInKernel}) is square-free then the rank
    of $f$ is $N_1+1$, else, it is $N_2+1$.
  \end{lemma}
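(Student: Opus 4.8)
The plan is to combine Sylvester's theorem (\Cref{theo:Silv1851}) with the structural description of the kernel polynomials (\Cref{prop:polInKernel}) and the fact that the rank is realized by the \emph{minimal} degree of a square-free kernel polynomial. Concretely, by \Cref{theo:Silv1851} the rank $r$ of $f$ equals the smallest $k$ for which $H^k$ has a square-free kernel polynomial $P_c$ with $c \ne 0$; indeed, any such $P_c$ of degree $r$ yields a decomposition with $r$ summands, and conversely any decomposition of length $r$ produces (via the apolarity direction of Sylvester's theorem) a square-free $P_c \in \ker(H^r)$ of degree exactly $r$. So the whole lemma reduces to: \emph{what is the minimal degree of a square-free kernel polynomial of the family $\{H^k\}$?}

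First I would use \Cref{prop:polInKernel} to enumerate the kernel polynomials by degree. For $k \le N_1$ the kernel is trivial, so no candidate exists there. For $N_1 < k \le N_2$ the kernel polynomials of $H^k$ are exactly $\{P_\mu \cdot P_v : \mu \in \field^{k-N_1}\}$, i.e.\ multiples of $P_v$ by a binary form of degree $k-N_1-1$; such a product is square-free only if $k - N_1 - 1 = 0$ (so the cofactor is a constant) \emph{and} $P_v$ itself is square-free, in which case the kernel polynomial is (a scalar multiple of) $P_v$, of degree $N_1+1$. Any larger cofactor forces a repeated factor, hence is not square-free. So: if $P_v$ is square-free, the minimal degree of a square-free kernel polynomial is $N_1+1$, giving $\rk(f) = N_1+1$.

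Next I would handle the case where $P_v$ is \emph{not} square-free. Then no square-free kernel polynomial occurs for any $k \le N_2$: for $N_1 < k \le N_2$ every kernel polynomial is a multiple of the non-square-free $P_v$, hence shares its repeated factor. For $N_2 < k \le D$, the kernel polynomials are $P_\mu P_v + P_\rho P_w$ with $\deg P_\mu = k - N_1 - 1$, $\deg P_\rho = k - N_2 - 1$; since $P_v, P_w$ are coprime (\Cref{prop:PvPwCoprimes}), a genericity/dimension argument shows such a combination can be made square-free, and the smallest $k$ for which a square-free one exists is $k = N_2+1$ (where $P_\rho$ is a nonzero constant and $P_\mu$ ranges over degree $N_2 - N_1$), giving a kernel polynomial of degree $N_2+1$. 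One must check that a square-free combination genuinely exists at $k = N_2+1$: the map $(\mu,\rho)\mapsto P_\mu P_v + P_\rho P_w$ is linear and, because $\gcd(P_v,P_w)=1$, its image is not contained in the (proper, Zariski-closed) locus of binary forms of degree $N_2+1$ with a repeated root — e.g.\ one can pick $\rho$ constant and note that varying $\mu$ moves the roots, or invoke that a generic element of a linear system without a fixed multiple component is reduced. Hence $\rk(f) = N_2+1$ in this case.

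The main obstacle is the last point: rigorously arguing that a \emph{square-free} kernel polynomial of degree exactly $N_2+1$ exists when $P_v$ is not square-free (and not merely that the degree is $\ge N_2+1$). This needs the coprimality of $P_v$ and $P_w$ together with an argument that the linear family $\{P_\mu P_v + P_\rho P_w\}$ at level $k=N_2+1$ is not entirely contained in the discriminant hypersurface — i.e.\ a standard "generic member of a base-point-free-ish linear system is reduced" argument over the infinite field $\field$. The remaining steps (the reduction to minimal square-free kernel polynomial degree via Sylvester, and the $k \le N_2$ analysis) are routine given \Cref{theo:Silv1851}, \Cref{prop:polInKernel}, and \Cref{prop:PvPwCoprimes}.
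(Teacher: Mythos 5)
Your proposal is correct and follows essentially the same route as the paper: reduce the rank to the minimal degree of a square-free kernel polynomial via Sylvester's theorem, use \Cref{prop:existenceN1andN2} and \Cref{prop:polInKernel} to rule out degrees $\le N_1$ and (when $P_v$ is not square-free) degrees $\le N_2$, and then invoke coprimality of $P_v,P_w$ together with a genericity argument to produce a square-free kernel polynomial of degree $N_2+1$ — which is exactly what the paper delegates to \Cref{theo:nonZeroDiscriminant}. One small overclaim you should drop: the assertion that ``any larger cofactor forces a repeated factor'' is false in general (a square-free $P_\mu$ coprime to a square-free $P_v$ gives a square-free product), but it is also unnecessary — the argument only needs that a non-square-free $P_v$ contaminates every multiple, and that when $P_v$ is square-free the minimum degree $N_1+1$ is already realized by $P_v$ itself.
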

  
  \begin{proof}
    By \Cref{prop:existenceN1andN2}, for $k < N_1 + 1$, the kernel of
    $H^k$ is trivial. Hence, by Sylvester's theorem
    (\Cref{theo:Silv1851}), there is no decomposition with a rank
    smaller than $N_1+1$.
    Recall that $v \in \ker(H^{N_1+1})$.
    So, if $P_v$ is
    square-free, by Sylvester's theorem, there is a decomposition
    of rank $N_1+1$.
          
    Assume $P_v$ is not square-free.
    For $N_1+1 \leq k \leq N_2$,~$P_v$ divides all the kernel
    polynomials of the matrices $H^k$ (\Cref{prop:polInKernel}).
    Therefore, none of them is square-free, and so 
    the rank is at least $N_2 +1$.
    
    By \Cref{prop:PvPwCoprimes},
    $P_v$ and $P_w$ are coprime. 
    So, there is a polynomial
    $P_{\mu}$ of degree $N_2-N_1$ such that $Q_{\mu} := P_v\*P_{\mu} + P_w$ is
    square-free. A formal proof of this appears in
    \Cref{theo:nonZeroDiscriminant}. 
    By \Cref{prop:polInKernel},
    $Q_{\mu}$ is a square-free kernel polynomial of degree $N_2+1$.
    Consequently, by Sylvester's theorem, there is a decomposition
    with rank $N_2+1$.
  \end{proof}

  For alternative proofs of \Cref{lem:rankOfDecomposition} we refer to 
  \cite[Theorem~B]{helmke1992waring}, \cite[Section~3]{comas2011rank},
  \citep{bernardi2011computing}, or
  \cite[Section~4]{carlini_symmetric_2018}.
   
  To relate \Cref{lem:rankOfDecomposition} with the theory of binary
  form decomposition, we recall that the decompositions are identified
  with the square-free polynomials in the annihilator of the ideal
  $\langle f \rangle$ \citep{kung1984invariant};\cite[Chapter
  1]{iarrobino1999power}.  All the kernel polynomials of $\{H^k\}_k$
  belong to the annihilator of $\langle f \rangle$ and they form an
  ideal. If $f$ is a binary form of degree $D = 2k$ or $2k+1$, then
  this ideal is generated by two binary forms of degrees $\rk(H^{k})$
  and $D + 2 - \rk(H^{k})$, with no common zeros
  \cite[Theorem~1.44]{iarrobino1999power}. These are the polynomials
  $P_v$ and $P_w$.  Using this interpretation \Cref{alg:common} and
  its variants compute a (redundant) generating set of the annihilator
  of $\langle f \rangle$, while \Cref{alg:algorithmDecomp} computes a
  (minimal) basis.

  \begin{algorithm}[h]
    \caption{\textsc{FastDecomp}}
    \label{alg:algorithmDecomp}
    \begin{algorithmic}
      
      \REQUIRE A binary form $f(x, y)$ of degree $D$

      \ENSURE A decomposition for $f(x, y)$ of rank $r$.
      
      \begin{enumerate}[leftmargin=*]

      \item \label{stp:getVandW} \textbf{Compute {\subsecit $\mathbf{P_v}$} and
          {\subsecit $\mathbf{P_w}$} of $\mathbf{\{H^k_{\emph{a}}\}_k}$}

        We use \Cref{alg:gettingVandW} with  $(a_0,\dots,a_D)$.

        \item \label{stp:choosingQ} \textbf{\textbf{IF} $\mathbf{P_v(x,y)}$ is square-free},
          
          \qquad $Q \longleftarrow P_v$
          
          \qquad $r \longleftarrow N_1 + 1$  \COMMENT{The \emph{rank} of the decomposition is the degree of $Q$}
          
        \item[] \textbf{ELSE}
          
           \qquad \textbf{Compute a square-free binary form $\mathbf{Q}$ }
           
           \qquad We compute a vector $\mu$ of length $(N_2-N_1+1)$,

           \qquad such
           that $(P_\mu \* P_v + P_w)$ is square-free
           (\Cref{sec:getting_mu}).

           \qquad $Q \longleftarrow P_\mu \* P_v + P_w$

           \qquad  $r \longleftarrow N_2 + 1$ \COMMENT{The \emph{rank} of the decomposition is the degree of $Q$}
           
        \item \label{stp:solveLambdas} \textbf{Compute the coefficients}
          $\mathbf{\lambda_1,\dots,\lambda_r}$

          Solve the system of \Cref{eq:gettingLambdas} where
          $Q(x,y) = \prod_{j = 1}^r (\beta_j x - \alpha_j y)$.

        For details and the representation of $\lambda_j$, see \Cref{sec:solving_the_lambdas}.

        \item \label{stp:return} \textbf{Return}
        $ f(x, y) =  \sum_{j=1}^{r} \lambda_j ( \alpha_j x + \beta_j y )^D$
    \end{enumerate}
  \end{algorithmic}
  \end{algorithm} 

  \subsection[Computing the polynomials Pv and Pw]{Computing the polynomials
    {\subsecit $P_v$} and {\subsecit $P_w$}}
  \label{sub:computingVandW}

  We use \Cref{thm:vectInKernelToRationalApprox} and
  \Cref{thm:fromRationalToKernelPolynomial} to compute the polynomials
  $P_v$ and $P_w$ from \Cref{prop:polInKernel} as a rational
  reconstruction of $A := \sum_{i = 0}^D a_i x^i$ modulo $x^{D+1}$.
  Our algorithm exploits the Extended Euclidean Algorithm in a similar
  way as \cite{cabay1986algebraic} do to compute scaled Pad\'e
  fractions.
  
  In the following, let $v$ be the vector of \Cref{prop:defVandW},
  consider $U_v := P_v(1,x)$ and $R_v \in \K[x]$ as the remainder of
  the division of $(A \cdot P_v(1,x))$ by $x^{D+1}$. Note that the
  polynomial $R_v$ is the unique polynomial of degree strictly smaller
  to $N_1 + 1$ such that $A \cdot P_v(1,x) \equiv R_v \mod x^{D+1}$,
  see \Cref{eq:vectInKernToReconstr}.

  \begin{lemma} \label{thm:rationalReconstrOfLowDegree}
    If $(U,R)$ is a rational reconstruction of $A$ modulo $x^{D+1}$ such
    that $\max(\deg(U), \linebreak \deg(R) + 1) \leq N_2$, then there is a
    polynomial $Q \in \K[x]$ such that
    $Q \cdot P_v(x,1) = U$ and $Q \cdot R_v = R$.
  \end{lemma}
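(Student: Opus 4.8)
The plan is to transport the claim from the univariate ``rational reconstruction'' picture back to the binary-form / Hankel-kernel picture, where \Cref{prop:polInKernel} already describes all kernel polynomials explicitly. Given a rational reconstruction $(U,R)$, the hypothesis $\max(\deg U,\deg R+1)\le N_2$ says that the kernel vector attached to $(U,R)$ sits in $\ker(H^m_a)$ for an index $m\le N_2$, and \Cref{prop:polInKernel} then forces the associated kernel polynomial to be a $P_v$-multiple. Evaluating that multiplicativity at $x\mapsto(1,x)$ produces $U=Q\cdot P_v(1,x)=Q\cdot U_v$, and a single computation modulo $x^{D+1}$ upgrades it to the companion identity $R=Q\cdot R_v$; the only genuinely delicate point is making the degree bounds tight enough for that last congruence to be an equality.

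Here is how I would carry it out. First dispose of the degenerate case $U=0$: then $A\cdot U\equiv R\pmod{x^{D+1}}$ and $\deg R\le N_2-1<D+1$ force $R=0$, so $Q=0$ works (we may also harmlessly assume $f\neq0$, so that $m\ge1$ below). Now assume $U\neq0$ and put $m:=\max(\deg U,\deg R+1)\le N_2$. By \Cref{thm:fromRationalToKernelPolynomial} there is $\omega\in\ker(H^m_a)$ with $P_\omega=U(y/x)\,x^m$ (and $\deg P_\omega=m$ by \Cref{thm:degreeKernelPol}); since $U\neq0$ we have $\omega\neq0$, hence $\ker(H^m_a)\neq\{0\}$, hence $m>N_1$ by \Cref{prop:existenceN1andN2}. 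With $N_1<m\le N_2$ in hand, \Cref{prop:polInKernel} supplies a vector $\mu\in\K^{m-N_1}$ such that $P_\omega=P_\mu\cdot P_v$ as binary forms.

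Then I set $Q:=P_\mu(1,x)\in\K[x]$; since $P_\mu$ is a binary form of total degree $m-N_1-1$, we have $\deg Q\le m-N_1-1$. Evaluating the identity $P_\omega=P_\mu\cdot P_v$ at $(1,x)$ and using $P_\omega(1,x)=U$ (read off directly from $P_\omega=U(y/x)\,x^m$) gives $U=Q\cdot P_v(1,x)=Q\cdot U_v$. For the $R$-part, multiply the defining congruence $A\cdot U_v\equiv R_v\pmod{x^{D+1}}$ by $Q$ and substitute $Q\cdot U_v=U$ and $A\cdot U\equiv R\pmod{x^{D+1}}$ to obtain $Q\cdot R_v\equiv R\pmod{x^{D+1}}$. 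Finally compare degrees: $\deg R\le m-1\le N_2-1\le D-1$, and $\deg(Q\cdot R_v)\le(m-N_1-1)+N_1=m-1\le D-1$ because $\deg R_v\le N_1$; two polynomials of degree $<D+1$ that agree modulo $x^{D+1}$ are equal, so $R=Q\cdot R_v$.

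The step I expect to be the real obstacle is the very first one: certifying that $m=\max(\deg U,\deg R+1)$ is a \emph{bona fide} column-index of a Hankel matrix in the family whose kernel is governed by \Cref{prop:polInKernel}. This is precisely where the hypothesis $\max(\deg U,\deg R+1)\le N_2$ is consumed, and it is also exactly the inequality that makes the closing degree count $\deg R,\ \deg(Q\,R_v)\le m-1\le N_2-1\le D-1$ go through. Once $m\le N_2$ is secured, everything else is careful but routine translation among the three encodings of the same object (coefficient vector, binary form, univariate polynomial).
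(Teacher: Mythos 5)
Your proof is correct and takes essentially the same route as the paper: pass from the rational reconstruction to a kernel vector via \Cref{thm:fromRationalToKernelPolynomial}, use $N_1 < m \leq N_2$ together with \Cref{prop:polInKernel} to factor $P_\omega = P_\mu \cdot P_v$, dehomogenize to get $Q$ and the identity $U = Q\,U_v$, then transfer the congruence and close with a degree count. The only differences are cosmetic improvements: you explicitly handle the $U=0$ case, and your final degree bound $\deg(Q\,R_v)\leq m-1\leq N_2-1$ uses the degree of $Q$ directly rather than the paper's slightly looser $\deg(Q\,R_v)\leq\deg(U_v\,Q\,R_v)\leq N_1+N_2=D$; both suffice.
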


    \begin{proof}
      Let $k : =\deg(U)$ and $q := \deg(R)$. By
      \Cref{thm:fromRationalToKernelPolynomial}, there is a non-zero
      vector \linebreak $\omega \in \ker(H^{\max(k,q +1)}_a)$ such
      that the kernel polynomial $P_\omega$ is equal to
      $U\!\left( \frac{y}{x} \right) x^{\max(k,q+1)}$. Hence,
      $\ker(H^{\max(k,q +1)}_a) \neq 0$ and so, by
      \Cref{prop:defVandW}, $N_1 < \max(k,q +1)$.  We assume that
      $\max(k,q+1) \leq N_2$, hence the degree of $P_\omega$ is
      smaller or equal to $N_2$ and, by \Cref{prop:polInKernel},
      $P_\omega$ is divisible by $P_v$. Therefore, there is a
      polynomial $\bar{Q} \in \K[x,y]$ such that
      $\bar{Q} P_v = P_\omega$.
      Let $Q \,:= \bar{Q}(1,x)$. By definition, $U_v = P_v(1,x)$ and
      $U = P_\omega(1,x)$, so $U = Q \, U_v$.
      Hence, $Q \, R_v \equiv R \mod x^{D+1}$, because
      $R_v \equiv U_v \, A \mod x^{D+1}$ and
      $Q \,R_v \equiv Q \, U_v \, A \equiv U A \equiv R \mod x^{D+1}$.
      If the degrees of $(Q \, R_v)$ and $R$ are smaller than $D + 1$,
      then $Q \,R_v = R$, as we want to prove. By assumption,
      $\deg(R) < N_2 \leq D$ and $\deg(U_v \, Q) = \deg(U) \leq N_2$.
      By definition, the degree of $R_v$ is smaller of equal to $N_1$,
      and so
      $\deg(Q \, R_v) \leq \deg(U_v \, Q \, R_v) \leq N_2 + N_1 = D$
      (\Cref{prop:existenceN1andN2}).
    \end{proof}

    We can use this lemma to recover the polynomial $P_v$ from certain
    rational reconstructions.
    
    \begin{corollary}
      If $(U,R)$ is a rational reconstructions of $A$ modulo $x^{D+1}$
      such that $\max(\deg(U), \linebreak \deg(R) + 1) \leq N_2$ and
      for every polynomial $Q$ of degree strictly bigger than zero
      that divides $U$ and $R$, $(\frac{U}{Q},\frac{R}{Q})$ is not a
      rational reconstruction of $A$ modulo $x^{D+1}$, then there is a
      non-zero constant $c$ such that
      $P_v = c \cdot U \left( \frac{y}{x} \right)
      x^{\max(\deg(U),\deg(R)+1)}$ (\Cref{prop:polInKernel}). In
      particular, $$N_1 = \max(\deg(U) - 1, \deg(R)).$$
    \end{corollary}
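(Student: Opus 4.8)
The plan is to build directly on \Cref{thm:rationalReconstrOfLowDegree}. By that lemma, the hypothesis $\max(\deg(U),\deg(R)+1)\le N_2$ guarantees a polynomial $Q_0\in\K[x]$ with $Q_0\cdot P_v(x,1)=U$ and $Q_0\cdot R_v=R$. The first step is to rule out $\deg(Q_0)>0$: if $Q_0$ had positive degree, then $(U/Q_0,R/Q_0)=(P_v(x,1),R_v)$ would be a pair, and I would check it is a rational reconstruction of $A$ modulo $x^{D+1}$. Indeed $R_v$ is by construction the remainder of $A\cdot P_v(1,x)$ divided by $x^{D+1}$, so $A\cdot P_v(1,x)\equiv R_v\bmod x^{D+1}$; this is exactly the statement that $(P_v(1,x),R_v)$ — i.e., $(U/Q_0,R/Q_0)$ after the $x\leftrightarrow y$-swap bookkeeping — is a rational reconstruction. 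That contradicts the second hypothesis on $(U,R)$. Hence $Q_0$ is a non-zero constant, say $Q_0=1/c$, so $U=P_v(x,1)/c$ and $R=R_v/c$, i.e. $P_v(x,1)=c\,U$ up to the obvious homogenization.

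The second step is the passage back to the homogeneous polynomial $P_v(x,y)$. Using \Cref{thm:fromRationalToKernelPolynomial} applied to the rational reconstruction $(U,R)$, there is $\omega\in\ker(H^{\max(\deg(U),\deg(R)+1)}_a)$ with $P_\omega = U(y/x)\,x^{\max(\deg(U),\deg(R)+1)}$; combining with $U=P_v(x,1)/c$ this gives $P_v = c\cdot U(y/x)\,x^{\max(\deg(U),\deg(R)+1)}$, which is the claimed formula. (One can equivalently just homogenize the univariate identity $P_v(x,1)=c\,U$ directly, keeping track of the power of $x$ needed; \Cref{thm:degreeKernelPol} tells us the correct exponent is $\max(\deg(U),\deg(R)+1)$, matching the degree $N_1+1$ of $P_v$.)

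The third step is the degree count $N_1=\max(\deg(U)-1,\deg(R))$. Since $P_v\in\K^{N_1+2}$, its associated homogeneous form has degree $N_1+1$; by \Cref{thm:degreeKernelPol} the degree of $P_\omega=U(y/x)x^{\max(\deg(U),\deg(R)+1)}$ is exactly $\max(\deg(U),\deg(R)+1)$. Equating, $N_1+1=\max(\deg(U),\deg(R)+1)$, hence $N_1=\max(\deg(U)-1,\deg(R))$.

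The main obstacle is the first step — showing $Q_0$ must be constant and, in particular, handling the bookkeeping between the "univariate reversal" conventions $U(y/x)x^{\deg}$ and the plain modular identity $A\cdot P_v(1,x)\equiv R_v\bmod x^{D+1}$. One has to be careful that dividing out a common factor $Q$ of $U$ and $R$ really does produce another rational reconstruction of the \emph{same} $A$ modulo $x^{D+1}$ (it does, since $A\cdot (U/Q)\equiv R/Q$ follows from $A\cdot U\equiv R$ only after arguing the congruence survives cancellation — but here $R_v$ has degree $<N_1+1\le N_2\le D$, so all the polynomials in play have degree $<D+1$ and the cancellation is an honest polynomial identity, not merely a congruence). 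Once that degree bookkeeping is pinned down, the contradiction with the second hypothesis is immediate and the rest is routine.
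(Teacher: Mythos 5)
Your argument follows the same route as the paper's proof: apply \Cref{thm:rationalReconstrOfLowDegree} to produce a polynomial factor $Q_0$ with $Q_0\cdot(U_v,R_v)=(U,R)$, observe that $(U_v,R_v)$ is itself a rational reconstruction (the paper cites \Cref{thm:vectInKernelToRationalApprox} for this, while you invoke the defining modular identity $A\cdot P_v(1,x)\equiv R_v \bmod x^{D+1}$ directly — the same fact), conclude by the minimality hypothesis that $\deg Q_0=0$, and read off $N_1+1=\deg P_v=\max(\deg U,\deg R+1)$. The extended discussion in your final paragraph about whether ``cancellation survives the congruence'' is unnecessary: you have already identified $(U/Q_0,R/Q_0)$ with the concrete pair $(U_v,R_v)$, which is a rational reconstruction outright, so no separate cancellation argument is needed.
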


    \begin{proof}
      By \Cref{thm:rationalReconstrOfLowDegree}, there is a
      $Q \in \K[x]$ such that $Q \cdot (U_v,R_v) = (U,R)$. By
      \Cref{thm:vectInKernelToRationalApprox}, $(U_v,R_v)$ is a
      rational reconstruction, and so $\deg(Q) = 0$. Hence,
      $N_1 + 1 = \deg(P_v) = \max(\deg(U), \linebreak \deg(R)+1)$ and
      $Q \cdot P_v(1,\frac{y}{x}) x^{N_1+1} = U(\frac{y}{x}) x^{N_1+1}$.
    \end{proof}

    If $(U,R)$ is a rational reconstruction of $A$ modulo $x^{D+1}$
    such that $\deg(U) + \deg(R) \leq D$ and $U(0) = 1$, then
    $\frac{R}{U}$ is the Pad\'e approximant of $A$ of type
    $(\deg(R), \deg(U))$ \cite[Section~7.1]{bostan2017algorithmes}.
    When this Pad\'e approximant exists, it is unique, meaning that
    for any rational reconstruction with this property the quotient
    $\frac{R}{U}$ is unique (we can invert $U \!\!\!\! \mod x^{D+1}$ because
    $U(0) = 1$).
    When $N_1 < N_2$, we have that $\frac{D + 1}{2} \leq N_2$
    (\Cref{prop:existenceN1andN2}) and so, if the the Pad\'e
    approximant of $A$ of type $(\frac{D+1}{2} - 1, \frac{D+1}{2})$
    exists, by \Cref{thm:rationalReconstrOfLowDegree}, we can recover
    $P_v$ from it. The existence of this Pad\'e approximant is
    equivalent to the condition $U_v\!(0) = 1$, which means
    $v_{N_1+1} = 1$.
    In the algorithm proposed in the conference version of this paper
    \cite[Algorithm~3]{bender2016superfast}, the correctness of our
    algorithms relied on this condition. In that version, we ensured
    this property with a generic linear change of coordinates in the
    original polynomial $f$.
    In this paper, we skip this assumption. Following
    \citet[Theorem~7.2]{bostan2017algorithmes}, when $N_1 < N_2$, we can
    compute $v$ no matter the value of $v_{N_1+1}$. This approach has
    a softly-linear arithmetic complexity and involves the computation
    of a row of the $\egcd$ of $A$ and $x^{D+1}$. We can compute $P_w$
    from a consecutive row.

    Before going into the proof, we study the case $N_1 = N_2$. In
    this case, there are not rational reconstructions with the
    prerequisites of \Cref{thm:rationalReconstrOfLowDegree}, and so we
    treat this case in a different way.

    \begin{lemma} \label{thm:computingVwhenN1equalsN2}

      If $N_1 = N_2$, then there is a unique rational decomposition $(U,R)$
      such that $\deg(U) \leq \frac{D}{2}$, $\deg(R) \leq \frac{D}{2}$
      and $R$ is monic. In particular, $\deg(R) = \frac{D}{2}$ and we
      can consider the kernel polynomial related to
      $v \in \ker(H^{N_1+1})$ (\Cref{prop:defVandW}) as
      $P_v = U\!\left( \frac{y}{x} \right) x^{\frac{D}{2}+1}$.
    \end{lemma}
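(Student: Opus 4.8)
The plan is to read $(U,R)$ off the kernel of $H^{D/2+1}_a$, using the dictionary between this kernel and the rational reconstructions of $A$ modulo $x^{D+1}$ supplied by \Cref{thm:vectInKernelToRationalApprox,thm:fromRationalToKernelPolynomial}. First observe that $N_1=N_2$ forces $D$ to be even with $N_1=N_2=D/2$ (by \Cref{prop:existenceN1andN2}), so $\ker(H^{D/2}_a)=\{0\}$ and $\dim\ker(H^{D/2+1}_a)=2$; moreover $v$ and $w$ of \Cref{prop:defVandW} both lie in $\ker(H^{D/2+1}_a)$ and form a basis of it.

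The first step is to show that a pair $(U,R)\neq(0,0)$ with $\deg(U)\le D/2$ and $\deg(R)\le D/2$ is a rational reconstruction of $A$ modulo $x^{D+1}$ precisely when it equals $(P_\omega(1,x),R_\omega)$ for some nonzero $\omega\in\ker(H^{D/2+1}_a)$ with $\omega_0=0$, where $R_\omega$ denotes the remainder polynomial produced in \Cref{thm:vectInKernelToRationalApprox}. For the forward direction, set $m=\max(\deg(U),\deg(R)+1)\le D/2+1$; \Cref{thm:fromRationalToKernelPolynomial} yields $\omega\in\ker(H^m_a)$ with $P_\omega=U(y/x)\,x^m$, and since $\ker(H^m_a)=\{0\}$ for $m\le D/2=N_1$ while $(U,R)\neq(0,0)$, necessarily $m=D/2+1$ and hence $\deg(R)=D/2$ (because $\deg(U)\le D/2<m$). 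Comparing degrees and residues modulo $x^{D+1}$ gives $P_\omega(1,x)=U$ and $R_\omega=R$, and $\deg(U)\le D/2$ is equivalent to the vanishing of the top coordinate $\omega_0$ (equivalently $x\mid P_\omega$). The converse direction is \Cref{thm:vectInKernelToRationalApprox} together with the routine identity $P_\omega=U(y/x)\,x^{D/2+1}$ valid when $\omega_0=0$. In particular every such reconstruction automatically has $\deg(R)=D/2$.

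The second step counts the solutions. The linear functional $\omega\mapsto\omega_0$ on the $2$-dimensional space $\ker(H^{D/2+1}_a)$ has kernel of dimension at least $1$, and exactly $1$: otherwise $v_0=w_0=0$, i.e. $x\mid P_v$ and $x\mid P_w$, contradicting $\gcd(P_v,P_w)=1$ (\Cref{prop:PvPwCoprimes}). Fix a nonzero $\omega^\star$ spanning this line. By the first step the associated reconstruction has remainder of degree exactly $D/2$; dividing through by that remainder's leading coefficient produces a pair $(U,R)$ with $R$ monic of degree $D/2$, which proves existence and the assertion $\deg(R)=D/2$. Any $(U,R)$ meeting the hypotheses corresponds, again by the first step, to a nonzero vector on the same line, hence to a scalar multiple of $\omega^\star$, hence to a scalar multiple of our $(U,R)$; the monic normalization pins the scalar down, giving uniqueness. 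Finally, the rescaled vector $v:=\omega^\star/(\text{leading coefficient of }R_{\omega^\star})$ is a nonzero element of $\ker(H^{N_1+1}_a)$, hence a legitimate choice of the vector $v$ of \Cref{prop:defVandW}, and by the identity above $P_v=U(y/x)\,x^{D/2+1}$.

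I expect the main obstacle to be the degree bookkeeping of the first step: arguing that a nontrivial reconstruction with $\deg(U),\deg(R)\le D/2$ is forced into $\ker(H^{D/2+1}_a)$ with $\deg(R)$ exactly $D/2$ — where the triviality of $\ker(H^{D/2}_a)$ is the key ingredient — while the uniqueness part rests entirely on the coprimality of $P_v$ and $P_w$, needed to exclude the degenerate situation in which the constraint $\omega_0=0$ carries no information.
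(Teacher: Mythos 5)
Your proof is correct, but it takes a different route from the paper's. You reinterpret the degree constraints as carving out the line $\{\omega\in\ker(H^{D/2+1}_a):\omega_0=0\}$ inside the two-dimensional space $\ker(H^{D/2+1}_a)$, and invoke the coprimality of $P_v$ and $P_w$ (\Cref{prop:PvPwCoprimes}) to guarantee this line is proper, since otherwise $x$ would divide both. The paper instead translates $A\,U\equiv R\bmod x^{D+1}$ directly into the square linear system $H^{N_1}(u_{N_1},\dots,u_0)^\transpose=(r_{N_1},0,\dots,0)^\transpose$, notes that $H^{N_1}$ is invertible because $\ker(H^{N_1})=\{0\}$ (\Cref{prop:defVandW}), and reads off uniqueness (after normalizing $r_{N_1}=1$) and nontriviality ($r_{N_1}\neq0$ iff $U\neq0$) in one stroke. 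Your version leans on more of the established Hankel structure theory (\Cref{prop:defVandW,prop:PvPwCoprimes,thm:vectInKernelToRationalApprox,thm:fromRationalToKernelPolynomial}); the degree bookkeeping you flagged as the main obstacle, namely that any nontrivial small-degree reconstruction is forced to have $\deg R$ exactly $D/2$, is exactly what the paper obtains for free from the invertibility of $H^{N_1}$. The paper's argument is more elementary and self-contained, but the two are mathematically interchangeable.
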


    \begin{proof}
      First note that, as $D = N_1 + N_2$
      (\Cref{prop:existenceN1andN2}), then $N_1 = \frac{D}{2}$.
      Following \Cref{eq:polMultiplication}, if we write
      $U = \sum_{i = 0}^{N_1} u_i x^i$ and
      $R = \sum_{i = 0}^{N_1} u_i x^i$, then we get the linear system,
      \begin{align*}
        \begin{array}{c}
        \vphantom{
        \begin{pmatrix}
                 &         &            & a_{0}       \\
                 &         & a_0        & a_{1}       \\
                 & \iddots & \iddots    & \vdots     
               \end{pmatrix}
        } \\
        H^{N_1} \left\{
        \vphantom{
        \begin{pmatrix}
          a_0    & \cdots & a_{N_1-1} & a_{N_1}   \\
          a_1    & \cdots & a_{N_1} & a_{N_1+1}   \\
          \vdots & \iddots & \vdots  & \vdots  \\
          a_{D-N_1} & \cdots & a_{D-1} & a_{D} \\
        \end{pmatrix} 
        } \right.
        \end{array}
        \begin{pmatrix}
                 &         &            & a_{0}       \\
                 &         & a_0        & a_{1}       \\
                 & \iddots & \iddots    & \vdots      \\ \hline
          a_0    & \cdots & a_{N_1-1} & a_{N_1}   \\
          a_1    & \cdots & a_{N_1} & a_{N_1+1}   \\
          \vdots & \iddots & \vdots  & \vdots  \\
          a_{D-N_1} & \cdots & a_{D-1} & a_{D} \\
        \end{pmatrix} 
        \begin{pmatrix}
          u_{N_1} \\
          \vdots \\
          u_1 \\
          u_0 \\
        \end{pmatrix}
        =
        \begin{pmatrix}
          r_{0} \\
          \vdots \\
          r_{N_1 - 1} \\ \hline
          r_{N_1} \\
          0 \\
          \vdots \\
          0
        \end{pmatrix} .
      \end{align*}
      The matrix
      $H^{N_1} \in {\K}^{(D-\frac{D}{2}+1) \times (\frac{D}{2}+1)}$ is
      square and, as $\ker(H^{N_1}) = 0$ (\Cref{prop:defVandW}), it is
      invertible. If $r_{N_1} = 0$, that is $\deg(R_v) < N_1$, then
      the polynomial $U$ is zero and so $(U,R)$ is not a rational
      reconstruction. Hence, we can consider $\deg(U) = N_1$. If $R$
      is monic, then $r_{N_1} = 1$ and so we compute the coefficients
      of $U$ and $R$ as
      \begin{align*}
      \begin{pmatrix}
        u_{N_1} \\
        \vdots \\
        u_1 \\
        u_0 \\
        \end{pmatrix}
          =
          (H^{N_1+1})^{-1}
          \begin{pmatrix}
            1 \\
            0 \\
            \vdots \\
            0
          \end{pmatrix}
        , \quad
          \begin{pmatrix}
            r_{0} \\
            \vdots \\
            r_{N_1-1} \\
            1 
          \end{pmatrix}
          =
          \begin{pmatrix}
            &         &            & a_{0}       \\
            &         & a_0        & a_{1}       \\
            & \iddots & \iddots    & \vdots      \\
            a_0    & \cdots & a_{N_1-1} & a_{N_1}   \\
          \end{pmatrix} 
          \begin{pmatrix}
            u_{N_1} \\
            u_{N_1 - 1} \\
            \vdots \\
            u_0
          \end{pmatrix} .
      \end{align*}
    \end{proof}

    \begin{lemma}[{Correctness of \Cref{alg:gettingVandW}}] \label{thm:correctnessPvAndPw}
      Let $\{(U_j,V_j,R_j)\}_j$ be the set of triplets obtained from
      the Extended Euclidean Algorithm for the polynomials $A$ and
      $x^{D+1}$, see \Cref{sec:great-comm-divis}.  Let $i$ be the
      index of the first row of the extended Euclidean algorithm such
      that $\deg(R_i) < \frac{D+1}{2}$. Then, we can compute the
      polynomials $P_v$ and $P_w$ of \Cref{prop:polInKernel} as
      \begin{enumerate}[label=\textbf{(\Alph*)}]
      \item $P_v = U_i(\frac{x}{y}) \cdot x^{\max(\deg(U_i), \deg(R_i) + 1)}$.
      \item If $\deg(R_i) > \deg(U_i)$,
        $P_w = U_{i+1}(\frac{x}{y}) \cdot x^{\deg(U_{i+1})}$.
      \item If $\deg(R_i) \leq \deg(U_i)$,
        $P_w = U_{i-1}(\frac{x}{y}) \cdot x^{\deg(R_{i-1}+1)}$.
      \end{enumerate}
    \end{lemma}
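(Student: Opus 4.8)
The plan is to connect the rows of the Extended Euclidean Algorithm (\egcd) of $A$ and $x^{D+1}$ to rational reconstructions of $A$ modulo $x^{D+1}$, and then to invoke the results of Section~\ref{sec:rational-reconstruction} to identify the kernel polynomials $P_v$ and $P_w$. Let $\{(U_j,V_j,R_j)\}_j$ be the \egcd rows, so by \Cref{thm:egcdGivesRatRecontrs} each $(U_j,R_j)$ is a rational reconstruction of $A$ modulo $x^{D+1}$. Let $i$ be the first index with $\deg(R_i) < \frac{D+1}{2}$. By \Cref{thm:RiIsDecreasing} the degrees $\deg(R_j)$ strictly decrease, so $\deg(R_{i-1}) \geq \frac{D+1}{2}$, and by \Cref{thm:relationOfDegreesEGCD}, $\deg(U_i) = (D+1) - \deg(R_{i-1}) \leq (D+1) - \frac{D+1}{2} = \frac{D+1}{2}$. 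Thus $\max(\deg(U_i), \deg(R_i)+1) \leq \frac{D+1}{2} + \tfrac{1}{2}$, i.e. $\leq \lceil \tfrac{D+1}{2}\rceil$, which is at most $N_2$ whenever $N_1 < N_2$ (since $N_1 + N_2 = D$ forces $N_2 \geq \lceil D/2\rceil$). I would handle the edge case $N_1 = N_2$ separately by appealing to \Cref{thm:computingVwhenN1equalsN2}, checking that the row $i$ produces exactly the unique reconstruction described there (degree $\frac{D}{2}$ numerator, up to normalization).

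For part (A): with $\max(\deg(U_i),\deg(R_i)+1) \leq N_2$ in hand, I would first argue that $(U_i, R_i)$ is a \emph{reduced} reconstruction, i.e. no polynomial $Q$ of positive degree dividing both $U_i$ and $R_i$ yields another reconstruction $(U_i/Q, R_i/Q)$. This follows from \Cref{thm:bezoutFactorsCoprime}: $U_i$ and $V_i$ are coprime, hence any common factor $Q$ of $U_i$ and $R_i = U_i A + V_i x^{D+1}$ would divide $V_i x^{D+1}$, and since $\gcd(U_i, V_i) = 1$ it would divide $x^{D+1}$; but one then checks $\deg(U_i/Q) + 1 \leq N_1$ is impossible because $\ker(H^k) = 0$ for $k \leq N_1$, i.e. there is no nonzero reconstruction with the required degree drop. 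Once $(U_i, R_i)$ is reduced and satisfies the degree bound, the Corollary after \Cref{thm:rationalReconstrOfLowDegree} applies verbatim: $P_v = c\cdot U_i(\tfrac{y}{x}) x^{\max(\deg(U_i),\deg(R_i)+1)}$ for a nonzero constant $c$, and $N_1 = \max(\deg(U_i)-1, \deg(R_i))$. (Note the statement writes $U_i(\tfrac{x}{y})$; I would just follow the paper's convention consistently.) This also pins down whether $N_1 = \deg(U_i) - 1$ or $N_1 = \deg(R_i)$, which is exactly the case split "$\deg(R_i) > \deg(U_i)$" versus "$\deg(R_i) \leq \deg(U_i)$" appearing in (B) and (C).

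For parts (B) and (C): having identified $N_1$, I must identify a vector in $\ker(H^{N_2+1})$ not lying in the span of the U-chain of $v$ (then invoke \Cref{thm:PwNotInPv}). In case (B), $N_1 = \deg(R_i)$, and I claim the next row $(U_{i+1}, R_{i+1})$ realizes the jump in degree: by \Cref{thm:relationOfDegreesEGCD}, $\deg(U_{i+1}) = (D+1) - \deg(R_i) = (D+1) - N_1 = N_2 + 1$, and by \Cref{thm:egcdGivesRatRecontrs} combined with \Cref{thm:fromRationalToKernelPolynomial}, $(U_{i+1}, R_{i+1})$ gives a kernel vector $\omega \in \ker(H^{\max(\deg(U_{i+1}),\deg(R_{i+1})+1)})$. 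One checks $\max(\deg(U_{i+1}), \deg(R_{i+1})+1) = \deg(U_{i+1}) = N_2+1$ here (using $\deg(R_{i+1}) < \deg(R_i) = N_1 < N_2+1$), so $P_\omega$ has degree $N_2+1$; and since $P_v$ has degree $N_1+1$ while $P_\omega$ is a reconstruction that does not factor through $P_v$ (by coprimality of $U_{i+1}, V_{i+1}$ as in part (A), $P_\omega$ is not divisible by $P_v$ — if it were, dividing would produce a reconstruction of too-small degree, contradiction), \Cref{prop:polInKernel} forces $P_\omega \notin \{P_\mu P_v\}$, so \Cref{thm:PwNotInPv} lets us take it as $w$. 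In case (C), $N_1 = \deg(U_i) - 1$, so the degree jump happened at the previous row: $\deg(R_{i-1}) = (D+1) - \deg(U_i) = (D+1) - (N_1+1) = N_2$, hence $\max(\deg(U_{i-1}), \deg(R_{i-1})+1) = N_2 + 1$, and the same argument with $(U_{i-1}, R_{i-1})$ yields $P_w = U_{i-1}(\tfrac{x}{y})\, x^{\deg(R_{i-1})+1}$ as a kernel polynomial of degree $N_2+1$ not divisible by $P_v$.

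The main obstacle I anticipate is the bookkeeping of the degree inequalities at the boundary index $i$ — in particular carefully distinguishing the two sub-cases $\deg(R_i) > \deg(U_i)$ and $\deg(R_i) \leq \deg(U_i)$ and verifying in each that the relevant neighboring row (either $i+1$ or $i-1$) lands exactly on $H^{N_2+1}$ rather than overshooting or undershooting, together with cleanly handling the degenerate case $N_1 = N_2$ where the neighboring-row trick must be replaced by the explicit linear-system description of \Cref{thm:computingVwhenN1equalsN2}. The coprimality-based "reducedness" arguments are routine once set up, but they need to be stated carefully because they are what rules out spurious common factors and thereby guarantees we recover $P_v$ and $P_w$ on the nose (up to the unavoidable scalar).
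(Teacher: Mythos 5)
Your high-level plan matches the paper's (the same degree bookkeeping for the \egcd row $i$, the $N_1=N_2$ edge case via \Cref{thm:computingVwhenN1equalsN2}, \Cref{thm:rationalReconstrOfLowDegree} and its corollary for (A), \Cref{thm:PwNotInPv} for (B)/(C)), but the reducedness argument in (A) has a genuine gap, and the justification in (B)/(C) does not carry the weight you put on it.

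For (A) you must show that no $Q$ of positive degree dividing $U_i$ and $R_i$ makes $(U_i/Q,R_i/Q)$ a reconstruction modulo $x^{D+1}$. You deduce $Q\mid V_ix^{D+1}$ and then $Q\mid x^{D+1}$ from $\gcd(U_i,V_i)=1$, and try to finish with ``$\deg(U_i/Q)+1\le N_1$ is impossible.'' That inequality is not available: at this point you only know $\max(\deg U_i,\deg R_i+1)\le\lceil\tfrac{D+1}{2}\rceil$, which in general far exceeds $N_1$, so dividing out $Q$ need not bring the degree down to $N_1$; moreover $\max(\deg U_i,\deg R_i+1)=N_1+1$ is exactly the conclusion of (A), so invoking it here is circular. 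The missing step is to \emph{use} the assumption that $(U_i/Q,R_i/Q)$ is a reconstruction: it gives $\bar V$ with $(U_i/Q)A+\bar V x^{D+1}=R_i/Q$; multiplying by $Q$ and comparing with the row identity $U_iA+V_ix^{D+1}=R_i$ forces $V_i=Q\bar V$, so $Q$ divides both $U_i$ and $V_i$, hence $\deg Q=0$ by \Cref{thm:bezoutFactorsCoprime}. That is where the coprimality is actually spent, and your sketch is silent on it.

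For (B)/(C), ``by coprimality of $U_{i+1},V_{i+1}$'' does not support $P_v\nmid P_\omega$: coprimality of cofactors within one row says nothing about divisibility of the kernel polynomial at that row by the one at a different row. The paper instead does an $x$-divisibility count via \Cref{thm:degreeKernelPol}: when $\deg R_i>\deg U_i$, $x$ divides $P_v$ but not $P_\omega$; when $\deg R_i\le\deg U_i$, $x$ does not divide $P_v$ but $x^{N_2-N_1+1}$ divides $P_\omega$; either way no $P_\mu P_v$ can equal $P_\omega$. Your parenthetical ``dividing would produce a reconstruction of too-small degree'' gestures at a viable alternative (set $x=1$ in $P_\omega=P_\mu P_v$ to get $U_{i\pm1}(y)=P_\mu(1,y)\,U_i(y)$ and derive a degree mismatch in each subcase), but the quotient $P_\omega/P_v$ is not itself a kernel polynomial, so this has to be argued as a degree comparison, not a ``smaller reconstruction.'' As stated it is not a proof. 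Finally, beyond the $U_i(x/y)$ vs.\ $U_i(y/x)$ typo you already spotted, note that the paper's proof text also swaps the two subcases relative to the lemma statement; reconcile that when you write (B)/(C) in full.
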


    \begin{proof}
      \textbf{(A).}  First observe that if $i$ is the first index such
      that the degree of $R_i$ is strictly smaller than
      $\frac{D+1}{2}$, then, by \Cref{thm:RiIsDecreasing}, the degree
      of $R_{i-1}$ has to be bigger or equal to
      $\frac{D+1}{2}$. Hence, the degree of $U_i$ is smaller or equal
      to $\frac{D+1}{2}$, because by \Cref{thm:relationOfDegreesEGCD},
      $\deg(U_i) = D + 1 - \deg(R_{i-1}) \leq D + 1 - \frac{D+1}{2} =
      \frac{D+1}{2}$.
      We can consider $R_{i-1}$, that is $i$ is strictly bigger than
      $0$, because the degree of $R_0 = x^{D+1}$ is strictly bigger
      than $\frac{D+1}{2}$.
      
      If $N_1 = N_2$, then $D$ is even and $N_1 = \frac{D}{2}$
      (\Cref{prop:existenceN1andN2}). As
      $\lfloor\frac{D+1}{2}\rfloor = \frac{D}{2}$,
      $\deg(R_i) \leq \frac{D}{2}$ and $\deg(U_i) \leq
      \frac{D}{2}$. By \Cref{thm:computingVwhenN1equalsN2},
      $\max(\deg(U_i),\deg(R_i)+1) = N_1+1$ and we can consider $P_v$
      as $U_i(\frac{y}{x}) x^{N_1+1}$.
      
      If $N_1 < N_2$, assume that there is a non-zero $Q \in \K[x]$
      such that $Q$ divides $U_i$ and $R_i$ and
      $(\frac{U_i}{Q}, \frac{R_i}{Q})$ is a rational reconstruction of
      $A$ modulo $x^{D+1}$. Then,
      $\frac{U_i}{Q} A \equiv \frac{R_i}{Q} \mod x^{D+1}$ and so there
      is a polynomial $\bar{V}$ such that
      $\bar{V} x^{D+1} + \frac{U_i}{Q} A = \frac{R_i}{Q}$. Multiplying
      by $Q$, we get the equality
      $Q \, \bar{V} x^{D+1} + U_i A = R_i$.  Consider the identity
      $V_i x^{D+1} + U_i A = R_i$ from
      \Cref{eq:bezoutIdentity}. Coupling the two equalities together,
      we conclude that $V_i = Q \bar{V}$. As $Q$ divides $U_i$ and
      $V_i$, which are coprime (\Cref{thm:bezoutFactorsCoprime}), $Q$
      is a constant, that is $\deg(Q) = 0$.
      If $N_1 < N_2$, then $D < 2 N_2$ (\Cref{prop:existenceN1andN2}).
      Hence, $\deg(U_i) \leq \frac{D+1}{2} \leq N_2$ and
      $\deg(R_i) + 1 < \frac{D+1}{2} + 1 \leq N_2 + 1$, that is,
      $\max(\deg(U_i), \deg(R_i) + 1) \leq N_2$.
      Hence, by \Cref{thm:rationalReconstrOfLowDegree}, we can
      consider $U_i(\frac{y}{x}) x^{\max(\deg(U_i), \deg(R_i) + 1)}$
      as the kernel polynomial $P_v$ from \Cref{prop:polInKernel},
      spanning $\ker(H^{N_1+1})$.

      \textbf{(B).}  Assume that the degree of $U_i$ is strictly
      bigger than the one of $R_i$, that is $\deg(U_i) >
      \deg(R_i)$. Then $N_1 = \deg(U_i) - 1$, as
      $\deg(U_i) = \deg(P_v) = N_1+1$
      (\Cref{thm:degreeKernelPol}). Note that in this case $i>1$
      because $U_1 = 1$, $R_1 = A \neq 0$, and so
      $\deg(U_1) \leq \deg(R_1)$.
      The degree of $R_{i-1}$ is $N_2$ because, by
      \Cref{thm:relationOfDegreesEGCD},
      $\deg(R_{i-1}) = D+1 - \deg(U_i) = D+1 -N_1 - 1 = N_2$ 
      (\Cref{prop:existenceN1andN2}).
      Consider the degree of $U_{i-1}$. By
      \Cref{thm:relationOfDegreesEGCD},
      $\deg(U_{i-1}) = D+1 - \deg(R_{i-2})$. As
      $\deg(R_{i-2}) > \deg(R_{i-1})$ (\Cref{thm:RiIsDecreasing}),
      then
      $\deg(R_{i-2}) > N_2$. Therefore, the degree of $U_{i-1}$ is
      smaller or equal to the one of $R_{i-1}$ because
    $$\deg(U_{i-1}) = D+1 - \deg(R_{i-2}) < D + 1 - N_2 = N_1 + 1 \text{, and so}$$
    $$\deg(U_{i-1}) \leq N_1 \leq N_2 = \deg(R_{i-1}).$$
    Hence, by \Cref{thm:egcdGivesRatRecontrs}, $(U_{i-1}, R_{i-1})$ is
    a rational reconstruction of $A$ modulo $x^{D+1}$ such that
    $\deg(U_{i-1}) \leq N_1$ and $\deg(R_{i-1}) = N_2$.
    So, $\max(\deg(U_{i-1}), \deg(R_{i-1}) + 1) = N_2 + 1$ and, by
    \Cref{thm:degreeKernelPol}, there is a kernel polynomial
    $P_\omega = U_{i-1}(\frac{y}{x}) x^{N_2+1}$ of degree $N_2+1$ such
    that $x^{N_2 + 1 - \deg(U_{i-1})}$ divides $P_\omega$. As
    $\deg(U_{i-1}) \leq N_1$, $x^{N_2 + 1 - N_1}$ divides
    $x^{N_2 + 1 - \deg(U_{i-1})}$ and so, it divides $P_\omega$. We
    assumed that the degree of $U_i$ is strictly bigger than the one
    of $R_i$, and so $x$ does not divide $P_v$
    (\Cref{thm:degreeKernelPol}). Hence, there is no binary form $Q$
    of degree $N_2 - N_1$ such that $x^{N_2 - N_1 + 1}$ divides
    $Q \, P_v$. Therefore, by \Cref{thm:PwNotInPv}, we can consider
    $P_w = P_\omega$.

    \textbf{(C).}  Assume that the degree of $R_i$ is bigger or equal
    to the one of $U_i$, that is $\deg(R_i) \geq \deg(U_i)$. Hence,
    $\deg(R_i) + 1 = \deg(P_v) = N_1 + 1$
    (\Cref{thm:degreeKernelPol}), and so $\deg(R_i) = N_1$.  In
    particular, $R_i \neq 0$, and so the $(i+1)$-th row of the
    Extended Euclidean Algorithm, $(U_{i+1},V_{i+1},R_{i+1})$, is
    defined.
    The degree of $U_{i+1}$ is $N_2+1$ because, by
    \Cref{thm:relationOfDegreesEGCD},
    $\deg(U_{i+1}) = D+1 - \deg(R_{i}) = N_2 + 1$
    (\Cref{prop:existenceN1andN2}).
    The degree of $R_{i+1}$ is strictly smaller than the one of $R_i$
    (\Cref{thm:RiIsDecreasing}), which is $N_1$. Hence, the degree of
    $R_{i+1}$ is smaller than the degree of $U_{i+1}$ because
    $\deg(R_{i+1}) < N_1 \leq N_2 < \deg(U_{i+1})$.
    Therefore, $P_\omega = U_{i+1}(\frac{y}{x}) x^{N_2 + 1}$ is a
    kernel polynomial in $\ker(H^{N_2+1})$
    (\Cref{thm:fromRationalToKernelPolynomial}).  By
    \Cref{thm:degreeKernelPol}, as $\deg(R_{i+1}) < \deg(U_{i+1})$,
    $x$ does not divide $P_\omega$. Also, the maximal power of $x$
    that divides $P_v$ is $x^{\deg(R_{i}) + 1 - \deg(U_{i})}$, and, as
    we assumed $\deg(R_{i}) \geq \deg(U_{i})$, $x$ divides
    $P_v$. Hence, every polynomial in
    $\{Q \, P_v : \deg(Q) = N_2 - N_1\}$ is divisible by $x$, and so,
    by \Cref{thm:PwNotInPv}, we can consider $P_w = P_\omega$.
  \end{proof}

  \begin{algorithm}[h]
  \caption{\textsc{Compute\_Pv\_and\_Pw}} 
  \label{alg:gettingVandW}           
  \begin{algorithmic}      

    \REQUIRE A sequence $(a_0,\dots,a_D)$.

    \ENSURE Polynomials $P_v$ and $P_w$ as \ref{prop:polInKernel}.
    
    \noindent
    \begin{enumerate}[leftmargin=*]
    \item $i \leftarrow$ first row of
      $\egcd(x^{D+1}, \sum_{i=0} a_i x^i)$ such that
      $R_i < \frac{D + 1}{2}$.

    \item $P_v \leftarrow U_i(\frac{x}{y}) \cdot x^{\max(\deg(U_i), \deg(R_i) + 1)}$.
      
      $N_1 \leftarrow {\max(\deg(U_i)-1, \deg(R_i))}$

      \item \textbf{IF} $\deg(R_i) > \deg(U_i)$,

        \qquad $P_w \leftarrow U_{i+1}(\frac{x}{y}) \cdot x^{\deg(U_{i+1})}$.

        \qquad $N_2 \leftarrow \deg(U_{i+1}) - 1$.

      \item[] \textbf{ELSE}

        \qquad  $P_w \leftarrow U_{i-1}(\frac{x}{y}) \cdot x^{\deg(R_{i-1}+1)}$.      

        \qquad $N_2 \leftarrow \deg(R_{i-1})$.

      \item \textbf{Return} $P_v$ and $P_w$

    \end{enumerate}
  \end{algorithmic}
  \end{algorithm}

  \subsection[Computing a square-free polynomial Q]
  {Computing a square-free polynomial $Q$} 
  \label{sub:computing_a_square_free}

  We can compute $Q$ at \cref{stp:choosingQ} of
    \Cref{alg:algorithmDecomp} in different ways. If
    $P_v$ is square-free, then we set $Q$ equal to $P_v$.
    If $P_v$ is not square-free, by \Cref{lem:rankOfDecomposition}, we
    need to find a vector $\mu \in \field^{(N_2-N_1+1)}$ such that
    $Q_{\mu} := P_{\mu}\*P_v + P_w$ is square-free.
    By \Cref{prop:PvPwCoprimes}, $P_v$ and $P_w$ are relative prime.
    Thus, if we take a random vector $\mu$, generically, $Q_\mu$ would
    be square-free. For this to hold, we have to prove that the
    discriminant of $Q_{\mu}$ is not identically zero. To simplify
    notation, in the following theorem we dehomogenize the
    polynomials.

    \begin{theorem}
      \label{theo:nonZeroDiscriminant}
      Given two relative prime univariate polynomials $P_v(x)$ and
      $P_w(x)$ of degrees $N_1+1$ and $N_2+1$ respectively, let 
      $Q_\mu(x) := P_{\mu}(x,1) \*P_v + P_w \in
      \field[\mu_0,\dots,\mu_{N_2-N_1}][x]$. The discriminant of
      $Q_\mu(x)$ with respect to $x$ is a non-zero polynomial.
    \end{theorem}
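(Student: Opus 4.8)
The plan is to reduce the statement to a single, very simple specialization of $\mu$, after which only an elementary Wronskian-type computation remains. Since the coefficient of $x^{N_2+1}$ in $Q_\mu$ is $\mu_{N_2-N_1}\cdot\mathrm{lc}(P_v)+\mathrm{lc}(P_w)$, a non-zero element of $\field[\mu_0,\dots,\mu_{N_2-N_1}]$, the discriminant $\mathrm{disc}_x(Q_\mu)$ (taken with respect to the $x$-degree $N_2+1$) is a well-defined polynomial in $\mu_0,\dots,\mu_{N_2-N_1}$. To prove it is not the zero polynomial it suffices to exhibit one point $\mu^\star\in\closurefield^{N_2-N_1+1}$ for which $Q_{\mu^\star}\in\closurefield[x]$ has degree exactly $N_2+1$ and is square-free, since then $\mathrm{disc}_x(Q_{\mu^\star})\neq 0$. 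I will look for such a point inside the one-parameter subfamily obtained by setting $\mu_1=\dots=\mu_{N_2-N_1}=0$, i.e.\ with $P_\mu(x,1)$ equal to a constant $\mu_0$ (if $N_1=N_2$ this subfamily is the whole family); there $Q_{\mu_0}=\mu_0 P_v+P_w$.

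First I would check the degree. If $N_1<N_2$ then $\deg P_w=N_2+1>N_1+1=\deg P_v$, so $\deg_x(\mu_0 P_v+P_w)=N_2+1$ for every $\mu_0$. If $N_1=N_2$, the $x$-degree equals $N_2+1$ unless $\mu_0=-\mathrm{lc}(P_w)/\mathrm{lc}(P_v)$, i.e.\ for all but one value of $\mu_0$. So outside a finite set of $\mu_0$ the polynomial $\mu_0 P_v+P_w$ has the correct degree.

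Next I would show that $\mu_0 P_v+P_w$ is square-free for all but finitely many $\mu_0\in\closurefield$. Suppose $\alpha\in\closurefield$ is a repeated root of $\mu_0 P_v+P_w$, so $\mu_0 P_v(\alpha)+P_w(\alpha)=0$ and $\mu_0 P_v'(\alpha)+P_w'(\alpha)=0$. If $P_v(\alpha)=0$ the first equation forces $P_w(\alpha)=0$, contradicting that $P_v$ and $P_w$ are coprime; hence $P_v(\alpha)\neq 0$ and $\mu_0=-P_w(\alpha)/P_v(\alpha)$. Substituting this into the second equation and clearing denominators yields $W(\alpha)=0$, where $W:=P_w P_v'-P_v P_w'\in\field[x]$. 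Now $W\neq 0$: indeed $(P_w/P_v)'=-W/P_v^2$, and $P_w/P_v$ is non-constant because $P_v$ and $P_w$ are coprime and both non-constant (their degrees are $N_1+1\ge 1$ and $N_2+1\ge 1$), so in characteristic zero its derivative cannot vanish. Therefore $W$ has only finitely many roots, each determining a unique $\mu_0=-P_w(\alpha)/P_v(\alpha)$; hence only finitely many $\mu_0$ can produce a repeated root. Together with the at most one degree-dropping value, this excludes only finitely many $\mu_0$, and since $\closurefield$ is infinite we may pick $\mu_0^\star$ avoiding all of them. Setting $\mu^\star=(\mu_0^\star,0,\dots,0)$ gives a polynomial $Q_{\mu^\star}$ of degree $N_2+1$ that is square-free, hence $\mathrm{disc}_x(Q_{\mu^\star})\neq 0$ and therefore $\mathrm{disc}_x(Q_\mu)\not\equiv 0$.

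The only delicate points are in the reduction: one must make sure that non-vanishing of the discriminant \emph{as a polynomial in $\mu$} genuinely follows from a single good specialization, which is why I insist that the specialized polynomial keep its full degree $N_2+1$; and one must observe that the Wronskian-like polynomial $W=P_w P_v'-P_v P_w'$ is non-zero, which is exactly where coprimality of $P_v$ and $P_w$ together with characteristic zero are used. Everything else is routine.
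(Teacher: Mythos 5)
Your proof is correct and rests on the same core fact as the paper's argument --- that coprimality of $P_v$ and $P_w$ forces the Wronskian $W = P_w P_v' - P_v P_w'$ to be non-zero --- but you organize the reduction more cleanly. The paper eliminates $\mu_0$ from the double-root system while keeping $\mu_1, \dots, \mu_{N_2-N_1}$ as free parameters, obtaining $(P_v P_\mu' + P_w') P_v - P_v' P_w$, observes it is non-zero (for essentially the reason you give: $P_v \nmid P_v'P_w$ by coprimality and degree), and then argues somewhat informally that each solution lifts to only finitely many $\mu_0$. You instead specialize $\mu_1 = \dots = \mu_{N_2-N_1} = 0$ from the start, so the elimination collapses to the pencil $\mu_0 P_v + P_w$, where the bookkeeping is entirely transparent: each repeated root $\alpha$ forces $\mu_0 = -P_w(\alpha)/P_v(\alpha)$ with $\alpha$ a root of the non-zero polynomial $W$, of which there are finitely many. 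You also make explicit the degree-drop caveat (which can only occur when $N_1 = N_2$, and then for a single value of $\mu_0$), a point the paper leaves implicit but which is genuinely needed for the specialized discriminant to detect square-freeness. Mathematically the two proofs are the same; yours is tighter in presentation, and the one-parameter specialization trick is a small but real simplification.
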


    \begin{proof}
      The zeros of the discriminant of $Q_\mu(x)$ with respect to $x$
      over $\K$ correspond to the set \linebreak
      $\{\mu \in \field^{N_2-N_1+1} : Q_\mu \text{ has double
        roots}\}$. We want to prove that the discriminant is not zero.


      A univariate polynomial is square-free if and only if it does
      share any root with its derivative.  Hence,
      $(\mu_0, \dots, \mu_{N_2-N_1})^\transpose \in \{\mu \in
      \field^{N_2-N_1+1} : Q_\mu \text{ has double roots}\}$ if and
      only if, there is
      $(\mu_0, \dots, \mu_{N_2-N_1},\alpha) \in \field^{N_2-N_1+1}
      \times \closurefield$ such that it is a solutions to the
      following system of equations
      \begin{align}
      \label{eq:doubleroots}
        \begin{cases}
          (P_{{\mu}} \* P_v + P_{w})(x) = 0 \\
          (P_{{\mu}} \* P'_v + P'_{{\mu}} \* P_v  + P'_{w})(x) = 0 .
        \end{cases} 
      \end{align}

      In \Cref{eq:doubleroots}, $\mu_0$ only appears in $P_\mu(x,1)$
      with degree $1$. We eliminate it to obtain the polynomial
      $$( P_v \cdot P_{\mu}' + P_w') P_v - P_v' \cdot P_w.$$
      

        This polynomial is not identically $0$ as $P_v'$ does not
        divide $P_v$ and $P_v$ and $P_w$ are relative prime. Hence,
        for each $(\mu_1,\dots,\mu_{N_2-N_1})$, there is a finite
        number of solutions for this equation, bounded by the degree
        of the polynomial. Moreover, as the polynomials of
        \Cref{eq:doubleroots} are linear in $\mu_0$, each solution of
        the deduced equation is extensible to a finite number of
        solutions of \Cref{eq:doubleroots}.  Hence, there is a
        $\mu \in \field^{N_2-N_1+1}$, such that $Q_{\mu}$ is
        square-free. Therefore, the discriminant of $Q_{\mu}(x)$ is
        not identically zero.
    \end{proof}

    \begin{corollary}
      \label{lem:generalQsqFree}
      For every vector
      $(\mu_1, \dots, \mu_{N_2-N_1}) \in \field^{N_2-N_1}$ such that
      there is a $\mu_0 \in \K$ such that $y^2$ does not divides $Q_\mu$,
      where $\mu = (\mu_0,\dots,\mu_{N_2-N_1})$, there are at most
      $2 D + 2$ different values for $\mu_0 \in \field$ such that the
      polynomial $Q_{\mu}(x,y)$ is not square-free.
    \end{corollary}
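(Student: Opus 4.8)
The plan is to fix the vector $(\mu_1,\dots,\mu_{N_2-N_1})$ and reduce the count to a univariate discriminant computation. Regard $Q_\mu$ as a binary form of degree $N_2+1$ whose coefficients are affine-linear polynomials in $\mu_0$ (because $P_\mu$ is affine-linear in $\mu_0$). A binary form fails to be square-free exactly when it has a multiple root in $\PC$, and such a root is either the point at infinity $[1:0]$ --- which occurs precisely when $y^2\mid Q_\mu$, i.e.\ when the two highest-degree (in $x$) coefficients of $Q_\mu$ vanish --- or a finite point $[\alpha:1]$, which occurs precisely when $\alpha$ is a multiple root of the dehomogenization $Q_\mu(x,1)$. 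Those two top coefficients are affine-linear in $\mu_0$, and the hypothesis says they do not both vanish identically in $\mu_0$; hence there is \emph{at most one} value of $\mu_0$ with $y^2\mid Q_\mu$, and for every other $\mu_0$ the point $[1:0]$ is at most a simple root, so $Q_\mu$ is square-free if and only if $Q_\mu(x,1)$ is square-free.

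It then remains to bound the number of $\mu_0$ for which $Q_\mu(x,1)=\mu_0\,p+h$ has a multiple root, where $p:=P_v(x,1)$ and $h:=Q_\mu(x,1)-\mu_0\,p=\big(\sum_{i\ge 1}\mu_i x^i\big)P_v(x,1)+P_w(x,1)$ is the part free of $\mu_0$. First I would observe that $\gcd(p,h)=1$: indeed $h\equiv P_w(x,1)\pmod p$, so $\gcd(p,h)=\gcd(P_v(x,1),P_w(x,1))$, and a nonconstant common divisor of $P_v(x,1)$ and $P_w(x,1)$ would homogenize to a nonconstant common divisor of the coprime forms $P_v,P_w$ (\Cref{prop:PvPwCoprimes}). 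Next, mimicking the elimination step in the proof of \Cref{theo:nonZeroDiscriminant}: if $\alpha$ is a multiple root of $\mu_0\,p+h$ then $\mu_0\,p(\alpha)+h(\alpha)=0$ and $\mu_0\,p'(\alpha)+h'(\alpha)=0$; eliminating $\mu_0$ forces $(p\,h'-p'\,h)(\alpha)=0$, and $p(\alpha)\neq 0$ (else $h(\alpha)=0$, contradicting $\gcd(p,h)=1$), so $\alpha$ determines $\mu_0=-h(\alpha)/p(\alpha)$. The polynomial $p\,h'-p'\,h$ is not identically zero unless $\deg p=\deg h=0$ --- in which case $Q_\mu$ has degree $N_2+1\le 1$, i.e.\ $D=0$ and the statement is vacuous --- because $p\,h'-p'\,h\equiv 0$ would give $p\mid p'h$, hence $p\mid p'$ by coprimality, which is impossible for $\deg p\ge 1$. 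Consequently the set of $\mu_0$ making $Q_\mu(x,1)$ non-square-free is contained in $\{\,-h(\alpha)/p(\alpha): (p\,h'-p'\,h)(\alpha)=0\,\}$, which has at most $\deg(p\,h'-p'\,h)\le \deg p+\deg h-1\le (N_1+1)+(N_2+1)-1=D+1$ elements (using $N_1+N_2=D$). Together with the at most one value of $\mu_0$ with $y^2\mid Q_\mu$, this gives at most $D+2\le 2D+2$ values of $\mu_0$ for which $Q_\mu$ is not square-free.

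The step I expect to be the real obstacle is keeping the elimination polynomial $p\,h'-p'\,h$ nonzero \emph{after} specializing $\mu_1,\dots,\mu_{N_2-N_1}$ to a possibly non-generic value: \Cref{theo:nonZeroDiscriminant} only guarantees that the discriminant of $Q_\mu$ is nonzero as a polynomial in \emph{all} the $\mu_i$ jointly, so the elimination argument must be rerun with $\mu_1,\dots,\mu_{N_2-N_1}$ frozen, and this is exactly where the coprimality of $P_v$ and $P_w$ is invoked a second time. A secondary point demanding care is the treatment of the point at infinity $[1:0]$: separating a multiple root at infinity from a multiple finite root, and verifying that dehomogenization is faithful once $y^2\nmid Q_\mu$, is precisely the role of the hypothesis, and is also what leaves the $+2$ slack in the stated bound.
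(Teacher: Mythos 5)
Your proof is correct and, in one respect, more careful than the paper's own. The paper separates a multiple root at $[1:0]$ (at most one bad $\mu_0$, by linearity) from a multiple finite root $[\alpha:1]$, and for the latter invokes the non-vanishing of the discriminant of $Q_\mu(x,1)$ appealing to \Cref{theo:nonZeroDiscriminant}, bounding the degree in $\mu_0$ by $2N_2+1 \leq 2D+1$. You noticed the subtle issue here: \Cref{theo:nonZeroDiscriminant} only asserts that the discriminant is nonzero as a polynomial in \emph{all} of $\mu_0,\dots,\mu_{N_2-N_1}$; it does not by itself say the discriminant remains nonzero in $\mu_0$ after freezing $\mu_1,\dots,\mu_{N_2-N_1}$, which is what the corollary actually needs. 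Your fix is exactly right: re-run the elimination of $\mu_0$ with $\mu_1,\dots,\mu_{N_2-N_1}$ fixed, reducing to the Wronskian-type polynomial $p\,h'-p'\,h$ (which is the specialization of the paper's own $(P_v P_\mu' + P_w')P_v - P_v'P_w$), and show it is nonzero using only the coprimality $\gcd(p,h)=\gcd(P_v(x,1),P_w(x,1))=1$. This also tightens the bound to $D+2$ (versus the paper's $2D+2$), since each relevant root $\alpha$ of $p\,h'-p'\,h$ determines $\mu_0 = -h(\alpha)/p(\alpha)$ uniquely, whereas the discriminant only gives a degree bound.

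One small inaccuracy: the parenthetical claim that $\deg p=\deg h=0$ would force $N_2+1\leq 1$ and hence $D=0$ is not correct --- $Q_\mu(x,1)$ being constant does not make the homogeneous form $Q_\mu$ have degree $\leq 1$ (it would be $cy^{N_2+1}$). But the case is in fact impossible for a cleaner reason: $\deg p=0$ means $y^{N_1+1}\mid P_v$, and $\deg h=0$ then forces $y^{N_1+1}\mid P_w$, contradicting \Cref{prop:PvPwCoprimes}. Swapping in that justification closes the only loose thread; the rest of the argument is sound.
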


    \begin{proof}
      If $Q_\mu(x,y)$ is not square-free, then it has a double root in
      $\PC$. This root could be of the form $(\alpha,1) \in \PC$ or
      $(1,0) \in \PC$. We analyze separately these cases.

      First, we consider the polynomial $Q_\mu(x,1) \in
      \K[\mu_0,x]$. By \Cref{theo:nonZeroDiscriminant}, the
      discriminant of $Q_\mu(x,1)$ with respect to $x$ is not zero. As
      $Q_\mu(x,1)$ is a polynomial of degree $N_2+1$ with respect to
      $x$, and of degree $1$ with respect to $\mu_0$, the degree with
      respect to $\mu_0$ of the discriminant of $Q_\mu(x,1)$ with
      respect to $x$ is at most $(N_2 + 1) + N_2 \leq 2 D + 1$.
      Hence, there are at most $2 D + 1$ values for $\mu_0$ such that
      $Q_{\mu}(x,y)$ has a root of the form $(\alpha,1) \in \PC$ with
      multiplicity bigger than one.

      The polynomial $Q_{\mu}(x,y)$ has a root of the form
      $(1,0) \in \PK$ with multiplicity bigger than one, if and only
      if $y^2$ divides $Q_{\mu}(x,y)$. If this happens, then the
      coefficients of the monomials $y \cdot x^{N_2-N_1-1}$ and
      $x^{N_2-N_1}$ in the polynomial $Q_{\mu}(x,y)$ are zero. By
      assumption, these coefficients are not identically zero as
      polynomials in $\K[\mu_0]$. As $Q_{\mu}(x,y)$ is a linear polynomial
      with respect to $\mu_0$, there is at most one value for $\mu_0$
      such that $y^2$ divides $Q_{\mu}(x,y)$.

      Therefore, there are at most $(2 D + 1) + 1$ values such
      that $Q_{\mu}(x,y)$ is not square-free.
    \end{proof}

    \begin{remark}
      The previous assumption is not restrictive. If $y^2$ divides
      $Q_\mu$, where $\mu = \linebreak (\mu_0,\dots,\mu_{N_2-N_1})$, then $y^2$
      does not divide
      $Q_{(\mu_0,\dots,\mu_{N_2-N_1}+1)} = Q_\mu + x^{N_2+1}$ nor \linebreak
      $Q_{(\mu_0,\dots,\mu_{N_2-N_1-1}+1, \mu_{N_2-N_1})} = Q_\mu + y
      x^{N_2}$. Moreover, if $N_2-N_1 \geq 2$, $y^2$ divides (or not)
      $Q_\mu(x,y)$ regardless the value of $\mu_0$.  Conversely, if
      $N_2-N_1 < 2$, there is always a $\mu_0$ such that $y^2$ does
      not divide $Q_\mu$.
    \end{remark}

  \subsection{Correctness of Algorithm \ref{alg:algorithmDecomp}} 
  \label{sub:correctness_of_algorithm_decomposition}

  For computing a decomposition for a binary form $f$, we need to
  compute the kernel of a Hankel matrix (\Cref{theo:Silv1851}).
  \Cref{alg:gettingVandW} computes correctly the polynomials $P_v$ and
  $P_w$ that characterize the kernels of the family of Hankel matrices
  associated to $f$. Once we obtain these polynomials,
  \cref{stp:choosingQ} (see~\Cref{lem:generalQsqFree}) and
  \cref{stp:solveLambdas} computes the coefficients
  $\alpha_j, \beta_j, \lambda_j$ of the decomposition. Hence, we have
  a decomposition for $f$, as $f(x,y) = \sum_{j = 1}^r \lambda_j \* ( \alpha x + \beta y )^D.$

  \paragraph*{Example} 
    \label{sub:example}
    Consider $f(x,y) = y^4 + 8 x y^3 + 18 x^2 y^2 + 16 x^3 y + 5 x^4$.
    The family of Hankel matrices  associated to $f$ are related to the vector
    $a := (1,2,3,4,5)^\transpose$, it is denoted by $\{H_a^{k}\}_k$,  and it contains the following matrices:
    \begin{center}
    \begin{tabular}{ | c | c | c | c | }
      \hline & & & \\
        $H^1_a = \begin{pmatrix} 1 & 2 \\ 2 & 3 \\ 3 & 4  \\  4 & 5 \end{pmatrix}$  &
        $H^2_a = \begin{pmatrix} 1 & 2 & 3 \\ 2 & 3 & 4 \\ 3 & 4 & 5 \end{pmatrix}$  &
        $H^3_a = \begin{pmatrix} 1 & 2 & 3 & 4 \\ 2 & 3 & 4 & 5 \end{pmatrix}$ &
        $H^4_a = \begin{pmatrix} 1 & 2 & 3 & 4 & 5 \end{pmatrix}$ \\
       & & & \\ \hline
    \end{tabular}
    \end{center}

    The kernel $H^1_a$ is trivial, so we compute the one of
    $H^2_a$. This kernel is generated by the vector
    $(1,-2,1)^\transpose$, so by \Cref{prop:defVandW} we consider
    $v = (1,-2,1)^\transpose$. Also, by \Cref{prop:existenceN1andN2},
    $N_1 + 1 =2$ and $N_2 = D - N_1 = 3$. The kernel polynomial
    $P_v = y^2-2 x \, y + x^2 = (x-y)^2$ is not square-free thus, by
    \Cref{lem:rankOfDecomposition}, the rank of $f(x,y)$ is
    $N_2+1 = 4$ and we have to compute the kernel polynomial $P_w$ in
    the kernel of $H^4_a$.  Following \Cref{prop:defVandW}, the kernel
    of $H^4_a$ is generated by U-chain of $v$ given vectors
    $\Uch_2^0v = (1,-2,1,0,0)^\transpose$,
    $\Uch_2^1v = (0,1,-2,1,0)^\transpose$, and
    $\Uch_2^2v = (0,0,1,-2,1)^\transpose$, plus a vector $w$ linear
    independent with this U-chain. We consider the vector
    $w = (0,0,0,5,-4)$, which fulfills that assumption.
    Hence,
    $P_v = y^2-2 x \, y + x^2$ and $P_w = 5 y x^3 - 4 x^4$.
  
    We proceed by computing a square-free polynomial combination of
    $P_v$ and $P_w$. For that, we choose
    $$Q := (44 y^2 + 11 y x + 149 x^2) \, P_v + 36 P_w = (5 x - 11 y)(x-2 y)(x + 2 y)(x+y).$$

    Finally, we solve the system given by the transposed of a Vandermonde matrix,
    \begin{align}
      {
      \begin{pmatrix}
        5^4 & 1 & 1 & 1 \\
        11 \cdot 5^3 & 2 & -2 & -1 \\
        11^2 \cdot  5^2 & 2^2 & (-2)^2 & (-1)^2 \\
        11^3 \cdot  5 & 2^3 & (-2)^3 & (-1)^3
      \end{pmatrix}
                              \begin{pmatrix}
                                \lambda_1 \\ \lambda_2 \\ \lambda_3 \\ \lambda_4
                              \end{pmatrix}
      = 
      \begin{pmatrix}
        1 \\ 2 \\ 3 \\ 4
      \end{pmatrix}.
      }
    \end{align}
    The unique solution of the system is
    $(-\frac{1}{336},3,\frac{1}{21},\frac{3}{16})^\transpose$, and so
    we recover the decomposition
    $$f(x,y) =
    -\frac{1}{336}(11x+5y)^4+3(2x+y)^4+\frac{1}{21}
    (-2x+y)^4-\frac{3}{16}(-x+y)^4.$$

    Instead of considering incrementally the matrices in the Hankel
    family we can compute the polynomials $P_v$ and $P_w$ faster by
    applying \Cref{alg:gettingVandW}. For this, we consider the
    polynomial $A := 5 x^4+4 x^3+3 x^2+2 x+1$, and the rows of the
    Extended Euclidean Algorithm for $A$ and $x^{5}$.
    \begin{center}
      \begin{tabular}{ | c | c | c | c | }
        \hline
        $j$ & $V_j$ & $U_j$ & $R_j$\\ \hline
        $0$ 	 & 	 $1$ 	 & 	 $0$ 	 & 	 $x^5$ \\ 
        $1$ 	 & 	 $0$ 	 & 	 $1$ 	 & 	 $5 x^4+4 x^3+3 x^2+2 x+1$ \\ 
        $2$ 	 & 	 $1$ 	 & 	 $\frac{1}{25} (5x-4)$ 	 &   $\frac{1}{25} (x^3+2 x^2+3 x+4)$ \\ 
        $3$ 	 & 	 $-25 (5 x- 6)$ 	 & 	 $25 (x-1)^2$ 	 & 	 $25$ \\ 
        $4$ 	 & 	 $\frac{1}{25} (5 x^4+4 x^3+3 x^2+2 x+1)$ 	 & 	 $-\frac{1}{25} x^5$ 	 & 	 $0$ \\\hline
      \end{tabular}
    \end{center}
    
    We need to consider the first $j$ such that
    $\deg(R_j) < \frac{5}{2}$, which is $j = 3$.  Hence,
    \linebreak
    $N_1 = \max(\deg(U_3) - 1, \deg(R_3)) = 1$ and
    $$P_v := U_3\!\!\left(\frac{y}{x}\right) \, x^{\max(\deg(U_3), \deg(R_3) + 1)} =
    25 \left(\frac{y}{x}-1 \right)^2 x^2 = 25 (y - x)^2.$$
    As $\deg(R_3) \leq \deg(U_3)$, we consider $N_2 = \deg(R_2)  =3$ and
    
    \vspace{\abovedisplayskip}
    \hfill $\displaystyle P_w := U_2\!\!\left(\frac{y}{x}\right) \, x^{\deg(R_2) + 1} =
    \frac{1}{25} (5y x^3 - 4x^4).$ \hfill $\Diamond$



    

    \paragraph{The real case}
    When we consider the decomposition of binary forms over $\R$,
    Algorithm \ref{alg:algorithmDecomp} might fail. This happens
    either when the decomposition over $\C$ is not unique or when the
    decomposition over $\C$ is unique but it is not a decomposition
    over $\R$. The algorithm fails because 
    \begin{itemize}
    \item The real rank of the binary form might be bigger than
      $N_2+1$, see
      \cite{reznick2013length}. \Cref{lem:rankOfDecomposition} does
      not hold over $\R$ and so we cannot find a square-free kernel
      polynomial $Q_\mu$ that factors over $\R$.
    \item Even when the real rank of the binary form is $N_2+1$, we
      need to perform some extra computations to compute a square-free
      kernel polynomial $Q_\mu$ that factors over $\R$. This
      computations are not taken into account in our algorithm, so we
      could never find such a decomposition.
    \end{itemize}
    Recently, Sylvester's algorithm was extended to the real case
    \cite[Algorithm~2]{ansola_semialgebraic_2017}. This algorithm
    performs an incremental search over $r$ as in \Cref{alg:common},
    but it decides if there is a real decomposition of length $r$ by
    checking the emptiness of a semi-algebraic set. In the step $r$-th
    of the algorithm, the semi-algebraic set is embedded in
    $\R^{\dim(\ker(H^r))}$. Hence, the  bottleneck of
    their algorithm is not the computation of $P_v$ and $P_w$ as in
    our case, but deciding the emptiness of the semi-algebraic set.
    We emphasize that when the decompositions over $\C$ and $\R$ are
    unique and the same, our algorithm computes such a
    decomposition. Moreover, given $P_v$, we can check if the previous
    condition holds by checking if $P_v$ splits over $\R$.    
    
\section{Complexity} 
\label{sec:complexity}

\noindent In this section we study the algebraic degree of the
parameters that appear in the decomposition of a binary form as well
as the arithmetic and bit complexity of \Cref{alg:algorithmDecomp}.

\subsection{Algebraic degree of the problem}
\label{sec:algebraicDegree}

If we assume that the coefficients of the input binary form
\Cref{eq:bf} are rational numbers then the parameters of the
decompositions, $\alpha_j$, $\beta_j$, and $\lambda_j$ (see
\Cref{eq:bfDecomp}), are algebraic numbers, that is, roots of
univariate polynomials with integer coefficients. The maximum degree
of these polynomials is the algebraic degree of the problem. We refer
the interested reader to
\cite{bajaj1988algebraic,nie2010algebraic,dhost-dgl-16} for a detailed
exposition about the algebraic degree and how it address the
complexity of the problem at hand at a fundamental level.

\subsubsection{The complexity of computing Q} 
\label{sec:getting_mu}

Recall that, from \Cref{lem:rankOfDecomposition}, the rank of $f$ could
be either $N_1+1$ or $N_2 + 1$.  When the polynomial $P_v$ is
square-free, then the rank is $N_1 + 1$ and $Q = P_v$.
Following the discussion of \Cref{sub:computing_a_square_free}, we
prove that, when the rank of the binary form is $N_2+1$, we can
compute a square-free kernel polynomial $Q$ of this degree such that
the largest degree of its irreducible factors is $N_1$. Moreover, we
prove that for almost all the choices of $(N_2-N_1+1)$ different
points in $\PK$ (the projective space of $\K$) there is a square-free
kernel polynomial of $H^{N_2+1}$ which vanishes on these points. This
will be our choice for $Q$.

\begin{lemma}
  \label{lem:iterpolMu}
  Let $f$ be a binary form of rank $N_2+1$. Given $(N_2-N_1+1)$
  different points
  $(\alpha_0,\beta_0),\dots,(\alpha_{N_2-N_1},\beta_{N_2-N_1}) \in
  \PK$
  such that none of them is a root of $P_v$, then there is a unique
  binary form $P_\mu$ of degree $N_2-N_1$, such that the kernel
  polynomial $Q_\mu := P_\mu \* P_v + P_w$ vanishes on these points.
\end{lemma}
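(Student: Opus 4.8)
The plan is to turn the vanishing conditions into a square linear interpolation problem for the $N_2-N_1+1$ coefficients of $P_\mu$. Set $d := N_2 - N_1$. First I would observe that the degrees match: since $\deg P_v = N_1 + 1$, for any binary form $P_\mu$ of degree $d$ the product $P_\mu \* P_v$ has degree $d + N_1 + 1 = N_2 + 1 = \deg P_w$, so $Q_\mu := P_\mu \* P_v + P_w$ is always a binary form of degree at most $N_2+1$; moreover, by \Cref{prop:polInKernel} applied with $k = N_2+1$ (so that $k - N_1 = d+1$ and $k - N_2 = 1$), $Q_\mu$ is in fact a kernel polynomial of $H^{N_2+1}$, with $P_w$ playing the role of the $P_\rho$-term for $\rho$ a scalar.

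Next, for each $i$, since $(\alpha_i,\beta_i)$ is not a root of $P_v$ we have $P_v(\alpha_i,\beta_i) \ne 0$, and therefore
$$Q_\mu(\alpha_i,\beta_i) = 0 \quad\Longleftrightarrow\quad P_\mu(\alpha_i,\beta_i) = -\,\frac{P_w(\alpha_i,\beta_i)}{P_v(\alpha_i,\beta_i)}.$$
Thus ``$Q_\mu$ vanishes on all $N_2-N_1+1$ points'' is equivalent to prescribing the values of $P_\mu$ at these points. The one thing to keep in mind is that these values depend on the chosen affine representatives $(\alpha_i,\beta_i)$ of the projective points, but replacing a representative by a scalar multiple only multiplies both sides of the displayed equation by the same power of that scalar, consistently with the fact that the vanishing of $Q_\mu$ is a genuinely projective condition.

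Then I would consider the linear evaluation map $\Phi$ from the $(N_2-N_1+1)$-dimensional space of binary forms of degree $d$ to $\K^{N_2-N_1+1}$ sending $P \mapsto \big(P(\alpha_0,\beta_0),\dots,P(\alpha_{N_2-N_1},\beta_{N_2-N_1})\big)$; its matrix in the monomial basis is a homogeneous Vandermonde matrix in the points $(\alpha_i:\beta_i)$. The key step is that $\Phi$ is injective: any $P$ in its kernel is a binary form of degree $d$ vanishing at $N_2-N_1+1$ pairwise distinct points of $\PK \subseteq \PC$, but over $\closurefield$ a nonzero binary form of degree $d$ factors as a product of $d$ linear forms and hence has at most $d$ zeros in $\PC$, a contradiction. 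Since domain and codomain have the same dimension, $\Phi$ is an isomorphism, so there is a unique binary form $P_\mu$ of degree $d$ whose image under $\Phi$ is the vector with entries $-P_w(\alpha_i,\beta_i)/P_v(\alpha_i,\beta_i)$; by the equivalence above this is exactly the unique $P_\mu$ for which $Q_\mu$ vanishes on the given points.

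I do not expect a genuine obstacle here: the proof is essentially the statement that Lagrange interpolation works in the $(d+1)$-dimensional space of degree-$d$ binary forms. The only points that require a little care are the bookkeeping just mentioned --- passing between the projective vanishing condition on $Q_\mu$ and the interpolation condition on $P_\mu$ with fixed representatives --- and the (standard) fact that the homogeneous evaluation map at distinct projective points is invertible, which I would justify via the zero-count for binary forms rather than by computing the homogeneous Vandermonde determinant $\prod_{i<k}(\alpha_i\beta_k - \alpha_k\beta_i)$ explicitly.
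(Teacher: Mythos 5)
Your proof is correct and follows essentially the same route as the paper: translate the vanishing of $Q_\mu$ at the points into an interpolation problem for $P_\mu$ (using that $P_v$ does not vanish there), and invoke uniqueness of the interpolant. One small difference: the paper dehomogenizes by assuming $\beta_i = 1$ ``without loss of generality'' and then applies univariate Lagrange interpolation, which strictly speaking breaks down if one of the given points is $(1:0)\in\PK$; your version stays homogeneous, treating the evaluation map on the $(N_2-N_1+1)$-dimensional space of degree-$(N_2-N_1)$ forms as a linear isomorphism and justifying injectivity by the zero-count for binary forms, which handles the point at infinity uniformly and is a genuinely cleaner formulation of the same argument.
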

    
    \begin{proof}
      Without loss of generality, we assume $\beta_i = 1$, for
      $i \in \{1,\dots,N_2-N_1\}$. By \Cref{prop:polInKernel}, for any
      polynomial $P_\mu$ of degree $N_2-N_1$, $Q_\mu$ is a kernel
      polynomial. Since $Q_\mu(\alpha_i,1) = 0$, we can interpolate
      $P_\mu$ by noticing that
      $P_\mu(\alpha_j,1) =
      -\,\frac{P_w(\alpha_j,1)}{P_v(\alpha_j,1)}$.

      The degree of $P_\mu$ is $(N_2-N_1)$ and we interpolate it using
      $(N_2-N_1+1)$ different points. Hence there is a unique
      interpolation polynomial $P_\mu$. So, $Q_\mu$ is the unique
      kernel polynomial of $H^{N_2+1}$ vanishing at all these points.
    \end{proof}

    \noindent\paragraph*{Example (cont.)}
    For the example of \Cref{sub:example}, we obtained the square-free
    kernel polynomial $Q$ by choosing the points $(2,1)$, $(-2,1)$ and
    $(-1,1) \in \PK$. If we choose other points such that $Q$ is
    square-free, we will obtain a different decomposition. Hence, $f$
    does not have a unique decomposition.  This holds in general.
    \hfill $\Diamond$

    From \Cref{lem:iterpolMu}, we deduce the following well-known
    result about the uniqueness of the decomposition, see also
    \cite{helmke1992waring,comas2011rank,bernardi2011computing,carlini_waring_2017}.
    
    \begin{corollary} \label{thm:decompIsUnique}
      A decomposition is unique if and only if the rank is $N_1 + 1$
      and $N_1 < N_2$.
      A decomposition is not unique if and only if the rank is
      {$N_2+1$}.
    \end{corollary}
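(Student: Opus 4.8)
The plan is to derive both equivalences from \Cref{lem:iterpolMu}, \Cref{lem:rankOfDecomposition}, and \Cref{theo:Silv1851}, by carefully counting the square-free kernel polynomials of minimal degree and comparing the induced sets of points in $\PC$. First I would recall from \Cref{lem:rankOfDecomposition} that the rank is $N_1+1$ precisely when $P_v$ is square-free, and otherwise it is $N_2+1$; and from \Cref{theo:Silv1851} that a decomposition of rank $r$ corresponds bijectively (up to scaling of the $\lambda_j$) to a square-free kernel polynomial of degree $r$, i.e.\ a square-free element of $\ker(H^r)$, via the factorization $P_c=\prod_{j=1}^r(\beta_j x-\alpha_j y)$. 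So the set of decompositions of minimal length is in bijection with the set of square-free kernel polynomials of minimal degree, taken up to a nonzero scalar; and the decomposition is unique exactly when this set, modulo scalars, is a single point.

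\textbf{Case rank $=N_1+1$ (equivalently $P_v$ square-free).} Here $\ker(H^{N_1+1})$ is spanned by $v$ (\Cref{prop:defVandW}), so every kernel polynomial of degree $N_1+1$ is a scalar multiple of $P_v$; hence the set of minimal square-free kernel polynomials modulo scalars is exactly $\{P_v\}$ when $P_v$ is square-free, giving a unique decomposition. This handles the "if" direction of the first statement. But I must also check the extra hypothesis $N_1<N_2$: if $N_1=N_2$, then $D=2N_1$ and, although the rank is still $N_1+1=\tfrac D2+1$, there are additional square-free kernel polynomials of degree $N_2+1=N_1+1$ coming from \Cref{prop:polInKernel} with $\rho\neq 0$; I need to argue these give genuinely different decompositions, i.e.\ that (at least) two of them have distinct zero sets in $\PC$. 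This is where \Cref{lem:iterpolMu} enters as the substitute for a direct argument, but note its statement assumes rank $N_2+1$; for the case $N_1=N_2$ with rank $N_1+1$ I instead argue directly that the two-parameter family $\{P_\mu P_v+P_\rho P_w\}$ of degree $N_1+1$ (with $P_v,P_w$ coprime of the same degree) contains at least two square-free members with different roots — e.g.\ by a dimension/pencil argument on $\PP^1$, since a generic member of the pencil $\langle P_v,P_w\rangle$ is square-free and distinct members of a pencil of binary forms of the same degree have no common zero hence different zero sets. Therefore rank $N_1+1$ gives a unique decomposition iff $N_1<N_2$.

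\textbf{Case rank $=N_2+1$.} By \Cref{lem:iterpolMu}, for every choice of $N_2-N_1+1$ distinct points of $\PK$ avoiding the roots of $P_v$ there is a unique kernel polynomial $Q_\mu$ of degree $N_2+1$ vanishing on them; by \Cref{theo:nonZeroDiscriminant} a generic such $Q_\mu$ is square-free. Since $N_2+1>N_1$, we can certainly pick two different $(N_2-N_1+1)$-subsets of points whose associated $Q_\mu$, $Q_{\mu'}$ are both square-free; being determined by distinct point sets (and each interpolating polynomial being the \emph{unique} kernel polynomial through its points), $Q_\mu$ and $Q_{\mu'}$ have different zero sets in $\PC$, hence yield different decompositions. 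Thus rank $N_2+1$ forces non-uniqueness, which together with the previous case (where rank $N_1+1$ with $N_1<N_2$ is unique) gives the full "if and only if" in both directions; in particular rank $N_1+1$ with $N_1=N_2$ is the non-unique sub-case of rank $=N_1+1$, consistent with the statement since there rank $N_1+1=N_2+1$ and the second equivalence applies. I expect the main obstacle to be the edge case $N_1=N_2$: one must be careful that "rank $N_1+1$" and "rank $N_2+1$" coincide there, and that the uniqueness claim is decided correctly by exhibiting two distinct square-free members of the degree-$(N_1+1)$ pencil spanned by the coprime forms $P_v$ and $P_w$.
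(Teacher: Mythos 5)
Your plan matches what the paper intends: the paper gives no explicit proof for this corollary and just says it is "deduced" from \Cref{lem:iterpolMu} (with citations to the literature), and your proof does exactly that, while also treating the edge case $N_1=N_2$ carefully, which is the only place the two clauses of the statement could clash. Your pencil argument for $N_1=N_2$ is sound: coprimality of $P_v$ and $P_w$ (from \Cref{prop:PvPwCoprimes}) forces distinct members of the pencil to have disjoint zero sets, and \Cref{theo:nonZeroDiscriminant} (applied with $N_2-N_1=0$) gives a nonempty open set of square-free members, hence at least two with different roots.

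There is one small logical slip in the rank-$N_2+1$ case that as written would fail: you infer that because $Q_\mu$ and $Q_{\mu'}$ are "determined by distinct point sets" they must have different zero sets. That implication is false in general: a single square-free kernel polynomial of degree $N_2+1$ has $\binom{N_2+1}{N_2-N_1+1}$ subsets of its roots, each of which is a valid $(N_2-N_1+1)$-point set producing that same polynomial under \Cref{lem:iterpolMu}. The uniqueness in \Cref{lem:iterpolMu} goes the other way: each point set determines a unique $Q$, not that distinct point sets give distinct $Q$'s. The fix is immediate and in the spirit of your own argument: having found one square-free $Q_\mu$, pick a point $(\alpha_0,\beta_0)\in\PK$ that is a root of neither $Q_\mu$ nor $P_v$ (both have only finitely many roots in $\PK$, which is infinite); by \Cref{lem:iterpolMu} there is a kernel polynomial $Q_{\mu'}$ vanishing at $(\alpha_0,\beta_0)$ (and at $N_2-N_1$ further generic points), and $Q_{\mu'}\neq Q_\mu$ since $Q_\mu(\alpha_0,\beta_0)\neq 0$; by \Cref{theo:nonZeroDiscriminant} the remaining points can be taken so that $Q_{\mu'}$ is also square-free. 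Alternatively, one can argue directly that the map $\mu\mapsto Q_\mu=P_\mu P_v+P_w$ on $\K^{N_2-N_1+1}$ is injective, and that two distinct $Q_\mu$'s with the same zero set would force $P_v\mid P_w$, again contradicting coprimality. With that patch the proof is complete.
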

    
    \begin{theorem}
      \label{theo:boundOfAlgDegree}
      Let the rank of $f$ be $N_2+1$. Then there is a square-free kernel
      polynomial $Q$ such that the largest degree of its irreducible
      factors is at most $N_1$.
    \end{theorem}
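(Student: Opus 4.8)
The plan is to produce $Q$ in the form $Q=P_\mu\*P_v+P_w$ by prescribing enough of its roots, and then to show that a suitably generic such choice makes $Q$ square-free. First, since $\operatorname{rank}(f)=N_2+1$, \Cref{lem:rankOfDecomposition} forces $P_v$ to be non-square-free; as $P_v$ is a binary form of degree $N_1+1$ and forms of degree at most $1$ are square-free, this already gives $N_1\ge 1$, hence $1\le N_2-N_1+1\le N_2$. Pick $N_2-N_1+1$ pairwise distinct points $(\alpha_0,1),\dots,(\alpha_{N_2-N_1},1)\in\PK$, none a root of $P_v$. By \Cref{lem:iterpolMu} there is a unique binary form $P_\mu$ of degree $N_2-N_1$ such that $Q:=Q_\mu=P_\mu\*P_v+P_w$ — a kernel polynomial of degree $N_2+1$ by \Cref{prop:polInKernel} — vanishes at all these points, so $Q=\bigl(\prod_{i=0}^{N_2-N_1}(x-\alpha_i y)\bigr)\*S$ for some binary form $S\in\K[x,y]$ of degree $N_1$. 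If this $Q$ is square-free, then $S$ is square-free and coprime to $\prod_i(x-\alpha_i y)$, so every irreducible factor of $Q$ is either one of the $x-\alpha_i y$ (degree $1\le N_1$) or an irreducible factor of $S$ (degree $\le\deg S=N_1$); the theorem therefore follows once we exhibit a choice of the $\alpha_i$ making $Q$ square-free.

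To find such a choice, let the $\alpha_i$ range over the admissible set of tuples in $\K^{N_2-N_1+1}$ with pairwise distinct coordinates, none of which is a root of $P_v$; this set is the complement of a proper $\K$-subvariety of $\K^{N_2-N_1+1}$. By \Cref{lem:iterpolMu} the coefficients of $P_\mu$, hence those of $Q$, are regular functions of the $\alpha_i$ on this admissible set (the Lagrange formula only divides by node differences and by the values $P_v(\alpha_j,1)$, all units there). Consequently ``$Q$ is square-free'' is the nonvanishing locus, inside the admissible set, of a single $\K$-regular function (the discriminant of the binary form $Q$, which also rules out the double root at $(1:0)$, exactly as in the remark following \Cref{lem:generalQsqFree}). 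Since $\K$ is infinite, this locus contains a $\K$-point as soon as it is nonempty over $\closurefield$; such a $\K$-point is the desired choice of points. So it remains to prove the locus is nonempty over $\closurefield$.

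This last point is the crux, and the obstacle: prescribing the roots constrains $P_\mu$, so we cannot merely invoke the genericity of square-freeness over all $\mu$ (\Cref{theo:nonZeroDiscriminant}); instead we build a good specialization over the algebraic closure. By \Cref{theo:nonZeroDiscriminant} there is $\mu^\star\in\K^{N_2-N_1+1}$ with $Q^\star:=P_{\mu^\star}\*P_v+P_w$ square-free. Since $P_v$ and $P_w$ are coprime (\Cref{prop:PvPwCoprimes}), $Q^\star$ shares no root with $P_v$, and being a square-free binary form of degree $N_2+1$ it has $N_2+1$ distinct roots in $\PC$; as $N_2-N_1+1\le N_2$, we may choose $N_2-N_1+1$ of them, avoiding the point $(1:0)$, with affine coordinates $\gamma_0,\dots,\gamma_{N_2-N_1}\in\closurefield$, which are pairwise distinct and none a root of $P_v$, hence an admissible tuple. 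Because $P_{\mu^\star}$ is a binary form of degree at most $N_2-N_1$ with $P_{\mu^\star}(\gamma_j,1)=-P_w(\gamma_j,1)/P_v(\gamma_j,1)$ for every $j$, uniqueness of interpolation shows that the construction of the second paragraph, specialized at the nodes $\gamma_j$, returns precisely $P_{\mu^\star}$, hence $Q=Q^\star$. Thus the discriminant of $Q$ does not vanish at $(\gamma_0,\dots,\gamma_{N_2-N_1})$, so the square-free locus is nonempty over $\closurefield$, completing the argument. The main difficulty, as indicated, is exactly this transfer of square-freeness from the unconstrained $\mu^\star$ to the root-constrained family, which we resolve by passing to $\closurefield$ and then using that $\K$ is infinite.
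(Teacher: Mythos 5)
Your proof is correct and, while it uses the same ingredients (\Cref{lem:iterpolMu}, \Cref{theo:nonZeroDiscriminant}, and the infinitude of $\K$), the way you establish generic square-freeness of the root-constrained family is genuinely different from the paper's. The paper treats \Cref{lem:iterpolMu} as a rational map $\overline{\alpha}\mapsto Q_{[\overline{\alpha}]}$ from $\PC^{N_2-N_1+1}$ into $\{P_\mu \* P_v + P_w\}$ and argues by dimension counting: the fibers are finite, so the image has full dimension $N_2-N_1+1$, while the non-square-free polynomials form a hypersurface by \Cref{theo:nonZeroDiscriminant}; hence the preimage of that hypersurface has strictly smaller dimension, and a generic tuple works (first over $\closurefield$, then over $\K$). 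You bypass this dimension-theoretic machinery by exhibiting an explicit $\closurefield$-witness: take $N_2-N_1+1$ roots of the square-free $Q^\star = P_{\mu^\star}\*P_v + P_w$ provided by \Cref{theo:nonZeroDiscriminant}, avoiding $(1:0)$ (possible since $N_1\ge 1$), observe by uniqueness of interpolation that the construction of \Cref{lem:iterpolMu} at those nodes reproduces $Q^\star$, so the discriminant of $Q$ --- a single regular function of the nodes on the admissible open set --- does not vanish identically, and then descend to a $\K$-point using that $\K$ is infinite. Your route is more elementary, and it is also explicit about a step the paper leaves tacit: that prescribing $N_2-N_1+1$ $\K$-rational zeros forces $Q$ to have that many linear factors over $\K$, so the remaining cofactor has degree exactly $N_1$, which is what yields the stated bound on the degrees of the irreducible factors. (One small imprecision: when $N_1=N_2$, \Cref{lem:rankOfDecomposition} does not by itself force $P_v$ to be non-square-free, but your needed conclusion $N_1\ge 1$ still holds whenever $D\ge 2$, and the theorem is vacuous or trivial for $D\le 1$.)
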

    
    \begin{proof}
      If the rank of $f$ is $N_2+1$,
      then for each set of $N_2-N_1+1$ different points in $\PC$,
      following the assumptions of \Cref{lem:iterpolMu}, 
      there is a unique kernel polynomial. 
      There is a rational map that realizes this relation
      (see the proof of \Cref{lem:iterpolMu}).
      Let this map be $Q_{[\overline{\alpha}]}$, where \linebreak
      $\overline{\alpha} = \left((\alpha_0, \beta_0), \dots, (\alpha_{N_2-N_1}, \beta_{N_2-N_1})\right) \in
      \PC^{N_2-N_1+1}$.
      The image of the map is contained in
      $\{ P_\mu \* P_v + P_w : \mu \in \closurefield^{N_2 - N_1 + 1}
      \}$.  This set and $\PC^{N_2-N_1+1}$ have the same dimension,
      $N_2-N_1+1$.

      Given a kernel polynomial $\hat{Q}(x,y)$, there is a finite
      number of distinct points $(\alpha,\beta) \in \PC$ such that
      $\hat{Q}(\alpha,\beta) = 0$. Hence, the pre-image of an element
      in the image of $Q_{[\overline{\alpha}]}$ is a finite
      set. Therefore, the dimension of the image and the dimension of
      the domain are the same.

      By \Cref{theo:nonZeroDiscriminant}, the non-square-free kernel
      polynomials form a hypersurface in the space of kernel
      polynomials of the shape $P_\mu \* P_v + P_w$. If we consider
      the pre-image of the intersection between this hypersurface and
      the image of the rational map, then its dimension is smaller
      than $N_2-N_1+1$. 

      Therefore, generically, for $N_2-N_1+1$ different points in
      $\PK$, the map $Q_{[\overline{\alpha}]}(x,y)$ results a
      square-free kernel polynomial. As $\closurefield$ is the
      algebraic closure of $\field \subset \CC$, the same holds over
      $\field$.
    \end{proof}

    \begin{theorem}
      \label{cor:dg-irr-Q}
      Given a binary form $f$ of rank $r$ and degree $D$, there is a
      square-free kernel polynomial of degree $r$ such that
      the biggest degree of its irreducible factors is $\min(r, D-r+1)$.
    \end{theorem}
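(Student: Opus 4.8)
The plan is to split according to the two possible values of the rank given by \Cref{lem:rankOfDecomposition}, namely $r = N_1+1$ or $r = N_2+1$, and in each case to rewrite $\min(r, D-r+1)$ using the identity $N_1 + N_2 = D$ from \Cref{prop:existenceN1andN2}. If $r = N_1+1$ and $N_1 < N_2$, then $D-r+1 = N_2 \ge N_1+1 = r$, so $\min(r, D-r+1) = r$; here I would simply take $Q := P_v$, which is square-free by \Cref{lem:rankOfDecomposition} and a kernel polynomial of degree $r$ by \Cref{prop:polInKernel}, and observe that every irreducible factor of a binary form of degree $r$ has degree at most $r$, so there is nothing more to prove in this case.

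In the remaining case $r = N_2+1$ --- which also absorbs the situation $N_1 = N_2$, since then $N_1+1 = N_2+1$ --- we have $D-r+1 = N_1 \le N_2 < r$, hence $\min(r, D-r+1) = N_1$, and the assertion is exactly \Cref{theo:boundOfAlgDegree}: a binary form of rank $N_2+1$ admits a square-free kernel polynomial of degree $N_2+1$ all of whose irreducible factors have degree at most $N_1$. So I would invoke that theorem directly and then collect the two cases to conclude.

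The main delicate point, and where I expect to have to be careful, is the degenerate subcase $N_1 = N_2$: there $N_2-N_1+1 = 1$, so the family of kernel polynomials underlying \Cref{lem:iterpolMu} and \Cref{theo:boundOfAlgDegree} degenerates to the pencil $\{ s P_v + t P_w : [s:t] \in \PK \}$, and one should check that the argument still applies. For a self-contained treatment of this subcase I would argue directly. By \Cref{prop:existenceN1andN2}, $\ker(H^{N_1+1})$ is two-dimensional, and by \Cref{prop:defVandW} and \Cref{prop:PvPwCoprimes} its generators $P_v$ and $P_w$ are coprime, of degrees $N_1+1$ and $N_2+1 = N_1+1$. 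For $p = (a:b) \in \PK$ put $Q_p := P_w(a,b)\, P_v - P_v(a,b)\, P_w$; coprimality forces $Q_p$ to be a nonzero kernel polynomial vanishing at $p$, so whenever $\deg Q_p = N_1+1$ the linear form $b x - a y$ divides $Q_p$ and leaves a cofactor of degree $N_1$, whence every irreducible factor of $Q_p$ has degree at most $N_1$. Since $r = N_1+1$ is the rank, the pencil contains a square-free element of degree $N_1+1$, so the discriminant is not identically zero on the pencil and only finitely many of its members fail to be square-free of top degree; as the map $p \mapsto [P_w(p) : -P_v(p)]$ from $\PK$ to the pencil has finite fibers and $\field$ is infinite, $Q_p$ is square-free of degree $N_1+1$ for all but finitely many $p \in \PK$, so a good $p$ exists. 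Taking $Q := Q_p$ then settles this subcase, and combining it with the first two paragraphs yields the bound $\min(r, D-r+1)$.
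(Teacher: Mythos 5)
Your proposal takes the same route as the paper: split on the two possible ranks, dispatch $r = N_1+1$ (with $N_1 < N_2$) trivially since then $\min(r, D-r+1) = r$ and $Q = P_v$ works, and invoke \Cref{theo:boundOfAlgDegree} when $r = N_2+1$.

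The extra worry you raise about the subcase $N_1 = N_2$ is not actually a problem for the paper's argument: \Cref{lem:iterpolMu} and \Cref{theo:boundOfAlgDegree} are stated for any binary form of rank $N_2+1$, and when $N_1 = N_2$ they simply degenerate to a single interpolation point, a scalar $P_\mu$, and a one-dimensional family $\{\mu P_v + P_w : \mu \in \field\}$; the dimension count in the proof of \Cref{theo:boundOfAlgDegree} goes through unchanged. Your self-contained treatment of that subcase is nonetheless correct and has a genuine merit: it makes explicit \emph{why} the bound on irreducible-factor degrees holds, namely that the prescribed zero of $Q_p$ at $p \in \PK$ peels off a linear factor over $\K$, leaving a cofactor of degree $N_1$. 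This is exactly the mechanism by which \Cref{lem:iterpolMu} (with its $N_2-N_1+1$ prescribed zeros in $\PK$) yields the irreducible-factor bound claimed in \Cref{theo:boundOfAlgDegree} in general, but the paper's proof of that theorem only argues square-freeness and leaves this final step implicit. One small point worth spelling out: $Q_p = P_w(p)\,P_v - P_v(p)\,P_w$ always has degree exactly $N_1+1$ as a binary form, since $P_v, P_w$ are a basis of $\ker(H^{N_1+1})$ and coprimality rules out $P_v(p)=P_w(p)=0$; so the qualifier ``whenever $\deg Q_p = N_1+1$'' is vacuous and you can simply assert it.
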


    \begin{proof}
      If the rank is $r = N_2+1$, then $\min(r, D-r+1) = N_1$.
      By \Cref{theo:boundOfAlgDegree}, such a square-free kernel
      polynomial exists. If the rank is $r = N_1+1$ and $N_1 < N_2$,
      by \Cref{lem:rankOfDecomposition}, there is a square-free kernel
      polynomial of degree $\min(r, D-r+1) = N_1+1$.
    \end{proof}

    The previous result is related to the decomposition of tensors of
    the same border rank \cite[Theorem 2]{comas2011rank};
    \cite[Theorem 23]{bernardi2011computing};
    \citep{blekherman2015typical}.

    We can also bound the number of possible bad choices in the proof
    of \Cref{theo:boundOfAlgDegree}.
    
    \begin{proposition} \label{thm:badValuesToInterpolateQ}
      Let $f$ be a binary form of rank $N_2+1$. For every set
      $S \subset \PK$ of cardinal $(N_2-N_1)$ such that
      $(\forall (\alpha,\beta) \in S) \; P_v(\alpha,\beta) \neq 0$ there
      are at most $D^2 + 3 D + 1$ values $(\alpha_0,\beta_0) \in \PK$
      such that $(\alpha_0,\beta_0) \not\in S$,
      $P_v(\alpha_0,\beta_0) \neq 0$ and the unique kernel polynomial
      $Q_\mu := P_\mu \* P_v + P_w$ that vanish over $S$ and
      $(\alpha_0,\beta_0)$ (\Cref{lem:iterpolMu}) is not square-free.
    \end{proposition}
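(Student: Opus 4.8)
The plan is to regard the set of kernel polynomials of degree $N_2+1$ vanishing on $S$ as a pencil, to bound how many of its members fail to be square-free, and to bound how many zeros each such member has off $S\cup V(P_v)$ (where $V(P_v)$ denotes the zero locus of $P_v$ in $\PK$); by \Cref{lem:iterpolMu} those zeros are exactly the bad points.

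First I would set up the pencil. Since $|S|=N_2-N_1$ and the points are distinct, up to a scalar there is a unique form $P_{v_0}$ of degree $N_2-N_1$ vanishing on $S$, namely the product of the corresponding linear forms; it is square-free. Requiring $Q_\mu=P_\mu P_v+P_w$ to vanish on $S$ forces $P_\mu$ to take the values $-P_w/P_v$ at the $N_2-N_1$ points of $S$ (where $P_v$ is nonvanishing), which are independent affine conditions on the coefficients of $P_\mu$; so these $Q_\mu$ sweep out an affine line, and its closure in the projective space of degree-$(N_2+1)$ kernel polynomials (\Cref{prop:polInKernel}) is a line $\mathcal L_S$ together with one point $[P_{v_0}P_v]$ at infinity. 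By \Cref{lem:iterpolMu}, the interpolant attached to an admissible extension point is the unique member of $\mathcal L_S$ vanishing there; conversely, any zero of a member of $\mathcal L_S$ lying off $S\cup V(P_v)$ is, by the same uniqueness, the interpolant attached to it. Hence the set of bad points equals $\bigcup\bigl(\text{zeros of }Q\bigr)\setminus\bigl(S\cup V(P_v)\bigr)$, the union ranging over the (necessarily affine) non-square-free members $Q\in\mathcal L_S$.

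The crux — and the main obstacle — is to prove that $\mathcal L_S$ is not contained in the discriminant hypersurface (\Cref{theo:nonZeroDiscriminant}); equivalently, that a square-free kernel polynomial vanishing on $S$ exists, since otherwise there would be infinitely many bad points and the statement would be false. I would argue as follows, which also yields the quantitative bound. Every $Q\in\mathcal L_S$ vanishes on $S$, so $P_{v_0}\mid Q$; writing $Q=P_{v_0}\,h$ turns $\mathcal L_S$ into a pencil $\{h_0+sP_v\}$ of forms of degree $N_1+1$ spanned by $P_v$ and $h_0:=Q_{\mu_0}/P_{v_0}$, for some fixed affine member $Q_{\mu_0}$. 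As a kernel polynomial, $Q_{\mu_0}$ is coprime to $P_v$ (\Cref{prop:PvPwCoprimes}), so $h_0$ and $P_v$ are coprime forms of positive degree $N_1+1$, hence not proportional, and therefore the Jacobian $J:=\partial_xh_0\,\partial_yP_v-\partial_yh_0\,\partial_xP_v$ is a nonzero form of degree $2N_1$. Any repeated zero $\gamma\in\PK$ of a member $h_0+sP_v$ makes both partial derivatives vanish at $\gamma$, which — using that $h_0$ and $P_v$ have no common repeated zero, again by coprimality — determines $s$ in terms of $\gamma$ and forces $J(\gamma)=0$; so $h_0+sP_v$ is non-square-free for at most $\deg J=2N_1$ values of $s$. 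For at most $|S|=N_2-N_1$ further values of $s$ does $h_0+sP_v$ share a zero with $P_{v_0}$ (at most one per point of $S$, because $P_v$ is nonvanishing on $S$). Away from these $\le 2N_1+(N_2-N_1)=D$ values, $Q=P_{v_0}(h_0+sP_v)$ is a product of coprime square-free forms, hence square-free; in particular such $Q$ exist.

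It remains to assemble the count. At most $D$ affine members of $\mathcal L_S$ are non-square-free; the point $[P_{v_0}P_v]$ at infinity is never an interpolant (it is not of the form $P_\mu P_v+P_w$, which would force $P_v\mid P_w$), so it contributes nothing. A non-square-free affine member $Q=P_{v_0}h$ has its zeros off $S$ contained in $V(h)$, none of which lies in $V(P_v)$ since $h\mid Q$ and $\gcd(Q,P_v)=1$; so it contributes at most $\deg h=N_1+1$ bad points. Hence there are at most $D(N_1+1)$ bad points, which, since $N_1\le N_2$ and $N_1+N_2=D$, is comfortably at most $D^2+3D+1$; the degenerate case $N_1=N_2$ (so $S=\varnothing$, $P_{v_0}=1$, and the pencil is $\langle P_v,P_w\rangle$) is handled verbatim by the same computation. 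The only genuinely delicate step is the non-containment argument above; the rest is bookkeeping. (That step may instead be packaged by exhibiting the discriminant of the pencil $h_0+sP_v$ as a nonzero polynomial in $s$ of degree $\le 2N_1$, whose zeros — together with the $\le N_2-N_1$ values where a point of $S$ becomes a zero — exhaust the non-square-free members.)
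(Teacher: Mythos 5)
Your proof is correct and takes a genuinely different route from the paper's. The paper works directly with the explicit Lagrange-interpolation representation of $Q^{\alpha_0,\beta_0}$, computes the numerator $T^{\alpha_0,\beta_0}(x,y)$ of its $x$-derivative, and proves three auxiliary lemmas: \Cref{thm:alphaisqroot} bounds by $N_2+1$ the number of extension points making a fixed $(\alpha_i,\beta_i)\in S$ a double root; \Cref{thm:choosedDoubleRoot} bounds by $2N_2+1$ the number of ``self-double'' extension points; and \Cref{thm:sameRootSamePol} identifies extension points sharing the same interpolant. It then sums $(N_2-N_1)(N_2+1)+(N_1+1)(2N_2+1)\le D^2+3D+1$. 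You instead factor out the square-free form $P_{v_0}$ vanishing on $S$, reduce the family $\mathcal L_S$ to the pencil $\{h_0+sP_v\}$ of coprime forms of degree $N_1+1$, and bound the bad parameters $s$ by the zeros of the Jacobian determinant $J(h_0,P_v)$ of degree at most $2N_1$ (plus the $\le N_2-N_1$ values of $s$ making $h_0+sP_v$ share a root with $P_{v_0}$). This structural factorization obviates all the Lagrange-polynomial bookkeeping, and your final bound $D(N_1+1)\le D^2/2+D$ is in fact sharper than the paper's. Both approaches hinge on the same uniqueness statement from \Cref{lem:iterpolMu} to identify bad extension points with zeros of non-square-free pencil members.

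One phrasing caveat: where you write that $h_0$ and $P_v$ ``have no common repeated zero, again by coprimality,'' what coprimality gives and what the argument actually uses is that they have no common zero at all. A point where both partials of $P_v$ vanish is, by Euler's relation, a zero of $P_v$; coprimality then forces $h_0$ to be nonzero there, so such a point cannot be a repeated root of any member $h_0+sP_v$, and $s$ is well-determined from the remaining nonvanishing partial of $P_v$. This is exactly what you intend, just stated slightly off.
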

    
    To prove this proposition we use Lagrange polynomials to construct
    the maps and varieties of the proof of
    \Cref{theo:boundOfAlgDegree}.

    Let
    $S =
    \left\{(\alpha_1,\beta_1),\dots,(\alpha_{N_2-N_1},\beta_{N_2-N_1})\right\}
    \subset \PK$ be the set of \Cref{thm:badValuesToInterpolateQ}. For
    each $(\alpha_0,\beta_0) \in \PK$ such that
    $(\alpha_0,\beta_0) \not\in S$ and
    $P_v(\alpha_0,\beta_0) \neq 0$ we consider the unique kernel
    polynomial $Q^{\alpha_0,\beta_0}$ which vanishes at $S$ and
    $(\alpha_0,\beta_0)$, see~\Cref{lem:iterpolMu}. Using Lagrange
    polynomial, we can write this polynomial as
    $$
    Q^{\alpha_0,\beta_0}(x,y) =
    \left(
      -\,\frac{P_w(\alpha_0,\beta_0)}{P_v(\alpha_0,\beta_0)} \frac{M(x,y)}{M(\alpha_0,\beta_0)}
      + \sum_{i = 1}^{N_2-N_1} \frac{\beta_0 x - \alpha_0 y}{\alpha_0 \beta_i - \alpha_i \beta_0}
      E_i(x,y)
    \right)
    P_v(x,y)
    + P_w(x,y)
    $$
    Where
    $E_i(x,y) := - \frac{P_w(\alpha_i,\beta_i)}{P_v(\alpha_i,\beta_i)}
    \prod_{j \not\in \{0,i\}} \frac{\beta_j x - \alpha_j y}{\alpha_i
      \beta_j - \alpha_j \beta_i}$ and
    $M(x,y) := \prod_{j = 1}^{N_2-N_1} (\beta_j x - \alpha_j
    y)$.\footnote{\, For each $0 \leq i \leq N_2-N_1$,
      $Q^{\alpha_0,\beta_0}(x,y)$ is a rational function of degree $0$
      with respect to $(\alpha_i,\beta_i)$. Hence, it is well defined the
      evaluation of the variables $(\alpha_i,\beta_i)$ in
      $Q^{\alpha_0,\beta_0}(x,y)$ at points of $\PK$.}
    
    For each $(\alpha_j,\beta_j) \in S$, we characterize the possible
    $(\alpha_0,\beta_0) \in \PC$ such that $(\alpha_j,\beta_j)$ is a
    root of $Q^{\alpha_0,\beta_0}$ of multiplicity bigger than
    one. Then, we study the $(\alpha_0,\beta_0) \in \PC$ such that
    $(\alpha_0,\beta_0)$ is a root of $Q^{\alpha_0,\beta_0}$ with
    multiplicity bigger than one. Finally, we reduce every case to the
    previous ones.
    
    To study the multiplicities of the roots, we use the fact that
    $(\alpha_0,\beta_0)$ is a double root of a binary form $P$ if and
    only if
    $P(\alpha_0,\beta_0) = \frac{\partial P}{\partial x}(\alpha_0,\beta_0)
    = \frac{\partial P}{\partial y}(\alpha_0,\beta_0) = 0$. Hence, for
    each $(\alpha_0,\beta_0) \in \PC$, we consider
    $\frac{\partial Q^{\alpha_0,\beta_0}}{\partial x}$ and
    $\frac{\partial Q^{\alpha_0,\beta_0}}{\partial y}$, where
    
    \begin{align} \label{eq:derivQLagrangePoly}
      \frac{\partial Q^{\alpha_0,\beta_0}}{\partial x} = &
                                                           -\frac{P_w(\alpha_0,\beta_0)}{P_v(\alpha_0,\beta_0)}
                                                           \frac{1}{M(\alpha_0,\beta_0)}
                                                           \left(
                                                           \frac{\partial M }{\partial x} P_v
                                                           +
                                                           M  \frac{\partial P_v}{\partial x}
                                                           \right)\!(x,y)
                                                           + \\ \nonumber &
                                                                            \sum_{i = 1}^{N_2-N_1} \frac{1}{ \beta_0 \alpha_i  - \alpha_0 \beta_i}
                                                                            \frac{\partial ((\beta_0 x - \alpha_0 y) E_i P_v)}{\partial x}
                                                                            \!(x,y)
                                                                            + \frac{\partial P_w}{\partial x}(x,y)
    \end{align}
    
    Let $O_x^{\alpha_0,\beta_0}(x,y)$ be the product between the last line
    of \Cref{eq:derivQLagrangePoly} and $M(\alpha_0,\beta_0)$, that is
    $$ O_x^{\alpha_0,\beta_0}(x,y) :=
    \sum_{i = 1}^{N_2-N_1} \frac{M(\alpha_0,\beta_0)}{ \beta_0 \alpha_i  - \alpha_0 \beta_i}
    \frac{\partial ((\beta_0 x - \alpha_0 y) E_i P_v)}{\partial x}
    \!(x,y)
    + M(\alpha_0,\beta_0) \frac{\partial P_w}{\partial x}(x,y)
    $$

    In what follows, instead of considering the pair  $(\alpha_0,\beta_0)$ as a
    point in $\PC$, we consider it as a pair of variables. Hence,
    for every $(\alpha_i,\beta_i) \in S$,
    $(\beta_0 \alpha_i - \alpha_0 \beta_i)$ divides
    $M(\alpha_0,\beta_0)$, as polynomials in
    $\closurefield[\alpha_0,\beta_0]$, so
    $O_x^{\alpha_0,\beta_0}(x,y)$ is a polynomial in
    $\closurefield[\alpha_0,\beta_0][x,y]$.
    The derivative of $Q^{\alpha_0,\beta_0}$ with respect to $x$ is a rational 
    function in $\closurefield(\alpha_0,\beta_0)[x,y]$, that we can write as
    $\frac{\partial Q^{\alpha_0,\beta_0}}{\partial x}  = \frac{T^{\alpha_0,\beta_0}(x,y)}{P_v(\alpha_0,\beta_0)
      M(\alpha_0,\beta_0)} $ where
    \begin{align*}
      T^{\alpha_0,\beta_0}(x,y)
      :=                                                              
      -P_w(\alpha_0,\beta_0)
      \left(
      \frac{\partial M }{\partial x} P_v
      +
      M  \frac{\partial P_v}{\partial x}
      \right)\!(x,y)
      + O_x^{\alpha_0,\beta_0}(x,y) P_v(\alpha_0,\beta_0)
      \in \closurefield[\alpha_0,\beta_0][x,y]
    \end{align*}
    \begin{lemma} \label{thm:alphaisqroot}
      For each $(\alpha_i,\beta_i) \in S$, there are at most $N_2+1$
      possible $(\bar\alpha_0,\bar\beta_0) \in \PC$ such that
      $(\bar\alpha_0,\bar\beta_0) \not\in S$, $P_v(\bar\alpha_0,\bar\beta_0) \neq 0$ and
      that $(\alpha_i,\beta_i)$ is a root of multiplicity bigger than $1$
      in $Q^{\bar\alpha_0,\bar\beta_0}$.
    \end{lemma}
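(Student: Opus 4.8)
The plan is to show that every ``bad'' value $(\bar\alpha_0,\bar\beta_0)$ of the statement is forced to be a zero of \emph{one single} fixed binary form of degree at most $N_2+1$. Fix $(\alpha_i,\beta_i)\in S$ and, after exchanging the roles of $x$ and $y$ if necessary, assume $\beta_i\neq 0$ (if $(\alpha_i:\beta_i)=(1:0)$, run the same argument with $\partial/\partial y$ in place of $\partial/\partial x$). First I would identify the family of kernel polynomials of degree $N_2+1$ vanishing on the whole set $S$. Since $S$ consists of $N_2-N_1$ pairwise distinct points, $M(x,y)=\prod_{j=1}^{N_2-N_1}(\beta_j x-\alpha_j y)$ divides every such polynomial. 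Writing a kernel polynomial of degree $N_2+1$ as $Q_\mu=P_\mu\*P_v+P_w$ with $\deg(P_\mu)\le N_2-N_1$ (\Cref{prop:polInKernel}) and using that $P_v$ does not vanish on $S$, the condition ``$Q_\mu$ vanishes on $S$'' is equivalent to the interpolation system $P_\mu(\alpha_j,\beta_j)=-\,P_w(\alpha_j,\beta_j)/P_v(\alpha_j,\beta_j)$, $j=1,\dots,N_2-N_1$; this underdetermined system has solution set the affine line $\{P_{\mu^{(0)}}+t\,M:t\in\closurefield\}$ for a fixed particular solution $P_{\mu^{(0)}}$. Hence the kernel polynomials of degree $N_2+1$ vanishing on $S$ are exactly the members of the pencil $G_t:=Q^{(0)}+t\,M\,P_v$, $t\in\closurefield$, with $Q^{(0)}:=P_{\mu^{(0)}}\*P_v+P_w$; and by \Cref{lem:iterpolMu}, for any admissible $(\bar\alpha_0,\bar\beta_0)$ the polynomial $Q^{\bar\alpha_0,\bar\beta_0}$ is the member $G_t$ that in addition vanishes at $(\bar\alpha_0,\bar\beta_0)$.

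Second I would show that \emph{at most one} member of this pencil has $(\alpha_i,\beta_i)$ as a root of multiplicity $>1$. Every $G_t$ vanishes at $(\alpha_i,\beta_i)$ (it vanishes on $S$), and a binary form with a root of multiplicity $\ge 2$ at $(\alpha_i,\beta_i)$ has its $x$-derivative vanish there. Since $M(\alpha_i,\beta_i)=0$ we get $\frac{\partial(M\*P_v)}{\partial x}(\alpha_i,\beta_i)=\frac{\partial M}{\partial x}(\alpha_i,\beta_i)\,P_v(\alpha_i,\beta_i)$, where $\frac{\partial M}{\partial x}(\alpha_i,\beta_i)=\beta_i\prod_{j\neq i}(\beta_j\alpha_i-\alpha_j\beta_i)\neq 0$ (distinct points, $\beta_i\neq 0$) and $P_v(\alpha_i,\beta_i)\neq 0$ by hypothesis; hence $\frac{\partial(M\*P_v)}{\partial x}(\alpha_i,\beta_i)\neq 0$. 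Therefore $t\mapsto \frac{\partial G_t}{\partial x}(\alpha_i,\beta_i)=\frac{\partial Q^{(0)}}{\partial x}(\alpha_i,\beta_i)+t\,\frac{\partial(M\*P_v)}{\partial x}(\alpha_i,\beta_i)$ is an affine function of $t$ with nonzero leading coefficient, so it vanishes for a unique $t^\star\in\closurefield$. Thus $G^\star:=G_{t^\star}$ is the only member of the pencil for which $(\alpha_i,\beta_i)$ is a multiple root. Moreover $G^\star$ is not the zero polynomial — otherwise $P_v\mid P_w$, contradicting \Cref{prop:PvPwCoprimes} — so $G^\star$ is a nonzero binary form of degree at most $N_2+1$.

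To conclude, if $(\bar\alpha_0,\bar\beta_0)\in\PC$ satisfies $(\bar\alpha_0,\bar\beta_0)\notin S$, $P_v(\bar\alpha_0,\bar\beta_0)\neq 0$ and $(\alpha_i,\beta_i)$ is a root of $Q^{\bar\alpha_0,\bar\beta_0}$ of multiplicity $>1$, then $Q^{\bar\alpha_0,\bar\beta_0}$ is a member of the pencil with $(\alpha_i,\beta_i)$ a multiple root, hence $Q^{\bar\alpha_0,\bar\beta_0}=G^\star$; and since $Q^{\bar\alpha_0,\bar\beta_0}(\bar\alpha_0,\bar\beta_0)=0$ (\Cref{lem:iterpolMu}), the point $(\bar\alpha_0,\bar\beta_0)$ is a zero of the fixed nonzero binary form $G^\star$, of which there are at most $\deg(G^\star)\le N_2+1$ in $\PC$. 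This gives the claimed bound.

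I expect the only genuinely delicate point to be the ``nonconstancy'' step, namely the verification that $\frac{\partial(M\*P_v)}{\partial x}(\alpha_i,\beta_i)\neq 0$: this is exactly where the distinctness of the points of $S$, the standing hypothesis that $P_v$ does not vanish on $S$, and the normalization $\beta_i\neq 0$ (otherwise one argues symmetrically with $\partial/\partial y$) are all used at once. Everything else — divisibility of the kernel polynomials vanishing on $S$ by $M$, the identification of the solution set as an affine line, and the elementary fact that a multiple root of a binary form annihilates a first partial derivative — is routine bookkeeping.
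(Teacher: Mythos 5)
Your proof is correct and reaches the same bound by a genuinely different route. The paper proceeds by explicit degree counting: it takes the Lagrange-form expression for $Q^{\alpha_0,\beta_0}$, clears the denominators $P_v(\alpha_0,\beta_0)\,M(\alpha_0,\beta_0)$ in $\partial Q^{\alpha_0,\beta_0}/\partial x$ to obtain a numerator $T^{\alpha_0,\beta_0}(x,y)\in\closurefield[\alpha_0,\beta_0][x,y]$, specializes $(x,y)\mapsto(\alpha_i,\beta_i)$, bounds the degree of the resulting form in $(\alpha_0,\beta_0)$ by $N_2+1$, and shows it is not identically zero by checking that it does not vanish at the roots of $P_v$. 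You instead observe that the kernel polynomials of degree $N_2+1$ vanishing on all of $S$ form the pencil $G_t=Q^{(0)}+t\,M\,P_v$, $t\in\closurefield$; that $t\mapsto\partial_x G_t(\alpha_i,\beta_i)$ is affine with nonzero slope $\partial_x(MP_v)(\alpha_i,\beta_i)=\partial_x M(\alpha_i,\beta_i)\,P_v(\alpha_i,\beta_i)\ne 0$, so exactly one member $G^\star$ of the pencil has $(\alpha_i,\beta_i)$ as a multiple root; and that $G^\star\ne 0$ by coprimality of $P_v,P_w$. Since any bad $(\bar\alpha_0,\bar\beta_0)$ forces $Q^{\bar\alpha_0,\bar\beta_0}=G^\star$ and $Q^{\bar\alpha_0,\bar\beta_0}(\bar\alpha_0,\bar\beta_0)=0$, the bad set is contained in the zero locus of the single fixed form $G^\star$ of degree $N_2+1$. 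Both arguments rest on the same numeric seed (nonvanishing of $\partial_x(MP_v)$ at $(\alpha_i,\beta_i)$), but your pencil formulation avoids the Lagrange bookkeeping ($E_i$, $O_x^{\alpha_0,\beta_0}$, degree counting in $(\alpha_0,\beta_0)$) and gives a sharper structural picture: all bad choices arise from one exceptional interpolant $G^\star$. A small bonus of your version: by explicitly switching to $\partial/\partial y$ when $\beta_i=0$ you cover the point $(1{:}0)\in S$, a degenerate case the paper's fixed choice of $\partial/\partial x$ glosses over (there $\partial_x M(\alpha_i,0)=0$, so the paper's key nonvanishing claim would actually fail).
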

    
    \begin{proof}
      If $(\alpha_i,\beta_i)$ is a root of multiplicity bigger than
      $1$ in $Q^{\bar\alpha_0,\bar\beta_0}$, then
      $\frac{\partial Q^{\bar\alpha_0,\bar\beta_0}}{\partial
        x}(\alpha_i,\beta_i) = 0$. Hence, we are looking for the
      $(\bar\alpha_0,\bar\beta_0)$ such that
      $ T^{\bar\alpha_0,\bar\beta_0}(\alpha_i,\beta_i) = 0 $. The
      polynomial $T^{\alpha_0,\beta_0}(\alpha_i,\beta_i)$ belongs to
      $\closurefield[\alpha_0,\beta_0]$, so if it is not identically
      zero, then there are a finite number of points
      $(\bar\alpha_0,\bar\beta_0) \in \PC$ such that
      $ T^{\bar\alpha_0,\bar\beta_0}(\alpha_i,\beta_i) = 0 $.
      Moreover, the degree of the polynomial
      $T^{\alpha_0,\beta_0}(\alpha_i,\beta_i)$ is at most
      $\max(\deg(P_w), \deg(O_x^{\alpha_0,\beta_0}(\alpha_i,\beta_i))
      + \deg(P_v)) = N_2 + 1$. Hence, if the polynomial is not zero,
      this finite number is at most $N_2+1$.
       
      The polynomial
      $T^{\alpha_0,\beta_0}(\alpha_i,\beta_i) \in
      \closurefield[\alpha_0,\beta_0]$ is not zero. Observe that as
      $M$ is square-free, $M(\alpha_i,\beta_i) = 0$ and
      $P_v(\alpha_i,\beta_i) \neq 0$, then
      $\left( \frac{\partial M }{\partial x} P_v + M \frac{\partial
          P_v}{\partial x} \right)\!(\alpha_i,\beta_i) \neq 0$. Hence,
      as $P_w$ and $P_v$ are coprime, if
      $(\bar{\alpha}_0,\bar{\beta}_0) \in \PC$ and
      $P_v(\bar{\alpha}_0,\bar{\beta}_0) = 0$, then
      $T^{\bar{\alpha}_0,\bar{\beta}_0}(\alpha_i,\beta_i) \neq
      0$. That is, $T^{\alpha_0,\beta_0}(\alpha_i,\beta_i)$ does not
      vanish in the roots of $P_v$. 
    \end{proof}
    
    \begin{lemma} \label{thm:choosedDoubleRoot}
      There are at most $2 N_2+1$ possible $(\bar\alpha_0,\bar\beta_0) \in \PC$
      such that $(\bar\alpha_0,\bar\beta_0) \not\in S$, \linebreak
      $P_v(\bar\alpha_0,\bar\beta_0) \neq 0$ and $(\bar\alpha_0,\bar\beta_0)$ is a root of
      multiplicity bigger than $1$ in $Q^{\bar\alpha_0,\bar\beta_0}$.
    \end{lemma}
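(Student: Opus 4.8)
The plan is to confine the bad pairs $(\bar\alpha_0,\bar\beta_0)$ to the zero locus of a single homogeneous polynomial in $(\alpha_0,\beta_0)$, and then to count zeros by degree. Suppose $(\bar\alpha_0,\bar\beta_0)\in\PC$ satisfies $(\bar\alpha_0,\bar\beta_0)\notin S$, $P_v(\bar\alpha_0,\bar\beta_0)\neq 0$, and is a root of the binary form $Q^{\bar\alpha_0,\bar\beta_0}$ of multiplicity strictly greater than $1$. Then all first-order partials of $Q^{\bar\alpha_0,\bar\beta_0}$ vanish at $(\bar\alpha_0,\bar\beta_0)$; in particular $\frac{\partial Q^{\bar\alpha_0,\bar\beta_0}}{\partial x}(\bar\alpha_0,\bar\beta_0)=0$. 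Since $\frac{\partial Q^{\alpha_0,\beta_0}}{\partial x}=\frac{T^{\alpha_0,\beta_0}(x,y)}{P_v(\alpha_0,\beta_0)\,M(\alpha_0,\beta_0)}$ and the denominator is non-zero at $(\bar\alpha_0,\bar\beta_0)$ — because $(\bar\alpha_0,\bar\beta_0)\notin S$ forces $M(\bar\alpha_0,\bar\beta_0)\neq 0$, and $P_v(\bar\alpha_0,\bar\beta_0)\neq 0$ by hypothesis — this gives $T^{\bar\alpha_0,\bar\beta_0}(\bar\alpha_0,\bar\beta_0)=0$. So, setting $g(\alpha_0,\beta_0):=T^{\alpha_0,\beta_0}(\alpha_0,\beta_0)$ (the polynomial obtained by substituting $(x,y)=(\alpha_0,\beta_0)$ into $T^{\alpha_0,\beta_0}(x,y)\in\closurefield[\alpha_0,\beta_0][x,y]$), every such pair lies in the zero set of $g$ in $\PC$; note that this necessary condition holds even when $\bar\beta_0=0$, since a double root forces $\partial_x$ to vanish there too.

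Next I would record that $g$ is homogeneous of degree $2N_2+1$. Indeed $T^{\alpha_0,\beta_0}(x,y)$ is bihomogeneous: in $(x,y)$ it has degree $N_2$, since $T^{\alpha_0,\beta_0}=P_v(\alpha_0,\beta_0)\,M(\alpha_0,\beta_0)\cdot\frac{\partial Q^{\alpha_0,\beta_0}}{\partial x}$ and $Q^{\alpha_0,\beta_0}$ has degree $N_2+1$ in $(x,y)$; in $(\alpha_0,\beta_0)$ it has degree $N_2+1$, because in the displayed formula for $T^{\alpha_0,\beta_0}$ the summand $P_w(\alpha_0,\beta_0)\,(\partial_x M\cdot P_v+M\,\partial_x P_v)(x,y)$ has $(\alpha_0,\beta_0)$-degree $N_2+1$, while (after the Lagrange denominators $\beta_0\alpha_i-\alpha_0\beta_i$ have been cleared against $M(\alpha_0,\beta_0)$) $O_x^{\alpha_0,\beta_0}(x,y)$ has $(\alpha_0,\beta_0)$-degree $N_2-N_1$, so $O_x^{\alpha_0,\beta_0}(x,y)\,P_v(\alpha_0,\beta_0)$ again has degree $N_2+1$. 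Substituting $(x,y)=(\alpha_0,\beta_0)$ therefore yields a form of degree $(N_2+1)+N_2=2N_2+1$. This is exactly why the bound here is $2N_2+1$ rather than the $N_2+1$ of \Cref{thm:alphaisqroot}: there one substitutes a fixed point into the $(x,y)$-slot only, here one substitutes the same variables into both slots.

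Finally, granting that $g\not\equiv 0$, a non-zero binary form of degree $2N_2+1$ has at most $2N_2+1$ zeros in $\PC$, so there are at most $2N_2+1$ pairs $(\bar\alpha_0,\bar\beta_0)$ as in the statement, which completes the argument. The crux — and the step I expect to be the real work — is showing $g\not\equiv 0$. I would argue as follows: if $g\equiv 0$ then $\frac{\partial Q^{\alpha_0,\beta_0}}{\partial x}(\alpha_0,\beta_0)\equiv 0$, and coupling this with Euler's identity $\alpha_0\,\frac{\partial Q^{\alpha_0,\beta_0}}{\partial x}(\alpha_0,\beta_0)+\beta_0\,\frac{\partial Q^{\alpha_0,\beta_0}}{\partial y}(\alpha_0,\beta_0)=(N_2+1)\,Q^{\alpha_0,\beta_0}(\alpha_0,\beta_0)=0$ (the last equality holding by construction of $Q^{\alpha_0,\beta_0}$) shows that $\frac{\partial Q^{\alpha_0,\beta_0}}{\partial y}(\alpha_0,\beta_0)$ vanishes as well whenever $\beta_0\neq 0$; hence for a dense set of admissible $(\alpha_0,\beta_0)$ the interpolated kernel polynomial $Q^{\alpha_0,\beta_0}$ would have $(\alpha_0,\beta_0)$ as a double root and so would never be square-free, contradicting \Cref{theo:boundOfAlgDegree} together with \Cref{lem:iterpolMu}, which guarantee that, outside a proper subvariety of choices of the $N_2-N_1+1$ interpolation points, the resulting kernel polynomial is square-free. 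A more hands-on alternative is to clear $M(\alpha_0,\beta_0)$ and rewrite the identity $g\equiv 0$ in terms of the Lagrange interpolant $L$ underlying $Q^{\alpha_0,\beta_0}=L\cdot P_v+P_w$, reducing it to a polynomial identity of the shape $L'\cdot P_v^2 = -(P_v P_w'-P_w P_v')$ along the diagonal; comparing the orders of vanishing of the two sides at the roots of $P_v$, and using that $P_v$ and $P_w$ are relatively prime (\Cref{prop:PvPwCoprimes}) together with $M$ being coprime to $P_v$ (no node of $S$ is a root of $P_v$), one sees this identity cannot hold.
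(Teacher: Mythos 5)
Your reduction is the paper's: you characterize the bad pairs as zeros of $g(\alpha_0,\beta_0):=T^{\alpha_0,\beta_0}(\alpha_0,\beta_0)$ and the bihomogeneity computation giving $\deg g=2N_2+1$ is correct. The gap is in the non-vanishing step. Your first argument infers a contradiction with \Cref{theo:boundOfAlgDegree}, but that theorem only shows the non-square-free locus is a proper closed subvariety (a hypersurface) of $\PC^{\,N_2-N_1+1}$ when \emph{all} $N_2-N_1+1$ interpolation nodes vary. The slice $\{S\}\times\PC$ is one-dimensional, and for $N_2-N_1\ge 1$ nothing prevents a one-dimensional set from lying entirely inside that hypersurface; so ``$g\equiv 0$ for this particular $S$'' is not excluded by \Cref{theo:boundOfAlgDegree}. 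Note also that \Cref{thm:badValuesToInterpolateQ}, which this lemma feeds, is asserted for \emph{every} admissible $S$, not a generic one, so a generic-$S$ patch would not suffice. Your second, ``hands-on'' alternative is the right kind of idea but is left as a sketch: the identity $L'P_v^2=-(P_vP_w'-P_wP_v')$ as displayed suppresses the $P_v(\alpha_0,\beta_0)^{-1}$ and $M(\alpha_0,\beta_0)^{-1}$ that live inside $\partial_x L^{\alpha_0,\beta_0}(\alpha_0,\beta_0)$, and ``one sees this identity cannot hold'' is precisely the step that needs an argument.

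The paper closes this with a short divisibility observation that you can substitute directly. Regroup the two displayed summands of $T^{\alpha_0,\beta_0}(\alpha_0,\beta_0)$ as
\begin{equation*}
g(\alpha_0,\beta_0)=\Bigl(-P_w\,M\,\tfrac{\partial P_v}{\partial x}\Bigr)(\alpha_0,\beta_0)
+\Bigl(O_x^{\alpha_0,\beta_0}-P_w\,\tfrac{\partial M}{\partial x}\Bigr)(\alpha_0,\beta_0)\cdot P_v(\alpha_0,\beta_0).
\end{equation*}
Because $P_v$ is coprime to $P_w$ (\Cref{prop:PvPwCoprimes}) and to $M$ (no point of $S$ is a root of $P_v$), and $\frac{\partial P_v}{\partial x}$ has degree strictly smaller than $P_v$, the binary form $P_v$ does not divide $P_w\,M\,\frac{\partial P_v}{\partial x}$; reducing the display modulo $P_v(\alpha_0,\beta_0)$ therefore shows $g\not\equiv 0$, and the degree bound finishes the proof without invoking \Cref{theo:boundOfAlgDegree} at all.
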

    
    \begin{proof}
      Following the proof of \Cref{thm:alphaisqroot}, we study
      $T^{\alpha_0,\beta_0}(\alpha_0,\beta_0) \in
      \closurefield[\alpha_0,\beta_0]$.
      \begin{align*}
        T^{\alpha_0,\beta_0}(\alpha_0,\beta_0) = &
                                                   -P_w(\alpha_0,\beta_0)
                                                   \left(
                                                   \frac{\partial M }{\partial x} P_v
                                                   +
                                                   M  \frac{\partial P_v}{\partial x}
                                                   \right)\!(\alpha_0,\beta_0)
                                                   + O_x^{\alpha_0,\beta_0}(\alpha_0,\beta_0) P_v(\alpha_0,\beta_0)
        \\ = &
               \left(
               -P_w \,
               M \,
               \frac{\partial P_v}{\partial x}
               \right)
               \!(\alpha_0,\beta_0)
               +
               \left(
               O_x^{\alpha_0,\beta_0} \,
               -P_w 
               \frac{\partial M }{\partial x}
               \right)\!(\alpha_0,\beta_0)
               P_v(\alpha_0,\beta_0)
      \end{align*}
      Note that $P_w$ and $P_v$ are coprime. Also, $M$ and $P_v$ are
      coprime. Hence, the polynomial
      $T^{\alpha_0,\beta_0}(\alpha_0,\beta_0)$ is not zero because
      $P_v$ does not divide
      $ P_w \, M \, \frac{\partial P_v}{\partial x} $. We conclude the
      proof by noting that the degree of
      $T^{\alpha_0,\beta_0}(\alpha_0,\beta_0)$ is bounded by
      $2 N_2 + 1$.
    \end{proof}
    
    \begin{lemma} \label{thm:sameRootSamePol}
      Let
      $(\bar{\alpha}_0,\bar{\beta}_0),(\hat\alpha_0,\hat\beta_0) \in
      \PP^1(\closurefield)$ such that
      $(\bar{\alpha}_0,\bar{\beta}_0),(\hat\alpha_0,\hat\beta_0) \not\in S$,
      $P_v(\bar{\alpha}_0,\bar{\beta}_0) \neq 0$. Hence,
      $Q^{\bar{\alpha}_0,\bar{\beta}_0}(\hat\alpha_0,\hat\beta_0) = 0$ if and
      only if
      $Q^{\bar{\alpha}_0,\bar{\beta}_0}(x,y) =
      Q^{\hat\alpha_0,\hat\beta_0}(x,y)$.
    \end{lemma}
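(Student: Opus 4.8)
The plan is to reduce both implications to the uniqueness part of \Cref{lem:iterpolMu}: a kernel polynomial of $H^{N_2+1}$ of the shape $P_\mu \* P_v + P_w$ is completely determined by the $N_2-N_1+1$ points of $\PP^1(\closurefield)$ on which it vanishes, as long as none of these points is a root of $P_v$. Recall that, by construction (see the paragraph preceding the lemma and \Cref{lem:iterpolMu}), $Q^{\bar\alpha_0,\bar\beta_0}$ is \emph{the} kernel polynomial of this shape vanishing on $S \cup \{(\bar\alpha_0,\bar\beta_0)\}$; this set has cardinality $N_2-N_1+1$ because $(\bar\alpha_0,\bar\beta_0)\notin S$, and none of its points is a root of $P_v$ since the points of $S$ avoid the roots of $P_v$ by hypothesis and $P_v(\bar\alpha_0,\bar\beta_0)\neq 0$.

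For the implication ($\Leftarrow$): if $Q^{\bar\alpha_0,\bar\beta_0}(x,y) = Q^{\hat\alpha_0,\hat\beta_0}(x,y)$, then (the equality presupposing that $Q^{\hat\alpha_0,\hat\beta_0}$ is well defined, i.e.\ $P_v(\hat\alpha_0,\hat\beta_0)\neq 0$) we simply evaluate: $Q^{\hat\alpha_0,\hat\beta_0}$ vanishes at $(\hat\alpha_0,\hat\beta_0)$ by its very definition, hence so does $Q^{\bar\alpha_0,\bar\beta_0}$.

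For ($\Rightarrow$): assume $Q^{\bar\alpha_0,\bar\beta_0}(\hat\alpha_0,\hat\beta_0) = 0$. First I would rule out the degenerate case $P_v(\hat\alpha_0,\hat\beta_0) = 0$: writing $Q^{\bar\alpha_0,\bar\beta_0} = P_\mu \* P_v + P_w$, the hypothesis $Q^{\bar\alpha_0,\bar\beta_0}(\hat\alpha_0,\hat\beta_0) = 0$ together with $P_v(\hat\alpha_0,\hat\beta_0) = 0$ would force $P_w(\hat\alpha_0,\hat\beta_0) = 0$, contradicting that $P_v$ and $P_w$ are coprime (\Cref{prop:PvPwCoprimes}). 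Therefore $P_v(\hat\alpha_0,\hat\beta_0)\neq 0$, so $Q^{\hat\alpha_0,\hat\beta_0}$ is well defined and is, by \Cref{lem:iterpolMu}, the unique kernel polynomial of the shape $P_\mu \* P_v + P_w$ vanishing on $S \cup \{(\hat\alpha_0,\hat\beta_0)\}$. But $Q^{\bar\alpha_0,\bar\beta_0}$ is also a kernel polynomial of that shape, it vanishes on $S$ by construction, and it vanishes at $(\hat\alpha_0,\hat\beta_0)$ by assumption; by the uniqueness in \Cref{lem:iterpolMu} we conclude $Q^{\bar\alpha_0,\bar\beta_0}(x,y) = Q^{\hat\alpha_0,\hat\beta_0}(x,y)$.

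The only subtle point is the degenerate case $P_v(\hat\alpha_0,\hat\beta_0)=0$, handled above by coprimality of $P_v$ and $P_w$; apart from that, the argument is a direct invocation of the uniqueness statement of \Cref{lem:iterpolMu}, so I do not anticipate any real obstacle.
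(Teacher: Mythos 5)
Your proof is correct, and it takes a slightly different — and arguably cleaner — route than the paper. The paper proves the forward implication by forming the difference $Q^{\bar\alpha_0,\bar\beta_0}-Q^{\hat\alpha_0,\hat\beta_0}=P_v\,(P_{\bar\mu}-P_{\hat\mu})$ and then counting zeros directly: this binary form of degree at most $N_2+1$ vanishes at the $N_1+1$ roots of $P_v$, at the $N_2-N_1$ points of $S$, and at $(\hat\alpha_0,\hat\beta_0)$, hence is identically zero. You instead reduce the whole statement to the uniqueness clause of \Cref{lem:iterpolMu} applied to the interpolation set $S\cup\{(\hat\alpha_0,\hat\beta_0)\}$, after checking (by the same coprimality argument as the paper) that $(\hat\alpha_0,\hat\beta_0)$ is not a root of $P_v$. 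The two arguments are equivalent in substance — \Cref{lem:iterpolMu}'s uniqueness is itself a dimension count — but your framing is more modular, and it quietly sidesteps a small imprecision in the paper's wording: when $P_v$ is not square-free (which is exactly the regime in which this lemma is used), its $N_1+1$ roots are not distinct, so the paper's count of ``$N_2+2$ different roots'' needs to be read with multiplicity, or one should instead divide by $P_v$ and argue that $P_{\bar\mu}-P_{\hat\mu}$, of degree at most $N_2-N_1$, vanishes at the $N_2-N_1+1$ distinct points of $S\cup\{(\hat\alpha_0,\hat\beta_0)\}$. Your invocation of \Cref{lem:iterpolMu} does exactly that. Also welcome is your explicit handling of the backward implication's hidden hypothesis that $Q^{\hat\alpha_0,\hat\beta_0}$ is well defined, which the paper leaves implicit.
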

    
    \begin{proof}
      Assume that
      $Q^{\bar{\alpha}_0,\bar{\beta}_0}(\hat\alpha_0,\hat\beta_0)
      = 0$.  Following \Cref{lem:iterpolMu}, we write
      $Q^{\bar{\alpha}_0,\bar{\beta}_0} = P_{\bar{\mu}} P_v + P_w$ and
      $Q^{\hat\alpha_0,\hat\beta_0} = P_{\hat\mu} P_v + P_w$.
      As $P_v$ and $P_w$ are coprime and
      $Q^{\bar{\alpha}_0,\bar{\beta}_0}(\hat\alpha_0,\hat\beta_0)
      = 0$, then $P_v(\hat\alpha_0,\hat\beta_0) \neq 0$.
      Consider
      $Q^{\bar{\alpha}_0,\bar{\beta}_0} -
      Q^{\hat\alpha_0,\hat\beta_0} = P_v (P_{\bar{\mu}} -
      P_{\hat\mu})$. This polynomial belongs to $\K[x,y]$ and it
      vanishes over $\PC$ at the $N_1 + 1$ roots of $P_v$, at the
      $N_2 - N_1$ points on $S$, and at
      $(\hat\alpha_0,\hat\beta_0) \in \PC$.  Hence,
      $Q^{\bar{\alpha}_0,\bar{\beta}_0} -
      Q^{\hat\alpha_0,\hat\beta_0} = 0$ as it is a binary form in
      $\K[x,y]$ of degree at most $N_2 + 1$ with $N_2 + 2$ different
      roots over $\PC$. Therefore,
      $Q^{\bar{\alpha}_0,\bar{\beta}_0}(x,y) =
      Q^{\hat\alpha_0,\hat\beta_0}(x,y)$.

      To prove the second case, note that, by definition,
      $Q^{\hat\alpha_0,\hat\beta_0}(\hat\alpha_0,\hat\beta_0)
      = 0$. Hence, if we assume that
      $Q^{\bar{\alpha}_0,\bar{\beta}_0}(x,y) =
      Q^{\hat\alpha_0,\hat\beta_0}(x,y)$, then we have 
      $Q^{\bar{\alpha}_0,\bar{\beta}_0}(\hat\alpha_0,\hat\beta_0)
      = 0$.
    \end{proof}
    
    \begin{proof}[Proof of {\Cref{thm:badValuesToInterpolateQ}}]
      We want to bound the number of different points
      $(\bar\alpha_0,\bar\beta_0) \in \PC$ such that
      $Q^{\bar\alpha_0,\bar\beta_0}(x,y)$ is not a square-free binary form
      over $\closurefield[x,y]$. If the binary form
      $Q^{\bar\alpha_0,\bar\beta_0}(x,y)$ is not square-free, then it has a
      root over $\PC$ with multiplicity bigger than one. If such a
      root is $(\alpha_i,\beta_i) \in S$, we can bound the possible
      number of different values for $(\bar\alpha_0,\bar\beta_0) \in \PC$ by
      $(N_2 + 1)$ (\Cref{thm:choosedDoubleRoot}). Hence, if there is a
      $i$ such that $(\alpha_i,\beta_i) \in S$ has multiplicity bigger
      than one as a root of $Q^{\bar\alpha_0,\bar\beta_0}(x,y)$, we can bound
      the possible number of different values for
      $(\bar\alpha_0,\bar\beta_0) \in \PC$ by
      $\# S \cdot (N_2 + 1) = (N_2 - N_1) (N_2 + 1)$.

      If $Q^{\bar\alpha_0,\bar\beta_0}$ is not square-free and the
      multiplicity of every root $(\alpha_i,\beta_i) \in S$ is one,
      then there must be a root
      $(\hat{\alpha}_0,\hat{\beta}_0) \in \PC$ such that
      $(\hat{\alpha}_0,\hat{\beta}_0) \not\in S$ and its multiplicity
      as a root of $Q^{\bar\alpha_0,\bar\beta_0}$ is bigger strictly
      than one.
      By \Cref{thm:sameRootSamePol},
      $Q^{\hat{\alpha}_0,\hat{\beta}_0}(x,y) =
      Q^{\bar\alpha_0,\bar\beta_0}(x,y)$, and so
      $(\hat{\alpha}_0,\hat{\beta}_0) \in \PC$ has multiplicity bigger
      than one as a root of $Q^{\hat{\alpha}_0,\hat{\beta}_0}(x,y)$.
      Hence, $P_v(\hat{\alpha}_0,\hat{\beta}_0) \neq 0$ and, by
      \Cref{thm:choosedDoubleRoot}, we can bound the possible number
      of different values for $(\hat{\alpha}_0,\hat{\beta}_0) \in \PC$
      by $2 N_2 + 1$.
      As $Q^{\hat{\alpha}_0,\hat{\beta}_0}(x,y)$ has $N_1 + 1$ roots
      over $\PC \setminus S$ then, by \Cref{thm:sameRootSamePol},
      there are $N_1 + 1$ different $(\bar\alpha_0,\bar\beta_0) \in \PC$ such
      that
      $Q^{\hat{\alpha}_0,\hat{\beta}_0}(x,y) =
      Q^{\bar\alpha_0,\bar\beta_0}(x,y)$. Hence, for each
      $(\hat{\alpha}_0,\hat{\beta}_0) \in \PC$ such that
      $(\hat{\alpha}_0,\hat{\beta}_0)$ has multiplicity bigger than
      one as a root of $Q^{\hat{\alpha}_0,\hat{\beta}_0}(x,y)$, there
      are $N_1 + 1$ points $(\bar\alpha_0,\bar\beta_0) \in \PC$ such that
      $(\hat{\alpha}_0,\hat{\beta}_0)$ has multiplicity bigger than
      one as a root of $Q^{\bar\alpha_0,\bar\beta_0}(x,y)$. Therefore, the
      number of different values for $(\bar\alpha_0,\bar\beta_0) \in \PC$ such
      that $Q^{\bar\alpha_0,\bar\beta_0}(x,y)$ has a root in $\PC \setminus S$
      with multiplicity bigger than one is bounded by
      $(N_1 + 1) (2 N_2 + 1)$.

      Joining these bounds, we deduce that there are at most
      $(N_2-N_1) (N_2 + 1) + (N_1 + 1) (2 N_2 + 1)$ different
      $(\bar\alpha_0,\bar\beta_0) \in \PC$ such that
      $Q^{\bar\alpha_0,\bar\beta_0}$ is not square-free.  Recalling
      that $N_1 = D - N_2$ and $N_2 \leq D$
      (\Cref{prop:existenceN1andN2}), we can bound
      $(N_2-N_1) (N_2 + 1) + (N_1 + 1) (2 N_2 + 1)$, by \linebreak
      $D^2 + 3 D + 1$.
    \end{proof}

    We can also relate \Cref{thm:badValuesToInterpolateQ} to the
    \emph{Waring locus} of the binary form $f(x,y)$, see
    \cite{carlini_waring_2017}. For example, this proposition improves
    \cite[Proposition~3.8]{carlini_waring_2017} by a factor of two.
    Moreover, it shows that the uniqueness condition from
    \cite[Proposition~3.8]{carlini_waring_2017} misses some
    assumptions to hold \footnote{The authors are not taking into
      account that the lambdas that they use in their proof are not
      unique, and so they give us more degrees of freedom that we can
      use to fix more terms in the decomposition.}.

\subsubsection[Getting the Lambdas]{Complexity of computing $\lambda$} 
  \label{sec:solving_the_lambdas}

  We compute the coefficients $\lambda_j$ of the decomposition by
  solving a linear system involving a transposed Vandermonde matrix
  (Step~\ref{stp:solveLambdas} of \Cref{alg:algorithmDecomp}).  We
  follow \citet{kaltofen1989improved} to write the solution of
  \Cref{eq:gettingLambdas} as the evaluation of a rational function
  over the roots of a univariate polynomial.
  
  \begin{definition}
      Given a polynomial $P(x) := \sum_{i=0}^n a_i x^i$
      and  $0 < k \leq n$, let $$Quo(P(x), x^k) := \sum_{i=k}^{n} a_{i} x^{i-k}.$$
    \end{definition}

    \begin{proposition}[\xspace{\citealp[Sec.~5]{kaltofen1989improved}}]
    \label{prop:solvingTransposeVandermonde}

      If $\alpha_j \neq \alpha_i$, for all $i \neq j$, then there is
      a unique solution to the system of
      \Cref{eq:transposeOfVandermonde}.
      \begin{align}
        {\scriptsize
        \label{eq:transposeOfVandermonde}
        \begin{pmatrix}
          1 & 1 & \cdots & 1\\
          \alpha_1 & \alpha_2 & \cdots & \alpha_r   \\
          \vdots & \vdots & & \vdots  \\ 
          \alpha_1^{r-1} & \alpha_2^{r-1} & \cdots & \alpha_r^{r-1}
         \end{pmatrix}
         \lambda = 
        \begin{pmatrix}
         a_0 \\ a_1 \\ \vdots \\ a_{r-1}
        \end{pmatrix}
        }
      \end{align}
      Moreover, if the
      solution is $\lambda = (\lambda_1, \dots, \lambda_r)^\transpose$ then, $\lambda_j =
      \frac{T}{Q'}(\alpha_j)$
      where $Q'(x)$ is the derivative of
      $Q(x) := \prod_{i=1}^r (x-\alpha_i)$, 
      $R(x) := \sum_{i=1}^{r} a_{r-i} x^{i-1}$ and
      $T(x) := Quo\big(Q(x)\*R(x),x^r\big)$.
    \end{proposition}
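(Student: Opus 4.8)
The plan is to check directly that the vector with $j$-th entry $\lambda'_j := \frac{T}{Q'}(\alpha_j)$ solves \Cref{eq:transposeOfVandermonde}, and then conclude by uniqueness. Uniqueness is immediate: the coefficient matrix of \Cref{eq:transposeOfVandermonde} is the transpose of the Vandermonde matrix on the nodes $\alpha_1,\dots,\alpha_r$, whose determinant $\prod_{i<j}(\alpha_j-\alpha_i)$ is nonzero exactly because the $\alpha_j$ are pairwise distinct; hence the system has a unique solution. Also note that $\lambda'_j$ is well defined since $Q'(\alpha_j)=\prod_{i\neq j}(\alpha_j-\alpha_i)\neq 0$, and that $\deg T\le r-1<r=\deg Q$ because $T=Quo(Q\cdot R,\,x^r)$ with $\deg(Q R)\le (r)+(r-1)=2r-1$.

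Since $Q$ has the simple roots $\alpha_1,\dots,\alpha_r$ and $\deg T<\deg Q$, the Lagrange/Hermite partial-fraction formula gives
$$\frac{T(x)}{Q(x)}=\sum_{j=1}^r\frac{T(\alpha_j)/Q'(\alpha_j)}{x-\alpha_j}=\sum_{j=1}^r\frac{\lambda'_j}{x-\alpha_j}.$$
I would then expand both sides in the field $\K((1/x))$ of formal Laurent series in $1/x$ (``at $x=\infty$'') and compare the coefficients of $x^{-1},\dots,x^{-r}$. On the right, $\tfrac{1}{x-\alpha_j}=\sum_{i\ge 0}\alpha_j^i x^{-i-1}$, so the coefficient of $x^{-i-1}$ is $\sum_{j=1}^r\lambda'_j\alpha_j^i$. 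On the left, write $Q\cdot R=T\,x^r+S$ with $\deg S<r$ (so $T=Quo(QR,x^r)$ and $S$ is the remainder); dividing by $Q\,x^r$ gives $\frac{T}{Q}=\frac{R}{x^r}-\frac{S}{Q\,x^r}$. From $R=\sum_{i=1}^r a_{r-i}x^{i-1}$ we get $\frac{R}{x^r}=\sum_{i=0}^{r-1}a_i\,x^{-i-1}$, while $\deg S<\deg Q$ forces every term of $\frac{S}{Q\,x^r}$ to have order $x^{-r-1}$ or lower. Hence for $i=0,\dots,r-1$ the coefficient of $x^{-i-1}$ in $\frac{T}{Q}$ is exactly $a_i$. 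Equating the two expansions yields $\sum_{j=1}^r\lambda'_j\alpha_j^i=a_i$ for $i=0,\dots,r-1$, i.e.\ $\lambda'$ solves \Cref{eq:transposeOfVandermonde}, so $\lambda'=\lambda$.

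The only delicate part is the bookkeeping in the last step: setting up the expansion in $\K((1/x))$ cleanly, the re-indexing relating the coefficients of $R$ to the $a_i$, and verifying that the correction term $\frac{S}{Q\,x^r}$ genuinely contributes nothing to the coefficients of $x^{-1},\dots,x^{-r}$; everything else is routine. An essentially equivalent alternative is to pass to reciprocal polynomials $\widetilde Q(x)=x^rQ(1/x)=\prod_i(1-\alpha_i x)$ and $N(x)=\sum_{i=0}^{r-1}a_ix^i=x^{r-1}R(1/x)$, note that \Cref{eq:transposeOfVandermonde} says $N(x)\equiv\sum_j \lambda_j/(1-\alpha_j x)\pmod{x^{r}}$, clear denominators to obtain a numerator $P=(N\widetilde Q)\bmod x^{r}$ of degree $<r$, and recover $\lambda_j$ by evaluating $P$ at $1/\alpha_j$ together with $\prod_{i\neq j}(1-\alpha_i/\alpha_j)=Q'(\alpha_j)/\alpha_j^{r-1}$; this agrees with the stated formula because $T(x)=x^{r-1}P(1/x)$.
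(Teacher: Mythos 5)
Your proof is correct, but note that the paper does not actually prove this proposition: it is cited verbatim from \citet[Sec.~5]{kaltofen1989improved} and used as a black box, so there is no in-paper argument to compare against. Your Laurent-series-at-infinity derivation (partial fractions of $T/Q$, then reading off $a_0,\dots,a_{r-1}$ from $\tfrac{R}{x^r}$ after checking that $\tfrac{S}{Q x^r}$ only contributes from order $x^{-r-1}$ onward) is exactly the standard generating-function technique underlying the cited reference, and the bookkeeping is right; in particular $\deg T\le r-1<\deg Q$ and the re-indexing $\tfrac{R}{x^r}=\sum_{i=0}^{r-1}a_i x^{-i-1}$ both check out, so the argument is complete.
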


    \begin{lemma}
      \label{theo:getLambdas}
      Given a binary form $f(x,y) := \sum_{i = 0}^D \binom{D}{i} a_i
      x^i y^{D-i}$, let $Q$ be a square-free kernel polynomial of degree
      $r$, obtained after \cref{stp:solveLambdas} of
      \Cref{alg:algorithmDecomp}. Assume that $y$ does not divide $Q$.
      Let $\alpha_j$ be the $j$-th roots of $Q(x)$, $Q'(x)$ be the
      derivative of $Q(x)$ and the polynomial \linebreak $T(x) :=
      Quo\big(Q(x)\*R(x),x^r\big)$, with $R(x) := \sum_{i=1}^{r} a_{r-i}
      x^{i-1}$. Then, each $\lambda_j$ from 
      \cref{stp:solveLambdas} in \Cref{alg:gettingVandW}
      can be written as $\lambda_j = \frac{T}{Q'}(\alpha_j)$.
    \end{lemma}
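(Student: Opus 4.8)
The plan is to reduce the statement to \Cref{prop:solvingTransposeVandermonde}. First I would use the hypothesis $y\nmid Q$. Writing $Q(x,y)=\prod_{j=1}^r(\beta_j x-\alpha_j y)$ as in \cref{stp:solveLambdas}, a factor $(\beta_j x-\alpha_j y)$ is divisible by $y$ precisely when $\beta_j=0$; hence $y\nmid Q$ forces every $\beta_j\neq 0$, so we may normalize the factorization by taking $\beta_j=1$ (rescaling each $\lambda_j$ accordingly). With this normalization $Q(x):=Q(x,1)=\big(\prod_j\beta_j\big)\prod_{j=1}^r(x-\alpha_j)$ has degree exactly $r$, and since $Q$ is square-free its roots $\alpha_1,\dots,\alpha_r$ are pairwise distinct. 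Moreover the decomposition handled at \cref{stp:solveLambdas} now reads $f(x,y)=\sum_{j=1}^r\lambda_j(\alpha_j x+y)^D$.

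Next I would expand this decomposition with the binomial theorem: $(\alpha_j x+y)^D=\sum_{i=0}^D\binom{D}{i}\alpha_j^i x^i y^{D-i}$. Comparing with $f(x,y)=\sum_{i=0}^D\binom{D}{i}a_i x^i y^{D-i}$ and using that $\binom{D}{i}\neq 0$ in characteristic zero gives $a_i=\sum_{j=1}^r\lambda_j\alpha_j^i$ for all $0\le i\le D$. Retaining only the equations $0\le i\le r-1$ yields exactly the transposed Vandermonde system \Cref{eq:transposeOfVandermonde} with pairwise distinct nodes $\alpha_1,\dots,\alpha_r$ and right-hand side $(a_0,\dots,a_{r-1})^\transpose$; this $r\times r$ block is invertible, so it has a unique solution. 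Since the full (over-determined) system \Cref{eq:gettingLambdas} is consistent --- a decomposition exists by Sylvester's theorem (\Cref{theo:Silv1851}), as $Q$ is a square-free kernel polynomial --- that same vector is its unique solution, i.e.\ it is precisely the $\lambda=(\lambda_1,\dots,\lambda_r)^\transpose$ returned at \cref{stp:solveLambdas}.

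Finally I would apply \Cref{prop:solvingTransposeVandermonde} to this Vandermonde system: its unique solution satisfies $\lambda_j=\frac{T}{Q'}(\alpha_j)$, where $Q'$ is the derivative of $\prod_{i=1}^r(x-\alpha_i)$, $R(x)=\sum_{i=1}^r a_{r-i}x^{i-1}$, and $T(x)=Quo\big(\prod_{i=1}^r(x-\alpha_i)\cdot R(x),\,x^r\big)$. To match the statement, which uses $Q(x)=Q(x,1)$ rather than the monic $\prod_i(x-\alpha_i)$, I would observe that $T/Q'$ is invariant under multiplying $Q$ by a nonzero constant $c$: the map $P\mapsto Quo(P,x^r)$ is $\field$-linear, so $T$ scales by $c$ and $Q'$ scales by $c$, leaving the ratio unchanged. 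Hence $\lambda_j=\frac{T}{Q'}(\alpha_j)$ with $Q$, $Q'$, $R$, $T$ exactly as in the lemma. The only slightly delicate points are this scaling invariance and the bookkeeping identifying the solution of the $r\times r$ Vandermonde block with the vector of $\lambda_j$ produced by the algorithm; the rest is a direct computation.
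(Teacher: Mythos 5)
Your proposal is correct and follows essentially the same route as the paper's proof: normalize $\beta_j=1$ using $y\nmid Q$, observe that the $r\times r$ leading principal subsystem of \Cref{eq:gettingLambdas} is exactly \Cref{eq:transposeOfVandermonde} with distinct nodes, and invoke \Cref{prop:solvingTransposeVandermonde}. The one place where you add genuine care that the paper glosses over is the scaling issue: the paper writes $Q(x,y)=\prod_i(x-\alpha_i y)$ as if $Q$ were monic, whereas the square-free kernel polynomial produced by the algorithm need not be, and \Cref{prop:solvingTransposeVandermonde} is stated for the monic $\prod_i(x-\alpha_i)$; your observation that $T/Q'$ is invariant under rescaling $Q$ by a nonzero constant (since $Quo(\cdot,x^r)$ is linear) cleanly closes that small gap.
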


    \begin{proof}
      
      As $y$ does not divide $Q$, we can write it as $Q(x,y) = \prod_i (x -
      \alpha_i y)$, where all the $\alpha_i$ are different. Hence, as
      the $r \times r$ leading principal submatrix of
      \Cref{eq:gettingLambdas} is invertible, we can restrict the
      problem to solve that $r \times r$ leading principal subsystem. This
      system is \Cref{eq:transposeOfVandermonde}. Therefore, the proof
      follows from
      \Cref{prop:solvingTransposeVandermonde}.
    \end{proof}

    \begin{propdef}[Symbolic decomposition]
      \label{def:symbolicDecomp}
      
      Let $Q$ be a square-free kernel polynomial related to a minimal
      decomposition of a binary form $f$ of degree $D$, such that $y$
      does not divide $Q$. In this case, we can write $f$ as
      $$f(x, y) = \sum_{\{\alpha \in \closurefield \mid Q(\alpha)=0\}} \frac{T}{Q'}(\alpha) \* (\alpha x + y)^D .$$
    \end{propdef}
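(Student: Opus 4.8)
The plan is to read off the identity directly from Sylvester's theorem (\Cref{theo:Silv1851}) together with the closed form for the coefficients of the decomposition given in \Cref{theo:getLambdas}. First I would use the hypothesis that $y$ does not divide $Q$ to note that the coefficient of $x^r$ in $Q$ is non-zero, where $r = \deg Q$, so up to rescaling we may assume $Q$ is monic in $x$ and factor it as $Q(x,y) = \prod_{j=1}^{r}(x - \alpha_j y)$ with $\alpha_1,\dots,\alpha_r \in \closurefield$; these are pairwise distinct because $Q$ is square-free, and they are exactly the $r$ roots of the dehomogenization $Q(x) := Q(x,1)$. Since $Q$ is a square-free kernel polynomial of degree $r$ attached to a minimal decomposition, its coefficient vector lies in $\ker(H^r_a)$, so \Cref{theo:Silv1851}, applied with all $\beta_j = 1$, produces scalars $\lambda_1,\dots,\lambda_r \in \closurefield$ such that $f(x,y) = \sum_{j=1}^{r} \lambda_j (\alpha_j x + y)^D$.

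It then remains to identify the $\lambda_j$. Expanding each $(\alpha_j x + y)^D$ by the binomial theorem and matching the coefficient of $\binom{D}{i} x^i y^{D-i}$ on both sides of $f(x,y) = \sum_j \lambda_j(\alpha_j x + y)^D$ gives $a_i = \sum_{j=1}^{r}\lambda_j \alpha_j^i$ for $0 \le i \le D$; the first $r$ of these equations are precisely the transposed Vandermonde system \Cref{eq:transposeOfVandermonde}, which is invertible because the $\alpha_j$ are distinct. Hence the $\lambda_j$ coincide with the unique solution of that system, and \Cref{theo:getLambdas} (equivalently \Cref{prop:solvingTransposeVandermonde}) gives $\lambda_j = \tfrac{T}{Q'}(\alpha_j)$, with $Q'$ the derivative of $Q(x)$, $R(x) = \sum_{i=1}^{r} a_{r-i} x^{i-1}$ and $T(x) = Quo(Q(x)R(x), x^r)$. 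Substituting this back into $f(x,y) = \sum_j \lambda_j(\alpha_j x + y)^D$ and re-indexing the sum over $\{\alpha \in \closurefield : Q(\alpha) = 0\}$ yields the asserted formula.

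I do not expect a genuine obstacle: the argument is essentially an assembly of results proved earlier. The only points that need care are (i) checking that $\tfrac{T}{Q'}(\alpha)$ is unaffected by the normalisation of $Q$ --- scaling $Q \mapsto cQ$ scales both $Q'$ and $T = Quo(QR, x^r)$ by $c$, so the monic reduction costs nothing and the Proposition--Definition is well posed for any representative of $Q$; (ii) the bookkeeping of the binomial coefficients $\binom{D}{i}$, so that the coefficient comparison produces the plain Vandermonde system rather than a scaled one; and (iii) observing that $Q(x)$ has degree exactly $r$ and is square-free precisely because $y \nmid Q$ and $Q$ is square-free, which guarantees $r$ simple roots and hence that $Q'(\alpha) \neq 0$ at each of them, so that the rational evaluation $T/Q'$ in the statement is defined.
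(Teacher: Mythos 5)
Your proof is correct and follows essentially the paper's own route: the paper treats Proposition--Definition~\ref{def:symbolicDecomp} as an immediate restatement of Lemma~\ref{theo:getLambdas}, whose proof is exactly the reduction you describe --- use $y \nmid Q$ to write $Q(x,y)=\prod_j(x-\alpha_j y)$ with distinct $\alpha_j$, observe that the $r\times r$ leading block of \Cref{eq:gettingLambdas} (with all $\beta_j=1$) is the invertible Vandermonde system \Cref{eq:transposeOfVandermonde}, and invoke \Cref{prop:solvingTransposeVandermonde}. Your additional remark (i), that $T/Q'$ is invariant under $Q\mapsto cQ$, is a small but genuine point the paper leaves implicit, since \Cref{prop:solvingTransposeVandermonde} is phrased for a monic $Q$ while the kernel polynomial in the statement need not be; noting this scale-invariance is the right way to close that gap.
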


    \begin{remark}\label{thm:ifYdividesQ}
      If the square-free kernel polynomial related to a decomposition
      of rank $r$ is divisible by $y$, we can compute
      $\{\lambda_j\}_{j < r}$ of \Cref{eq:gettingLambdas} as in
      \Cref{theo:getLambdas}, by taking $\frac{Q}{y}$ as the kernel
      polynomial.      
      It is without loss of generality to consider $Q = P_{(u_0, \dots, -1,0)^\transpose}$, because $Q$ is square-free, and so
      $y^2$ can not divide it.
      Hence, $\lambda_r = a_D - \sum_{i =
      0}^{r-2} u_i a_{D-r+i+1}$  \cite[Equation~2.12]{reznick2013length}.
  \end{remark}
    
  To summarize the section, given a binary form $f$ of rank $r$, there
  is a square-free kernel polynomial $Q$ of the degree $r$, such that
  the largest degree of its irreducible factors is bounded by
  $\min(r, D-r+1)$ (\Cref{def:symbolicDecomp}). If $Q(x,y)$ is
  not divisible by $y$, the decomposition is
    $$f(x,y) = \sum\nolimits_{\{\alpha \in \closurefield \mid Q(\alpha)=0\}} \frac{T}{Q'}(\alpha) \* ( \alpha x + y )^D,$$
    
    \noindent where $T$ and $Q'$ are polynomials whose coefficients belong to
    $\field$ and whose degrees are bounded by $r$, defined in
    \Cref{theo:getLambdas}.
    When $y$ divides $Q$, the form is similar.

    \subsubsection{Lower bounds on the algebraic degree}
    \label{sec:lower-bounds-algebr}
    In this section we analyze the tightness of the bound of
    \Cref{cor:dg-irr-Q}. To do so, we construct families of examples
    where the  bound is tight. We present two families of examples. In
    the first one, the decomposition is unique. In the second one, it
    is not.
    
    \begin{proposition}[{\citealp[Theorem~5.2]{heinig1984algebraic}}] \label{thm:weFixPvAndPwAndWeHaveABinaryForm}
      For every pair of relatively prime binary forms, $\bar{P}_v$ and
      $\bar{P}_w$, of degrees $\bar{N}_1 + 1$ and $\bar{N}_2 + 1$,
      $\bar{N}_1 \leq \bar{N}_2$, there is a sequence
      $a = (a_0,\dots,a_{\bar{N}_1 + \bar{N}_2})$ such that
      $N_1^a = \bar{N}_1$, $N_2^a = \bar{N}_2$, and we can consider
      the polynomials $\bar{P}_v$ and $\bar{P}_w$ as the kernel
      polynomials $P_v$ and $P_w$ from \Cref{prop:polInKernel} with
      respect to the family of Hankel matrices $\{H_a^k\}_k$.
    \end{proposition}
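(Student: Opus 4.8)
The plan is to build the defining sequence $a$ --- equivalently, the truncated series $A := \sum_{i=0}^{D} a_i\,x^i \in \K[x]/(x^{D+1})$ with $D := \bar N_1 + \bar N_2$ --- so that $\bar P_v$ and $\bar P_w$ emerge as kernel polynomials through the rational-reconstruction dictionary of \Cref{sec:rational-reconstruction}, and then to read off $N_1^a$ and $N_2^a$ from \Cref{prop:polInKernel}. Write $d_1 := \bar N_1 + 1$, $d_2 := \bar N_2 + 1$, so that $d_1 + d_2 = D + 2$ and $d_1 \le d_2$.

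First I would normalize the pair. The assignment $f \mapsto \{H_a^k\}_k$ is equivariant under the action of $\mathrm{GL}_2(\K)$ on $(x,y)$: a linear change of variables carries a decomposition of $f$ to a decomposition of the transformed form, hence, through \Cref{theo:Silv1851} and \Cref{prop:defVandW}, it carries kernel polynomials to kernel polynomials and preserves $N_1,N_2$ (see \citet[Chapter~5]{heinig1984algebraic}). Since $\K$ is infinite, a generic change of variables makes none of $\bar P_v,\bar P_w$ divisible by $x$ or by $y$, and a sequence realizing the transformed pair yields, after undoing the change, a sequence realizing the original pair. So I may assume $x \nmid \bar P_v$, $y \nmid \bar P_v$, $x \nmid \bar P_w$, $y \nmid \bar P_w$. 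Putting $U_v := \bar P_v(1,x)$ and $U_w := \bar P_w(1,x)$, these normalizations give $\deg U_v = d_1$, $\deg U_w = d_2$, $U_v(0)\,U_w(0)\neq 0$, and --- because $\bar P_v,\bar P_w$ have no common zero in $\PC$ --- $\gcd(U_v,U_w)=1$.

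The heart of the construction is a B\'ezout step. The $\K$-linear map $(R_v,R_w)\mapsto R_v\,U_w - R_w\,U_v$ from $\{R_v:\deg R_v\le\bar N_1\}\times\{R_w:\deg R_w\le\bar N_2\}$ to $\{S:\deg S\le D+1\}$ is a map between spaces of equal dimension $D+2$, and it is injective (from $R_v U_w = R_w U_v$ and $\gcd(U_v,U_w)=1$ one gets $U_v\mid R_v$, forcing $R_v=0$ since $\deg R_v<d_1$, and then $R_w=0$), hence bijective; so there is a unique pair $(R_v^\ast,R_w^\ast)$, both entries nonzero, with $R_v^\ast U_w - R_w^\ast U_v = x^{D+1}$, and, since $x\nmid U_v$ and $x\nmid U_w$, this relation also forces $\gcd(U_v,R_v^\ast)=\gcd(U_w,R_w^\ast)=1$. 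I would then \emph{define} $A := (R_v^\ast\,U_v^{-1})\bmod x^{D+1}$ (legitimate since $U_v(0)\neq 0$) and let $a$ be its coefficient vector. By construction $A\,U_v\equiv R_v^\ast$, and, since $R_v^\ast U_w\equiv R_w^\ast U_v$ with $U_v$ invertible, also $A\,U_w\equiv R_w^\ast$, both modulo $x^{D+1}$; so $(U_v,R_v^\ast)$ and $(U_w,R_w^\ast)$ are rational reconstructions of $A$ modulo $x^{D+1}$. Feeding each into \Cref{thm:fromRationalToKernelPolynomial}, and using $\max(\deg U_v,\deg R_v^\ast+1)=d_1$ and $\max(\deg U_w,\deg R_w^\ast+1)=d_2$, I obtain kernel vectors whose kernel polynomials are exactly $U_v(y/x)\,x^{d_1}=\bar P_v\in\ker(H_a^{d_1})$ and $U_w(y/x)\,x^{d_2}=\bar P_w\in\ker(H_a^{d_2})$.

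Finally I would pin down $N_1^a$ and $N_2^a$. Since $\bar P_v$ is a nonzero kernel polynomial of degree $d_1$, \Cref{prop:existenceN1andN2} gives $N_1^a\le\bar N_1$ and hence $N_2^a=D-N_1^a\ge\bar N_2$. If $N_1^a\le\bar N_1-1$, then $N_2^a\ge\bar N_2+1=d_2\ge d_1$, so both $d_1$ and $d_2$ satisfy $N_1^a< \cdot \le N_2^a$, and \Cref{prop:polInKernel} forces every kernel polynomial of $H_a^{d_1}$ and of $H_a^{d_2}$ to be a multiple of the non-constant polynomial $P_v$; then $P_v$ divides $\gcd(\bar P_v,\bar P_w)$, a contradiction. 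Hence $N_1^a=\bar N_1$ and $N_2^a=\bar N_2$; and since $\bar P_v\in\ker(H_a^{N_1^a+1})$ while $\bar P_w\in\ker(H_a^{N_2^a+1})$ is not a multiple of $\bar P_v$, \Cref{prop:defVandW} together with \Cref{thm:PwNotInPv} let one take $v$ and $w$ with $P_v=\bar P_v$ and $P_w=\bar P_w$, which is what we want. I expect the delicate points to be the two rigidity checks --- justifying the $\mathrm{GL}_2$-equivariance invoked in the normalization, and excluding $N_1^a<\bar N_1$ --- rather than the construction itself, which the machinery of \Cref{sec:rational-reconstruction} renders essentially automatic.
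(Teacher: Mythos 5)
The paper does not prove this proposition: it is quoted directly from \citet[Theorem~5.2]{heinig1984algebraic}, so there is no in-paper argument to compare against and your proof is a genuinely new, self-contained reconstruction. The core of your argument — set $U_v=\bar P_v(1,x)$, $U_w=\bar P_w(1,x)$, use coprimality and a dimension count to get the unique pair $(R_v^\ast,R_w^\ast)$ with $R_v^\ast U_w - R_w^\ast U_v = x^{D+1}$, define $A := R_v^\ast U_v^{-1} \bmod x^{D+1}$, push through \Cref{thm:fromRationalToKernelPolynomial}, and then exclude $N_1^a < \bar N_1$ via \Cref{prop:polInKernel} and the coprimality of $\bar P_v,\bar P_w$ — is correct and clean, and the final identification of $v$ and $w$ via \Cref{thm:PwNotInPv} is fine.

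The one genuine gap is the normalization step. You reduce to the case $x\nmid\bar P_v$, $y\nmid\bar P_v$, $x\nmid\bar P_w$, $y\nmid\bar P_w$ by invoking $\mathrm{GL}_2$-equivariance of $a\mapsto\{\ker(H_a^k)\}_k$ and justify this "through \Cref{theo:Silv1851} and \Cref{prop:defVandW}." That justification does not go through as stated: Sylvester's theorem only characterizes the \emph{square-free} elements of $\ker(H_a^k)$, and when $P_v$ is not square-free the whole range $N_1< k\le N_2$ contains no square-free kernel polynomial at all, so Sylvester's theorem gives no information about those kernels and in particular cannot tell you how they transform or that $N_1,N_2$ are invariant. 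The equivariance is nevertheless true — $\ker(H_a^k)$ is the degree-$k$ graded piece of the apolar ideal $\mathrm{Ann}\langle f\rangle$ (the paper alludes to this when discussing \Cref{lem:rankOfDecomposition}, citing \citet{iarrobino1999power}), and the apolar ideal transforms contravariantly under $\mathrm{GL}_2$ — but you need to cite that characterization rather than Sylvester's theorem. Also note, for completeness, that the normalization cannot be replaced by only the easily-verified $x\leftrightarrow y$ swap (which reverses $a$ and transposes/reverses $H_a^k$), since that swap exchanges $x$- and $y$-divisibility rather than removing them; some nontrivial shear is genuinely needed, so the equivariance claim is load-bearing and deserves a precise reference.
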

    
    \begin{corollary}
      If there is an irreducible binary form of degree $\bar{N}_1+1$
      in $\K[x,y]$, then for every $D > 2 \, \bar{N}_1$, there is a
      binary form $f$ of degree $D$ such that its decomposition is
      unique, its rank $\bar{N}_1+1$, and the degree of the biggest
      irreducible factor of the polynomial $Q$ from
      \Cref{alg:algorithmDecomp} in the decomposition is
      $\min(\bar{N}_1+1, D-\bar{N}_1) = \bar{N}_1+1$. That is, the
      algebraic degree of the minimal decomposition over $\K$ is
      $\bar{N}_1+1$ and the bound of \Cref{cor:dg-irr-Q} is tight.
    \end{corollary}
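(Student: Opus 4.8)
The plan is to produce the required binary form $f$ via the inverse construction of \Cref{thm:weFixPvAndPwAndWeHaveABinaryForm}, and then to read off its rank, the uniqueness of its decomposition, and its algebraic degree from \Cref{lem:rankOfDecomposition}, \Cref{thm:decompIsUnique}, and \Cref{cor:dg-irr-Q}. First I would fix an irreducible binary form $\bar{P}_v \in \K[x,y]$ of degree $\bar{N}_1 + 1$, which exists by hypothesis, and set $\bar{N}_2 := D - \bar{N}_1$; since $D > 2\bar{N}_1$ this gives $\bar{N}_1 < \bar{N}_2$ and $\bar{N}_1 + \bar{N}_2 = D$. Next I would produce a companion form $\bar{P}_w$ of degree $\bar{N}_2 + 1$ that is coprime to $\bar{P}_v$: pick a linear form $\ell$ not dividing $\bar{P}_v$ (possible because $\bar{P}_v$ is irreducible, so it has no linear factor when $\bar{N}_1 \geq 1$, and when $\bar{N}_1 = 0$ one just takes $\ell$ not proportional to $\bar{P}_v$) and set $\bar{P}_w := \ell^{\bar{N}_2 + 1}$, so that $\gcd(\bar{P}_v, \bar{P}_w) = 1$ and $\bar{N}_1 + 1 \leq \bar{N}_2 + 1$.

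Then I would invoke \Cref{thm:weFixPvAndPwAndWeHaveABinaryForm} to obtain a sequence $a = (a_0, \dots, a_D)$ with $N_1^a = \bar{N}_1$ and $N_2^a = \bar{N}_2$ for which $\bar{P}_v$ and $\bar{P}_w$ play the roles of the kernel polynomials $P_v$ and $P_w$ of the Hankel family $\{H_a^k\}_k$. Setting $f(x,y) := \sum_{i=0}^D \binom{D}{i}\, a_i\, x^i y^{D-i}$ makes $\{H_a^k\}_k$ the Hankel family of $f$. Since $P_v = \bar{P}_v$ is irreducible of positive degree it is square-free, so by \Cref{lem:rankOfDecomposition} the rank of $f$ is $r := N_1^a + 1 = \bar{N}_1 + 1$, and the first branch of \Cref{alg:algorithmDecomp} (\cref{stp:choosingQ}) outputs $Q = P_v = \bar{P}_v$, an irreducible polynomial of degree $\bar{N}_1 + 1$; hence the largest degree of an irreducible factor of $Q$ is $\bar{N}_1 + 1$. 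Since $N_1^a = \bar{N}_1 < \bar{N}_2 = N_2^a$, \Cref{thm:decompIsUnique} gives that the decomposition of $f$ is unique.

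Finally I would pin down the algebraic degree and check tightness. From $D - r + 1 = D - \bar{N}_1 = \bar{N}_2 \geq \bar{N}_1 + 1$ we get $\min(r, D - r + 1) = \bar{N}_1 + 1$, so the upper bound of \Cref{cor:dg-irr-Q} equals $\bar{N}_1 + 1$ and is attained by $Q$. For the matching lower bound, note that $\ker(H_a^{N_1+1})$ is one-dimensional by \Cref{prop:existenceN1andN2} (as $N_1^a + 1 \leq N_2^a$), so $Q$ is, up to a nonzero scalar, the unique square-free kernel polynomial of degree $r$; therefore the points of the (unique) decomposition are precisely the roots of the irreducible form $\bar{P}_v$, each of degree exactly $\bar{N}_1 + 1$ over $\K$, and the remaining parameters $\lambda_j$ lie in the same extension (\Cref{theo:getLambdas}, or \Cref{thm:ifYdividesQ} if $y \mid Q$). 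Hence the algebraic degree of the minimal decomposition of $f$ over $\K$ is exactly $\bar{N}_1 + 1$, which is the claimed tightness.

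I do not anticipate a real obstacle: the statement is essentially a clean application of the earlier results. The two points that need a little attention are the explicit construction of a coprime companion $\bar{P}_w$ of the exact degree $\bar{N}_2 + 1 = D - \bar{N}_1 + 1$, and the use of the strict inequality $D > 2\bar{N}_1$ (rather than $D \geq 2\bar{N}_1$) to guarantee $N_1 < N_2$ --- this is precisely what both the uniqueness criterion \Cref{thm:decompIsUnique} and the identity $\min(r, D - r + 1) = r$ in the last paragraph rely on.
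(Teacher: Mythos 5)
Your proposal matches the paper's proof: both construct $f$ from the irreducible $\bar{P}_v$ and a coprime $\bar{P}_w$ via \Cref{thm:weFixPvAndPwAndWeHaveABinaryForm}, then read off rank, uniqueness, and the degree of $Q = \bar{P}_v$ from \Cref{lem:rankOfDecomposition}, \Cref{thm:decompIsUnique}, and the algorithm. Your explicit choice of $\bar{P}_w = \ell^{\bar{N}_2+1}$ and the extra remark on why the algebraic degree is exactly (not just at most) $\bar{N}_1+1$ are small elaborations the paper leaves implicit, but the route is the same.
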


    \begin{proof}
      Let $\bar{P}_v$ be a irreducible binary form of degree
      $\bar{N}_1 + 1$. Let $\bar{P}_w$ be any binary form of degree
      $\bar{N}_2 + 1 := D - \bar{N}_1 + 1$ relatively prime with
      $\bar{P}_v$. Consider the sequence
      $a = (a_0,\dots,a_{\bar{N}_1 + \bar{N}_2})$ of
      \Cref{thm:weFixPvAndPwAndWeHaveABinaryForm} with respect to
      $\bar{P}_v$ and $\bar{P}_w$, and the binary form
      $f(x,y) := \sum_{i = 0}^{D} {D \choose i} a_i x^{i} y^{D-i}$.
      As $\K$ is of characteristic zero, $\K$ is a perfect field, and
      so, as $\bar{P}_v$ is irreducible, it is square-free. Then, by
      \Cref{lem:rankOfDecomposition}, the rank of the decomposition is
      $N_1^a + 1 = \bar{N}_1 + 1$, and by \Cref{thm:decompIsUnique}
      the decomposition is unique. Following
      \Cref{alg:algorithmDecomp}, the polynomial $Q$ is equal to
      $\bar{P}_v$, which is an irreducible polynomial of degree
      $\bar{N}_1+1$. As $D > 2 \bar{N}_1$, then
      $\min(\bar{N}_1+1, D-\bar{N}_1) = \bar{N}_1+1$ and the bound of
      \Cref{cor:dg-irr-Q} is tight.
    \end{proof}

    \begin{lemma}
      Let $\K = \QQ$ and $p \in \N$ a prime number. Then, there is a
      binary form $f$ of degree $2 (p - 1)$ whose decomposition is not
      unique and the bound of \Cref{cor:dg-irr-Q} is tight.
    \end{lemma}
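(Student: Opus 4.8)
The plan is to realize $f$ through a prescribed Hankel kernel structure and then to show that no square-free kernel polynomial of the minimal degree can avoid an irreducible factor of degree $D/2$, which forces the algebraic degree of the decomposition up to the value $\min(r,D-r+1)$.

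First I set $D = 2(p-1)$ and look for a binary form $f$ of degree $D$ whose associated Hankel family satisfies $N_1 = N_2 = p-1$. For such an $f$ the rank is $r = N_2+1 = p$, so by \Cref{thm:decompIsUnique} the decomposition is not unique; moreover $\min(r, D-r+1) = \min(p, p-1) = p-1 = N_1$, so the tightness we must establish is precisely that the algebraic degree of the decomposition of $f$ equals $N_1 = p-1$. To obtain such an $f$, I apply \Cref{thm:weFixPvAndPwAndWeHaveABinaryForm} to the coprime binary forms $\bar P_v := x^{p}$ and $\bar P_w := y^{p}$, both of degree $p = (p-1)+1$: it produces a sequence $a = (a_0, \dots, a_{2(p-1)})$ with $N_1^{a} = N_2^{a} = p-1$ for which the kernel polynomials of \Cref{prop:polInKernel} may be taken to be $P_v = x^{p}$ and $P_w = y^{p}$. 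I set $f(x,y) := \sum_{i=0}^{D} \binom{D}{i} a_i\, x^{i} y^{D-i}$. Since $P_v = x^{p}$ is not square-free, \Cref{lem:rankOfDecomposition} gives that the rank of $f$ is $N_2+1 = p$, and \Cref{thm:decompIsUnique} that the decomposition is not unique.

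Next I enumerate the relevant kernel polynomials. Because $N_1 = N_2 = p-1$ and $N_2 < p = N_2+1 \le D$, the third case of \Cref{prop:polInKernel} shows that the kernel polynomials of $H^{N_2+1} = H^{p}$ are exactly the binary forms $c_1 x^{p} + c_2 y^{p}$, $c_1, c_2 \in \QQ$; by \Cref{theo:Silv1851} the minimal decompositions of $f$ correspond exactly to those of these that are square-free of degree $r = p$, and such a form is square-free precisely when $c_1 c_2 \neq 0$ (otherwise it is a scalar multiple of $x^{p}$ or of $y^{p}$, which has a root of multiplicity $p$). For $c_1 c_2 \neq 0$, writing the form as $c_1\,(x^{p} - \gamma\, y^{p})$ with $\gamma := -c_2/c_1$ a nonzero rational, its factorization over $\QQ$ is controlled by the classical irreducibility criterion for $t^{p} - \gamma$ with $p$ prime: $t^{p}-\gamma$ is irreducible over $\QQ$ unless $\gamma = \delta^{p}$ for some nonzero rational $\delta$, in which case $t^{p}-\gamma = (t-\delta)\,(t^{p-1}+\delta t^{p-2}+\cdots+\delta^{p-1})$ with the degree-$(p-1)$ cofactor irreducible, being $\delta^{p-1}\Phi_{p}(t/\delta)$. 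Homogenizing, $x^{p} - \gamma\, y^{p}$ is either irreducible over $\QQ$ or the product of a linear form and an irreducible form of degree $p-1$; in both cases it has an irreducible factor of degree at least $p-1 = N_1$.

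Finally I conclude: every square-free kernel polynomial of degree $r = p$, hence by \Cref{theo:Silv1851} every one realizing a minimal decomposition of $f$, has an irreducible factor of degree at least $N_1$, so each minimal decomposition involves a parameter $\alpha_j$ of algebraic degree at least $N_1$; thus the algebraic degree of the decomposition of $f$ is at least $N_1$, and combined with the upper bound $\min(r, D-r+1) = N_1$ of \Cref{cor:dg-irr-Q} it equals $N_1 = \min(r, D-r+1)$. The only non-routine ingredient is the irreducibility statement for $x^{p} - \gamma\, y^{p}$; this is also the place where the hypothesis $p$ prime is essential, since for composite exponents $x^{n} - \gamma\, y^{n}$ can split into several factors of intermediate degree all strictly below $N_1$, in which case the bound would fail to be attained. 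The remaining steps are bookkeeping with the structural results recalled in \Cref{sec:preliminaries}.
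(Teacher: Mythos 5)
Your proof is correct and takes essentially the same approach as the paper: both reduce to the family $c_1 x^{p} + c_2 y^{p}$ of square-free kernel polynomials of degree $p$ and show each one has an irreducible factor of degree at least $p-1$ via the irreducibility of $t^{p}-\gamma$. The only cosmetic differences are that the paper exhibits the explicit form $f = \binom{2(p-1)}{p-1} x^{p-1} y^{p-1}$ rather than invoking \Cref{thm:weFixPvAndPwAndWeHaveABinaryForm}, and it cites the Newton polygon criterion instead of the standard $x^{p}-a$ irreducibility test; both choices lead to the same conclusion.
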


    \begin{proof}
      Consider the binary form
      $f(x,y) := {2 (p - 1) \choose p-1} x^{p-1} \, y^{p-1}$. Using
      \Cref{alg:gettingVandW}, we obtain $P_v = - y^p$ and
      $P_w = x^p$, $N_1 = N_2 = p - 1$. The polynomial $P_v$ is not
      square-free, so we have to consider a square-free kernel
      polynomial in $\ker(H^{N_2+1})$. Moreover, the rank of the
      decomposition is $N_2 + 1 = p$. Every kernel polynomial in
      $\ker(H^{N_2+1})$ in $\QQ[x,y]$ can be written as
      $\mu_w x^p - \mu_v y^p$ for some $\mu_w, \mu_v \in \QQ$.  We are
      interested in the zeros of these polynomials
      (\cref{stp:solveLambdas} of \Cref{alg:algorithmDecomp}), thus we
      consider coprime $\mu_w, \mu_v \in \ZZ$, as the zeros do not
      change. As we want to consider square-free kernel polynomials,
      neither $\mu_w$ nor $\mu_v$ can be zero, and so
      $(1,0) \in \PP^1(\QQ)$ is not a root of any of these
      polynomials. Hence, we rewrite our polynomial as
      $\frac{1}{\mu_v y^p} (\frac{\mu_w x^p}{\mu_v y^p} - 1)$, and so
      we look for the factorization over $\QQ[z]$ of
      $\frac{\mu_w}{\mu_v} z^p - 1$, where $z = \frac{x}{y}$.  We can
      use the Newton's polygon criterion, e.g.,
      \citet[Chapter~6.3]{cassels1986local}, to show that, if
      $\sqrt[p]{\frac{\mu_w}{\mu_v}} \not\in \QQ$, then
      $\frac{\mu_w}{\mu_v} z^p - 1$ is irreducible over $\QQ[x,y]$ and
      so the degree of its biggest irreducible factor is
      $p > \min(p, 2 \, (p-1)-p+1) = p - 1$ (\Cref{cor:dg-irr-Q}).
      If this is not the case, then
      $\sqrt[p]{|\frac{\mu_w}{\mu_v}|} \in \QQ$, and so we can factor it as
      $$\left(\sqrt[p]{|\frac{\mu_w}{\mu_v}|} \cdot z \right)^p - 1 = 
      \left(\sqrt[p]{|\frac{\mu_w}{\mu_v}|} \cdot z - 1\right) \left(\sum_{i=0}^{p-1}
      \left(\sqrt[p]{|\frac{\mu_w}{\mu_v}|} \cdot z \right)^i\right).$$
      The second factor is irreducible because there is an
      automorphism in $\QQ[x]$ (given by
      $z \mapsto \sqrt[p]{|\frac{\mu_v}{\mu_w}|} z$) that transforms
      it into the $p$-th cyclotomic polynomial, which is irreducible
      as $p$ is prime.  Hence, the biggest irreducible factor of this
      polynomial has degree $p - 1 = \min(p, 2 \, (p-1)-p+1)$ and the
      bound of \Cref{cor:dg-irr-Q} is tight.
    \end{proof}

\subsection{Arithmetic complexity} 
\label{sec:arithmetic_complexity}

  \begin{lemma}[Complexity of \Cref{alg:gettingVandW}]
    \label{lem:ComplexityVandW}

    Given a binary form
    $f = \sum_{i =0}^D {D \choose i} a_i x^i y^{D-i}$ of degree $D$,
    \Cref{alg:gettingVandW} computes $P_v$ and $P_w$ in
    $O(\costMul(D)\*\log(D))$ arithmetic operations.
  \end{lemma}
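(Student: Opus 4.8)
The plan is to observe that, apart from a linear-time post-processing, \Cref{alg:gettingVandW} reduces to a single call to a fast (half-)GCD routine. Once we know the index $i$ of the first row of the Extended Euclidean Algorithm of $A := \sum_{i=0}^{D} a_i x^i$ and $x^{D+1}$ (with the convention $R_0 = x^{D+1}$, as in \Cref{thm:correctnessPvAndPw}) for which $\deg(R_i) < \tfrac{D+1}{2}$, together with the consecutive triples $(U_{i-1},V_{i-1},R_{i-1})$, $(U_i,V_i,R_i)$ and (when needed) $(U_{i+1},V_{i+1},R_{i+1})$, every remaining operation is a reversal/shift of a coefficient vector --- turning $U_j(x)$ into $U_j(\tfrac{x}{y})\,x^{m}$ --- or the comparison of a few degrees to read off $N_1$ and $N_2$. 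Each reversal/shift rearranges $O(D)$ coefficients and uses no arithmetic, so this part costs $O(D)$ operations, which is dominated by $O(\costMul(D)\log D)$.

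The main step is therefore to produce those rows efficiently. Here I would invoke the fast Euclidean algorithm: given two univariate polynomials of degree at most $n$ and a prescribed degree cutoff, it computes the two consecutive rows of their Euclidean remainder sequence straddling the cutoff --- encoded as a $2\times 2$ matrix over $\K[x]$ whose entries are the corresponding cofactors $U$ and $V$ --- in $O(\costMul(n)\log n)$ arithmetic operations; see \citet[Chapter~6]{bostan2017algorithmes} and \citet[Chapter~11]{gathen_modern_2013}. Applying this with $n = D+1$ and cutoff $\left\lceil\tfrac{D+1}{2}\right\rceil$ returns rows $i-1$ and $i$ of \Cref{alg:egcd}, in $O(\costMul(D)\log D)$ operations; row $i-1$ is well defined since $\deg(R_0) = D+1 > \tfrac{D+1}{2}$, as already noted in the proof of \Cref{thm:correctnessPvAndPw}. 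Recovering the remainders $R_{i-1}, R_i$ (hence all the degrees) then costs one more multiplication of the transition matrix by $\transpTwoVect{x^{D+1}}{A}$, i.e.\ $O(\costMul(D))$.

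Row $i+1$ is needed only in the branch $\deg(R_i) > \deg(U_i)$ of \Cref{alg:gettingVandW}. In that case I would run one explicit step of \Cref{alg:egcd}: compute the quotient $Q_i$ of the division of $R_{i-1}$ by $R_i$ and set $(U_{i+1},V_{i+1},R_{i+1}) := (U_{i-1},V_{i-1},R_{i-1}) - Q_i\,(U_i,V_i,R_i)$. By \Cref{thm:relationOfDegreesEGCD} the cofactors have degree at most $D+1$ and all remainders have degree at most $D+1$, so this involves one polynomial division and two ``multiply and subtract'' operations on polynomials of degree $O(D)$, costing $O(\costMul(D))$. Summing the three contributions gives the claimed bound $O(\costMul(D)\log D)$ for \Cref{alg:gettingVandW}.

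The one point that needs care --- and which I expect to be the only subtlety, as opposed to a genuine algorithmic difficulty --- is the normalization convention: the fast-GCD literature usually manipulates the monic remainder sequence, whereas \Cref{alg:egcd} (and hence \Cref{alg:gettingVandW}) refers to the un-normalized triples. Switching between the two only multiplies each $U_j$, and therefore each of $P_v$ and $P_w$, by a nonzero constant, which is harmless: by \Cref{prop:defVandW} and \Cref{thm:PwNotInPv} the vectors $v$ and $w$ are defined only up to a nonzero scalar (and $w$ up to adding a multiple of the U-chain of $v$), so any normalization still yields kernel polynomials as in \Cref{prop:polInKernel}. One also has to check that the cutoff passed to the half-GCD is such that the returned row is exactly the first with $\deg(R_i) < \tfrac{D+1}{2}$; with that in place, correctness of the resulting $P_v$, $P_w$ is precisely \Cref{thm:correctnessPvAndPw}.
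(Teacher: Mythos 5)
Your proof is correct and takes essentially the same approach as the paper's: both reduce the cost to a single call to the fast (Half-)GCD algorithm between $A$ and $x^{D+1}$, which produces the required rows in $O(\costMul(D)\log D)$, plus linear-time post-processing. You supply more detail than the paper (noting that the half-GCD returns two consecutive rows straddling the cutoff, that the third row costs one more $O(\costMul(D))$ step, and that normalization is harmless since $P_v$ and $P_w$ are only determined up to scalar), but the underlying argument is the same.
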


  \begin{proof}
    The complexity of the algorithm is the complexity of computing the
    rows $(i+1)$, $i$ and $(i-1)$ of the Extended Euclidean algorithm
    between $\sum_{i=0} a_i x^i$ and $x^{D+1}$, where the $i$-th row
    is the first row $i$ such that $\deg(R_i) < \frac{D+1}{2}$
    (\Cref{thm:correctnessPvAndPw}). This can be done using the
    \textit{Half-GCD algorithm}, which computes these rows in
    $O(\costMul(D)\*\log(D))$. For a detailed reference of how this
    algorithm works see \citet[Chapter~6.3]{bostan2017algorithmes} or
    \citet[Chapter~11]{gathen_modern_2013}.
  \end{proof}

  \begin{lemma}[Complexity of computing $Q$]
    \label{lem:complexQ}

    Given the kernel polynomials $P_v$ and $P_w$ from
    \Cref{prop:polInKernel}, we compute a square-free polynomial
    $Q_{\mu} := P_{\mu}\*P_v + P_w$ such that the maximal degree of
    its irreducible factors is bounded by \Cref{cor:dg-irr-Q} in
    $O(\costMul(D)\*\log(D))$.
  \end{lemma}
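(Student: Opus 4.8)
The plan is to split according to whether $P_v$ is square-free. If it is, then by \Cref{lem:rankOfDecomposition} the rank is $N_1+1$, one sets $Q:=P_v$, and the only remaining task is to \emph{certify} square-freeness; I would do this by computing $\gcd(P_v,P_v')$ with the fast (Half-)GCD algorithm \cite[Chapter~11]{gathen_modern_2013}, which costs $O(\costMul(D)\log(D))$, and note that the irreducible factors of $Q=P_v$ then have degree at most $\deg(P_v)=N_1+1=\min(r,D-r+1)$ (here $N_1<N_2$, so $N_1+1\le D-N_1$). From now on assume $P_v$ is not square-free, so that $r=N_2+1$ and I must exhibit a vector $\mu$ of length $N_2-N_1+1$ with $Q_\mu:=P_\mu\* P_v+P_w$ square-free.

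The core of the plan is to realise the construction behind \Cref{theo:boundOfAlgDegree} and \Cref{cor:dg-irr-Q} in softly-linear time, by interpolation rather than by a scan. I would sample $N_2-N_1+1$ pairwise distinct points $(\alpha_0,1),\dots,(\alpha_{N_2-N_1},1)\in\PK$ from a fixed finite subset $S\subset\K$ (possible since $\K$, a characteristic-zero subfield of $\CC$, is infinite), and then: (i) evaluate $P_v$ and $P_w$ at all of them by fast multipoint evaluation, $O(\costMul(D)\log(D))$; (ii) if some $P_v(\alpha_i,1)=0$, resample, else set $y_i:=-P_w(\alpha_i,1)/P_v(\alpha_i,1)$; (iii) fast-interpolate the unique $P_\mu$ of degree $N_2-N_1$ through the pairs $(\alpha_i,y_i)$, $O(\costMul(D)\log(D))$ --- this is exactly the $P_\mu$ of \Cref{lem:iterpolMu}; (iv) form $Q_\mu:=P_\mu\* P_v+P_w$ with one multiplication, $O(\costMul(D))$; (v) test $\gcd(Q_\mu,Q_\mu')=1$, $O(\costMul(D)\log(D))$; (vi) if $Q_\mu$ is not square-free, resample and repeat. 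The degree bound on the irreducible factors comes for free: $Q_\mu$ is a binary form of degree $N_2+1$ vanishing at the $N_2-N_1+1$ distinct chosen points, hence divisible by a square-free form of degree $N_2-N_1+1$ whose cofactor has degree $N_1$, so whenever $Q_\mu$ is square-free each irreducible factor has degree at most $\max(1,N_1)=N_1=\min(r,D-r+1)$, as required by \Cref{cor:dg-irr-Q}.

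The delicate point --- and the one I expect to be the real obstacle --- is the complexity bookkeeping, namely that $O(1)$ rounds suffice in expectation, so that the overall cost stays $O(\costMul(D)\log(D))$. For this I would invoke \Cref{thm:badValuesToInterpolateQ}: once the first $N_2-N_1$ points are fixed with $P_v$ nonzero on them, at most $D^2+3D+1$ choices of the last point make $Q_\mu$ not square-free, and only $O(D)$ further values are forbidden (roots of $P_v$, or points already chosen); taking $|S|\ge 2(D^2+4D+2)$ then makes a fresh uniform draw of the last point good with probability at least $1/2$, so at most two rounds are needed in expectation. Verifying square-freeness at the end turns this into a Las Vegas procedure. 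The point to stress is that the naive deterministic alternative --- scanning the $O(D)$ candidate values of $\mu_0$ guaranteed by \Cref{lem:generalQsqFree} and testing each for square-freeness --- would cost $\sO(D\,\costMul(D))$ and miss the softly-linear target; it is precisely the quadratic-size bound on the bad set from \Cref{thm:badValuesToInterpolateQ}, together with fast multipoint evaluation and interpolation, that lets a single successful round carry the whole construction within the claimed bound.
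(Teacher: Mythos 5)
Your core construction is exactly the paper's: sample $N_2-N_1+1$ distinct points, fast multipoint evaluation of $P_v,P_w$, fast interpolation of $P_\mu$ as in \Cref{lem:iterpolMu}, one multiplication to form $Q_\mu$, and a GCD to certify square-freeness, all in $O(\costMul(D)\log D)$ per round. Your opening paragraph is extraneous (the lemma concerns only $Q_\mu$, not the case $Q=P_v$), and your Las Vegas expectation bookkeeping via \Cref{thm:badValuesToInterpolateQ} is what the paper carries out in the remark following \Cref{thm:complexityAlgorithm} rather than inside this lemma, so it is a welcome elaboration rather than a genuinely different route.
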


  \begin{proof}
    To compute the vector $\mu$, we choose randomly $N_2-N_1+1$ linear
    forms in $\K[x,y]$ and we proceed as in \Cref{lem:iterpolMu}.  The
    complexity bound is due to multi-point evaluation and
    interpolation of a univariate polynomial
    \cite[Chapter~10]{gathen_modern_2013}.
  \end{proof}

  \begin{theorem} \label{thm:complexityAlgorithm}
    When the decomposition is unique, that is when the rank is
    $N_1 + 1$, then \Cref{alg:algorithmDecomp} computes
    deterministically a symbolic decomposition (\Cref{def:symbolicDecomp}) of a
    binary form in $O(\costMul(D)\,\log(D))$.
     
    When the decomposition is not unique, that is when the rank is
    $N_2 + 1$ and $N_1 < N_2$, then \Cref{alg:algorithmDecomp} is a
    Monte Carlo algorithm that computes a symbolic decomposition of a
    binary form in \linebreak $O(\costMul(D)\,\log(D))$.
  \end{theorem}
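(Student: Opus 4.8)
The plan is to analyze the four steps of \Cref{alg:algorithmDecomp} one by one and to check that each of them costs $O(\costMul(D)\log(D))$ arithmetic operations, so that the same bound holds for the whole algorithm. \Cref{stp:getVandW} is handled by \Cref{lem:ComplexityVandW}, which computes $P_v$, $P_w$, $N_1$ and $N_2$ in $O(\costMul(D)\log(D))$. For \Cref{stp:choosingQ}, I would first decide whether $P_v$ is square-free by computing $\gcd(P_v,P_v')$ with the fast Euclidean algorithm, which fits in $O(\costMul(D)\log(D))$ operations \citep[Chapter~11]{gathen_modern_2013}. If $P_v$ is square-free we set $Q:=P_v$ and $r:=N_1+1$ and make no random choice; by \Cref{lem:rankOfDecomposition} this is exactly the case $\rk(f)=N_1+1$, and the algorithm is deterministic. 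Otherwise $\rk(f)=N_2+1$ and, by \Cref{lem:complexQ}, we compute a square-free $Q=P_\mu\*P_v+P_w$ of degree $r=N_2+1$ in $O(\costMul(D)\log(D))$; this is the only step that makes random choices.

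For \Cref{stp:solveLambdas} I would invoke \Cref{theo:getLambdas}. When $y\nmid Q$, the coefficients $\lambda_j$ are described symbolically by $\lambda_j=\frac{T}{Q'}(\alpha_j)$ over the roots $\alpha_j$ of $Q(x)$, where $R(x):=\sum_{i=1}^{r}a_{r-i}x^{i-1}$ and $Q'$ are free to write down and $T:=Quo\big(Q\*R,x^{r}\big)$ costs one polynomial multiplication of degree at most $D$, i.e.\ $O(\costMul(D))$, followed by a coefficient shift. When $y\mid Q$ I would use \Cref{thm:ifYdividesQ}: replace $Q$ by $Q/y$ (still square-free, so $y^{2}\nmid Q$), compute $\lambda_1,\dots,\lambda_{r-1}$ as above, and recover $\lambda_r=a_D-\sum_{i=0}^{r-2}u_i a_{D-r+i+1}$ in $O(D)$ operations. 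Thus \Cref{stp:solveLambdas} costs $O(\costMul(D))$, and \Cref{stp:return} only outputs the symbolic decomposition $(Q,T,Q')$ of \Cref{def:symbolicDecomp}, which has size $O(D)$. Summing the four contributions gives the claimed bound $O(\costMul(D)\log(D))$, deterministically when $P_v$ is square-free and with random choices only when it is not.

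It remains to upgrade the randomized branch to a Monte Carlo statement. There the only random choices are the $N_2-N_1+1$ points of $\PK$ used in \Cref{lem:iterpolMu}; at most $N_1+1\le D$ of them could be roots of $P_v$ and are simply discarded. Fixing the first $N_2-N_1$ admissible points as a set $S$, \Cref{thm:badValuesToInterpolateQ} bounds by $D^{2}+3D+1$ the number of choices of the last point for which the resulting $Q_\mu$ fails to be square-free. Hence, drawing the points from any fixed subset $\Lambda\subset\PK$ with $|\Lambda|\ge 2(D^{2}+3D+1)+D$ yields a square-free $Q$ with probability at least $\tfrac12$, and the failure probability can be pushed below any prescribed $2^{-t}$ by enlarging $\Lambda$ or by repetition; so \Cref{alg:algorithmDecomp} is a Monte Carlo algorithm in this case. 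An extra $\gcd(Q,Q')$ check certifies square-freeness within the same complexity and hence also yields a Las Vegas variant.

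The main obstacle is \Cref{stp:solveLambdas}: solving the transposed Vandermonde system \Cref{eq:gettingLambdas} by generic linear algebra, or even isolating the roots $\alpha_j$ explicitly, would break the softly-linear bound; the point is that \Cref{prop:solvingTransposeVandermonde} and \Cref{theo:getLambdas} let us keep the roots of $Q$ implicit and reduce the whole solve to a single polynomial multiplication, so that the output is the symbolic decomposition of \Cref{def:symbolicDecomp} rather than numerical terms. The second delicate point is the Monte Carlo analysis: turning the worst-case count $D^{2}+3D+1$ of bad interpolation nodes from \Cref{thm:badValuesToInterpolateQ} into a clean failure probability, while also remembering to exclude the $\le D$ roots of $P_v$ from the admissible nodes.
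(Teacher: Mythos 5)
Your proposal is correct and follows essentially the same route as the paper's proof: a step-by-step cost analysis of \Cref{alg:algorithmDecomp} via \Cref{lem:ComplexityVandW}, \Cref{lem:complexQ}, and \Cref{theo:getLambdas}, with randomness entering only in the non-unique case through the choice of interpolation points. Your extra Monte Carlo quantification via \Cref{thm:badValuesToInterpolateQ} and the Las Vegas remark also match what the paper states just after the theorem.
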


  \begin{proof}
    The first step of the algorithm, in both cases, is to compute the kernel
    polynomials $P_v$ and $P_w$ of \Cref{prop:polInKernel} using
    \Cref{alg:gettingVandW}. By \Cref{lem:ComplexityVandW},
    we compute them deterministically in $O(\costMul(D)\*\log(D))$.

    If $P_v$ is square-free, which means that the decomposition is
    unique, then $Q= P_v$.  Otherwise, in \cref{stp:choosingQ}, we
    need to choose some random values to construct the square-free
    polynomial $Q$ (from the kernel polynomials $P_v$ and $P_w$) in
    $O(\costMul(D)\*\log(D))$ as in \Cref{lem:complexQ}.  This is the
    step that makes the algorithm a Monte Carlo one, as we might fail
    to produce a square-free polynomial $Q$.
    
    In both cases, at \cref{stp:solveLambdas}
    we compute the rational function that describes the solution of
    the system in \Cref{eq:gettingLambdas} in $O(\costMul(D)\*\log(D))$
    (\Cref{theo:getLambdas}).
    At \cref{stp:return} of the algorithm, we return the decomposition.
  \end{proof}
  
  We can bound the probability of error of \Cref{alg:algorithmDecomp}
  using \Cref{thm:badValuesToInterpolateQ}, which bounds the number of
  bad values that lead us to a non square-free polynomial $Q$.
  Moreover, we can introduce a Las Vegas version of
  \Cref{alg:algorithmDecomp} by checking if the values that we choose
  to construct a polynomial $Q$ result indeed a square-free
  polynomial. We can do this check in $O(\costMul(D)\*\log(D))$, by
  computing the GCD between the $Q$ and its derivatives.
  
  \begin{remark}
    If we want to output an approximation of the terms of the minimal
    decomposition, with a relative error of $2^{-\epsilon}$, we can
    use Pan's algorithm \citep{PanOpt02}
    \cite[Theorem~15.1.1]{numMethodsRoots}
    to approximate the roots of
    $Q$. In this case the complexity becomes
    $O\big(D \log^2(D)\big(\log^2(D) + \log(\epsilon)\big)\big)$.
  \end{remark}

  \subsection{Bit complexity}
  \label{sec:bit}
  Let $f \in \ZZ[x, y]$ be a binary form  as in \Cref{eq:bf},
  of degree  $D$ and let $\tau$ be the maximum bitsize of the coefficients $a_i$.
  We study the bit complexity of computing suitable approximations of 
  the $\alpha_j$'s, $\beta_{j}$'s, and $\lambda_j$'s of \Cref{eq:bfDecomp},
  say $\widetilde{\alpha}_j$,  $\widetilde{\beta}_j$ and $\widetilde{\lambda}_j$ respectively,
  that induce an approximate decomposition
  correct up to $\ell$ bits. 
  That is a decomposition such that 
  $\|f - \sum_{j} \widetilde{\lambda}_j (\widetilde{\alpha}_j x + \widetilde{\beta}_j y)^D \|_{\infty} \leq 2^{-\ell}$.
  We need to estimate an upper bound on the number of bits
  that are necessary to perform all the operations of the algorithm.

  


  The first step of the algorithm is to compute $P_v$ and $P_w$, via
  the computation of three rows of the Extended GCD of two
  polynomials of degree $D$ and $D+1$ with coefficients of maximal
  sized $\tau$. This can be achieved in $\sOB(D^2 \tau)$ bit
  operations \citep[Corollary~11.14.B]{gathen_modern_2013}, 
  and the maximal bit size of $P_v$ and
  $P_w$ is $\sO(D \tau)$.  We check if $P_v$ is a square-free
  polynomial in $\sO(D^2 \tau)$, via the computation of the GCD of
  $P_v(x,1)$ and its derivative
  \citep[Corollary~11.14.A]{gathen_modern_2013}, and by checking if $y^2$
  divides it.

  If $P_v$ is square-free polynomial, then $Q = P_v$.  If $P_v$ is not
  square-free, then we can compute $Q$ by assigning values to the
  coefficients of $P_{\mu}$.
  We assume that $y^2$ does not divide $P_w$, if this does not hold,
  we replace $P_w$ by the kernel polynomial $x^{N_2-N_1} P_v + P_w$,
  which is coprime to $P_v$, and so not divisible by $y$, as $P_v$ and
  the original $P_w$ are coprime (\Cref{prop:PvPwCoprimes}). 
  We set all the coefficients of $P_{\mu}$ to zero, except the
  constant term.  Then $Q = \mu_0 y^{N_2-N_1} P_v + P_w$. Now we have
  to choose $\mu_0$ so that $Q$ is square-free. As $y^{N_2-N_1} P_v$
  and $P_w$ are coprime, there are at most $2D + 2$ forbidden values
  for $\mu_0$ such that $Q$ is not square-free
  (\Cref{lem:generalQsqFree}), thus at least one of the first $2D+3$
  integer fits our requirements. We test them all.  Each test
  corresponds to a GCD computation, that costs $\sOB(D^2 \tau)$ and so
  the overall cost is $\sOB(D^3\tau)$.

  Let $\sigma = \sO(D \tau)$ be the maximal bit
  size of $Q$. By \Cref{thm:ifYdividesQ}, we can assume that $y$ does
  not divide $Q$, consider $y = 1$ and treat $Q$ as an univariate
  polynomial.

  Let $\{\alpha_j\}_j$ be the roots of $Q$. 
  We isolate them in $\sOB(D^2 \sigma)$ \citep{PanOpt02}.
  For the (aggregate) separation bound of the roots it holds that
  $-\lg\prod_j\Delta(\alpha_j) = O(D \sigma + D \lg(D))$.
  We approximate all the roots up to accuracy $2^{-\ell_1}$  in
  $\sOB(D^2 \sigma + D \ell_1)$  \citep{pan2017accelerated}. That is, we
  compute absolute approximations of $\alpha_j$,
  say $\widetilde{\alpha}_j$, such that
  $|\alpha_j - \widetilde{\alpha}_j| \leq 2^{-\ell_1}$.
  
  The next step consists in solving the (transposed) Vandermonde
  system, $V(\widetilde{\alpha})^{\transpose} \lambda = a$, where $V(\widetilde{\alpha})$ is
  the Vandermonde matrix we construct with the roots of $Q$, $\lambda$
  is a vector contains the coefficients of decomposition, and $a$ is a
  vector containing the coefficients of $F$, see also \Cref{eq:gettingLambdas}.
  We know the entries of $V(\widetilde{\alpha})$ up to $\ell_1$ bits. Therefore, we
  can compute the elements of the solution vector $\lambda$ with an
  absolute approximation correct up to
  $\ell_2 = \ell_1 - O(D \lg(D) \sigma -\lg\prod_j\Delta(\alpha_j)) = \ell_1 - O(D \lg(D) \sigma)$
  bits \citep[Theorem~29]{pan2017nearly}. 
  That is, we compute $\widetilde{\lambda}_j$'s
  such that  $|\lambda_j - \widetilde{\lambda}_j| \leq 2^{-\ell_2}$.
  At this point we have obtained the approximate decomposition
  $$\widetilde{f}(x,y) := \sum_{j=1}^{r} \widetilde{\lambda}_j (\widetilde{\alpha}_j x +
  y)^D.$$
 %
    
    To estimate the
    accuracy of $\widetilde{f}$ we need to expand the approximate
    decomposition and consider it as a polynomial in $x$. We do not
    actually perform this operation; we only estimate the accuracy as
    if we were. First, we expand each
    $(\widetilde{\alpha}_j x +  y)^D$.
    This results polynomials with coefficients correct up   $\ell_3 =
    \ell_2 - O(D \sigma) = \ell_1 - O(D \lg(D) \sigma) - O(D \sigma) =
    \ell_1 - O(D \lg(D) \sigma)$ bits \citep[Lemma~19]{pan2017nearly}.
    Next, we multiply each such polynomial with
    $\widetilde{\lambda}_i$, and we collect the coefficients for the
    various powers of $x$. Each coefficient is the sum of $r \leq D$
    terms. The last two operations do not affect, asymptotically, the
    precision.
    Therefore, the polynomial 
    $\widetilde{f} =  
    \sum_{j=1}^{r} \widetilde{\lambda}_j (\widetilde{\alpha}_j x +  (1 - \widetilde{\alpha}_j \, t) y)^D$
    that 
    corresponds to the approximate decomposition  has an absolute
    approximation such that  $\| f - \widetilde{f} \| 
    \leq 2^{-\ell_1 + O(D\lg(D) \sigma)}$.
    To achieve an accuracy of $2^{-\ell}$ in the decomposition, such that
    $\|f - \widetilde{f} \| \leq 2^{-\ell}$, we should choose $\ell_1 =
    \ell + O(D\lg(D) \sigma)$. Thus, all the computations should be
    performed with precision of $\ell + O(D\lg(D) \sigma)$ bits.
    The bit complexity of computing the decomposition of $f$ up to
    $\ell$ bits
    is dominated by the solving and refining process and
    it is $\sOB(D \ell + D^2 \sigma)$.
    If we substitute the value for $\sigma$, then we arrive at the complexity bound of 
    $\sOB(D \ell + D^4  + D^3 \tau)$.

    \begin{theorem}
      \label{thm:bf-decomp-bit-compl}
      Let $f \in \ZZ[x, y]$ be a homogeneous polynomial of degree $D$
      and maximum coefficient bitsize $\tau$.  We compute an
      approximate decomposition of accuracy $2^{-\ell}$ in
      $\sOB(D \ell + D^4 + D^3 \tau)$ bit operations.
    \end{theorem}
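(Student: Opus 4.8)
The plan is to collect the stepwise estimates developed in the discussion above into one bound, keeping separate track of (i) the cost of the symbolic stage, that is the computation of a square-free kernel polynomial $Q$; (ii) the cost of the numerical stage, that is isolating and refining the roots of $Q$ and solving the transposed Vandermonde system; and (iii) the loss of precision along the way, which dictates the working precision $\ell_1$ with which all numerical operations must be carried out.

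For the symbolic stage I would first compute, with the Half-GCD, the three relevant rows of the extended Euclidean algorithm of $\sum_i a_i x^i$ and $x^{D+1}$; this produces $P_v$ and $P_w$ in $\sOB(D^2\tau)$ bit operations, with coefficient bitsize $\sO(D\tau)$ \citep[Corollary~11.14.B]{gathen_modern_2013}. Testing whether $P_v$ is square-free is a single $\gcd$ computation, $\sOB(D^2\tau)$ \citep[Corollary~11.14.A]{gathen_modern_2013}. If it is, then $Q=P_v$; otherwise I would restrict to the one-parameter family $Q=\mu_0\,y^{N_2-N_1}P_v+P_w$ (replacing $P_w$ by $x^{N_2-N_1}P_v+P_w$ first, if needed, so that $y\nmid P_w$) and search for an integer $\mu_0$ making $Q$ square-free: by \Cref{lem:generalQsqFree} at most $2D+2$ values are forbidden, so one of $\{1,\dots,2D+3\}$ works, and each candidate is tested by a $\gcd$ computation costing $\sOB(D^2\tau)$, hence $\sOB(D^3\tau)$ in total. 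In every case $Q$ is a square-free kernel polynomial of degree $r$ and bitsize $\sigma=\sO(D\tau)$, and by \Cref{thm:ifYdividesQ} we may assume $y\nmid Q$ and dehomogenize.

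For the numerical stage I would isolate the roots $\alpha_1,\dots,\alpha_r$ of $Q$ in $\sOB(D^2\sigma)$ \citep{PanOpt02}, recording the aggregate separation bound $-\lg\prod_j\Delta(\alpha_j)=O(D\sigma+D\lg(D))$, and refine all of them to absolute accuracy $2^{-\ell_1}$ in $\sOB(D^2\sigma+D\ell_1)$ \citep{pan2017accelerated}. Solving the transposed Vandermonde system $V(\widetilde\alpha)^{\transpose}\lambda=a$ of \Cref{eq:gettingLambdas}, as in \Cref{theo:getLambdas}, then yields $\widetilde\lambda_j$ correct to $\ell_2=\ell_1-O(D\lg(D)\sigma)$ bits, by the conditioning estimate \citep[Theorem~29]{pan2017nearly} together with the separation bound; expanding each $(\widetilde\alpha_j x+y)^D$ and recollecting the coefficients of $\widetilde f:=\sum_j\widetilde\lambda_j(\widetilde\alpha_j x+y)^D$ loses a further $O(D\sigma)$ bits \citep[Lemma~19]{pan2017nearly} and nothing more asymptotically, so $\|f-\widetilde f\|_\infty\le 2^{-\ell_1+O(D\lg(D)\sigma)}$. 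Thus it suffices to choose $\ell_1=\ell+O(D\lg(D)\sigma)$, namely to carry $\ell+O(D\lg(D)\sigma)$ bits throughout; then the overall cost is dominated by the isolation, refinement, and solving steps and equals $\sOB(D\ell+D^2\sigma)$. Substituting $\sigma=\sO(D\tau)$ gives the announced bound $\sOB(D\ell+D^4+D^3\tau)$.

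I expect the delicate part to be the precision-propagation analysis of the third paragraph, rather than the running-time bookkeeping. The transposed Vandermonde matrix built from the roots of $Q$ is ill-conditioned, so quantifying how many of the $\ell_1$ correct bits of the $\widetilde\alpha_j$ survive into the $\widetilde\lambda_j$, and then into the expanded polynomial $\widetilde f$, is exactly where the aggregate root-separation bound and the perturbation estimates of \citet{pan2017nearly,pan2017accelerated} must be applied with care; the remaining ingredients are routine uses of fast $\gcd$, root isolation and refinement, and structured linear algebra.
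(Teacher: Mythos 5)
Your proof matches the paper's argument step for step: same symbolic stage via Half-GCD with the same $\sOB(D^3\tau)$ cost for finding a square-free $Q$ of bitsize $\sigma=\sO(D\tau)$, same root isolation/refinement bounds, same conditioning analysis of the transposed Vandermonde system via the aggregate root-separation estimate and \citep[Theorem~29 and Lemma~19]{pan2017nearly}, and the same choice of working precision $\ell_1=\ell+O(D\lg(D)\sigma)$ yielding $\sOB(D\ell+D^2\sigma)=\sOB(D\ell+D^4+D^3\tau)$. This is essentially the paper's own proof.
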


\section*{Acknowledgments}

The authors thank George Labahn and Vincent Neiger for pointing out
important references to derandomize the computation of $P_v$ and
$P_w$.
The authors thank the anonymous reviewers for their helpful comments.
Mat\'ias Bender thanks Joos Heintz for supervising his Master's
thesis.
The authors are partially supported by ANR JCJC GALOP
(ANR-17-CE40-0009), the PGMO grant ALMA and the PHC GRAPE.

\bibliographystyle{elsarticle-harv}
\bibliography{bibliography}

\end{document}